\newtheorem{defn}{Definition}[section]
\newtheorem{prop}{Proposition}[section]
\newtheorem{thrm}{Theorem}[section]
\theoremstyle{definition}
\newtheorem{remark}{\it Remark}[section]
\newtheorem{crl}{Corollary}[section]
\numberwithin{equation}{section}
\newcommand{\el}{\nonumber\\}
\def\sl{\mathfrak{sl}}
\def\gl{\mathfrak{gl}}
\def\g{\mathfrak{g}}
\def\h{\mathfrak{h}}
\def\m{\mathfrak{m}}
\def\bh{{\bf h}}
\def\bn{{\bf n}}
\def\ad{{\rm ad}}
\def\adr{{\rm ad}_r}
\def\cT{\mathcal{T}}
\def\cB{\mathcal{B}}
\def\cV{\mathcal{V}}
\def\cW{\mathcal{W}}
\def\cM{\mathcal{M}}
\def\cR{\mathcal{R}}
\def\cA{\mathcal{A}}
\def\cL{\mathcal{L}}
\def\cU{\mathcal{U}}
\def\tc{\otimes}
\def\tp{\oplus}
\def\Yg{\mathcal{Y}(\g)}
\def\Ygh{\mathcal{Y}(\g,\theta(\g))}
\def\Ygg{\mathcal{Y}(\g,\g)}
\def\glone{\cU_q(\widehat{\mathfrak{gl}}(1|1))}
\def\sltwo{\cU_q(\widehat{\mathfrak{sl}}(2))}
\def\Uq{\cU_{q}({\mathfrak{g}})}
\def\Ua{\cU_{q}(\widehat{\mathfrak{g}})}
\def\yone{\mathcal{Y}({\mathfrak{gl}}(1|1))}
\def\ytwo{\mathcal{Y}({\mathfrak{sl}}(2))}
\def\hp{h^+_2}
\def\Y{\mathcal{Y}}
\def\Vl{\cV_l}
\def\tth{\tilde\theta}
\def\bth{\bar\theta}
\def\sk{\quad\!}
\newcommand{\wt}[1]{\widetilde{#1}}
\newcommand{\wh}[1]{\widehat{#1}}
\newcommand{\dwh}[1]{\widehat{\widehat{#1}}}
\newcommand{\dwt}[1]{\widetilde{\widetilde{#1}}}
\newcommand{\bb}[1]{\mathbb{#1}}
\begin{document}

\title{Reflection algebras for SL(2) and GL(1$|$1)}

\author{Vidas Regelskis}

\email{vr509@york.ac.uk}

\address{Department of Mathematics University of York, York, YO10 5DD, U.K. and
Institute of Theoretical Physics and Astronomy of Vilnius University, Go\v{s}tauto 12, Vilnius 01108, Lithuania}


\begin{abstract}
We present a generalization the G.\ Letzter's theory of quantum symmetric pairs of semisimple Lie algebras for the case of quantum affine algebras. We then study solutions of the reflection equation for the quantum affine algebras $\sltwo$ and $\glone$ and their Yangian limit for singlet (diagonal) and vector (non-diagonal) boundary conditions. We construct the corresponding quantum affine coideal subalgebras that are based on the quantum symmetric pairs, and the (generalized) twisted Yangians.
\end{abstract}

\maketitle



\section*{Introduction}

Quantum affine algebras and Yangians are the simplest examples of the infinite-dimensional quantum groups and play a central role in quantum integrable systems \cite{FST,KRS}. These algebras were introduced in \cite{Dr,Jimbo} and since then had a significant impact on the development of the quantum inverse scattering method and related quantum integrable models.

The quantum affine algebra $\cU_q(\wh{\g})$ and the Yangian $\Yg$ are deformations of the universal enveloping algebras $\cU(\wh{\g})$ and $\cU(\g[u])$ respectively, where $\wh{\g}$ is the affine Kac--Moody algebra, a central extension of the Lie algebra of maps $\mathbb{C}^\times\to\g$ of a finite Lie algebra $\g$, and $\g[u]$ is a deformation of the Lie algebra of maps $\mathbb{C}\to\g$. In such a way, Yangians may be viewed as a specific degenerate limit of quantum affine algebras \cite{Dr87}. We refer to \cite{CP} for complete details on quantum groups.

In this paper we will concentrate on the finite dimensional representations of quantum groups, the so-called \textit{evaluation} representations. These are constructed via the epimorphisms $\text{ev}_a : \cU_q(\wh{\g})\to\cU_q(\g)$ and $\text{ev}_a : \Yg\to\cU(\g)$ called the evaluation homomorphisms, which evaluate the $\g$-valued polynomials at a point $a\in\mathbb{C}$. Such representations have wide applications in both mathematics and physics. However they can be constructed for some Lie algebras only.

We will consider integrable models with open boundary conditions, and concentrate on the algebras that define solutions of the reflection equation \cite{Sk}. These algebras are one-sided coideal subalgebras and are conveniently called reflection algebras. Such algebras have been extensively studied in e.g.\ \cite{Cher,Ols,MNO,MR,DMS,MacKay:2002at,Mu,Doikou:2004hy} and more recently in e.g.\ \cite{IT,BB,BC,GM}. For field theoretical applications of the coideal subalgebras and corresponding reflection matrices we refer to \cite{GZ,MacKay:2004tc}.

For this purpose we have selected two simple Lie (super-) algebras, $\sl(2)$ and $\gl(1|1)$, and have studied reflection algebras for the corresponding quantum groups: quantum affine (super-) algebras $\sltwo$ and $\glone$, and Yangians $\ytwo$ and $\yone$. We refer to \cite{CP} and \cite{Mo,JinfangCai,Zhang} for details on these algebras.
For each quantum group we consider \textit{singlet} and \textit{vector} boundary conditions. The singlet boundary forms singlet (trivial) representation of the boundary (reflection) algebra, thus there are no boundary degrees of freedom in the associated field theory. The vector boundary forms an evaluation representation of the boundary algebra and has boundary degrees of freedom. In field theoretical models the evaluation parameter is usually associated with the rapidity of a state. Boundary states have zero rapidity and thus the boundary evaluation parameter is conveniently associated with some conserved quantity of the boundary, e.g.\ the energy \cite{MReg}.
Singlet boundaries have been heavily studied and the corresponding boundary algebras and solutions of the reflection equation for most of the semisimple Lie (super-) algebras are well known. However vector boundaries have not been studied as much as the singlet ones. The corresponding reflection matrices are usually constructed via the boundary bootstrap procedure by fusing bulk R-matrices with an appropriate scalar reflection matrix (see e.g.\ \cite{Sk,Mezincescu:1991ke}), thus we hope the boundary algebras we have constructed in this paper will serve not only as neat examples but will contribute to the exploration of such boundaries as well.

We show that for the quantum affine algebras, the reflection algebras for both boundary conditions are quantum affine coideal subalgebras that are a generalization of the quantum symmetric pairs of simple Lie algebras considered in \cite{Le} and \cite{Le1}, where quantum symmetric pairs and the associated coideal subalgebras for all simple Lie algebras have been classified. Such coideal subalgebras were also considered in \cite{Kolb,KS,HK,LMR}. Similar quantum affine coideal subalgebras for algebra of type $A^{(1)}_1$ was considered in \cite{DN,IT}, for $D_n^{(1)}$ in \cite{DeliusGeorge}, for the double affine Hecke algebras of type $C^\vee C_n$ in \cite{JM}, for $\mathfrak{o}(n)$ and $\mathfrak{sp}(2n)$ in \cite{MRS}, and for the Sine-Gordon and affine Toda field theories in \cite{Delius:1998rf,Delius:2001qh,Baseilhac:2002kf,BB}.

For the Yangian case we will consider two types of the (generalized) twisted Yangians.
For the singlet boundary we will consider the twisted Yangian introduced in \cite{DMS}, while for a vector boundary we will employ the twisted Yangian introduced in \cite{MReg}. In order to distinguish these two algebras we name them the twisted Yangian of {\it type I} and of {\it type II} respectively.   
We note that reflection algebras and Yangians for $\gl(n)$ in various contexts have been extensively studied in \cite{MR,Mu,Doikou:2004hy}, for superalgebras $\sl(m|n)$ and $\gl(m|n)$ in \cite{BR,Arnaudon:2004sd,Doikou:2009dp,Doikou:2009kd}.
See \cite{MacKay:2011zs} for an `achiral' extension of such Yangians.

We also give some arguments that the quantum affine coideal subalgebras in the rational $q\to1$ limit specialize to the twisted Yangians. This is an important but quite technical question that is worth of exploration on its own, thus we will be rather concise and heuristic concerning this claim in the present paper. A thorough exploration of the Yangian limit of the quantum affine enveloping algebras can be found in \cite{Tolstoy,GTL,GM1}. 

This paper is organized as follows. In section 1 we give the necessary preliminaries. We recap the construction of quantum symmetric pairs and coideal subalgebras following closely to \cite{Le}. We then give the definition of the quantum affine coideal subalgebra, and also the definitions of the twisted Yangians introduced in \cite{DMS} and \cite{MReg}. In section 2 we construct reflection algebras for the quantum affine algebra $\sltwo$. In section 3 we construct the twisted Yangians for $\ytwo$. In sections 4 and 5 we repeat the same derivations for quantum groups $\glone$ and $\yone$ respectively. Appendix A contains a heuristic Yangian limit $\sltwo\to\ytwo$.\\

{\bf \noindent Acknowledgements.}
The author thanks Gustav Delius and Stefan Kolb for many valuable discussions, and Niall MacKay for his comments and reading of the manuscript. The author also thanks Marius de Leeuw, Takuya Matsumoto, Alessandro Torrielli and the UK EPSRC for funding under grant EP/H000054/1.


\section{Preliminaries}


\subsection{Quasitriangular Hopf algebras and the Yang-Baxter equation}
Let $\cA$ be a Hopf algebra 
over $\bb{C}$ equipped with multiplication $\mu : \cA \tc \cA \to \cA$, unit $\iota: \bb{C} \to \cA$, comultiplication $\Delta : A\to A\tc A$, counit $\epsilon : \cA \to \bb{C}$ and antipode $S : \cA \to \cA$. 
Let $\sigma : \cA \tc \cA \to \cA \tc A$ be the $\bb{C}$-linear map such that $\sigma(a_1 \tc a_2) = a_2 \tc a_1$ for any $a_1,\,a_2 \in \cA$. Then $(\cA,\mu^{op}, \iota, \Delta^{op}, \epsilon, S^{-1})$, where $\mu^{op} = \mu \circ \sigma$ and $\Delta^{op} = \sigma \circ \Delta$, is called the \textit{opposite} Hopf algebra of $\cA$ and denoted $\cA^{op}$. 

Let $\cA$ be a quasitriangular Hopf algebra. Then there exists an invertible element $\cR \in \cA \tc \cA$ called the \textit{universal} $R$-matrix 
such that
\begin{align} \label{UniInt}
\Delta^{op}(a) = \cR \, \Delta(a) \cR^{-1} \quad\text{for any} \quad a\in \cA\,, 
\end{align}
which satisfies the universal Yang-Baxter equation
\begin{align} \label{UniYBE}
\cR_{12} \cR_{13} \cR_{23} = \cR_{23} \cR_{13} \cR_{12} \,, 
\end{align}
where $\cR_{12} \in \cA \tc \cA \tc 1$, $\cR_{13} \in \cA \tc 1 \tc \cA $ and $\cR_{23} \in 1 \tc \cA \tc \cA$.

Let $\cA$ be a quantum affine universal enveloping algebra $\Ua$. Let $\cV$ be a finite dimensional vector space and $T_z:\cA\to End(\cV)$ be a finite dimensional representation of $\cA$, where $z$ denotes the spectral parameter of the representation. Then $(T_z\tc T_w):\cR\to R(z/w) \in End(\cV\tc\cV)$ maps the universal $R$-matrix to a matrix called the trigonometrical $R$-matrix. In such a way \eqref{UniInt} becomes the intertwining equation,
\begin{align} \label{Int}
(T_z \tc T_w)[\Delta^{op}(a)] \, R(z/w) = R(z/w) \, (T_z \tc T_w)[\Delta(a)]\,, 
\end{align}
and the Yang-Baxter equation \eqref{UniYBE} on the space $\cV\tc\cV\tc\cV$ becomes
\begin{align} \label{YBE}
R_{12}(z/w) R_{13}(z) R_{23}(w) = R_{23}(w) R_{13}(z) R_{12}(z/w) \,.
\end{align}
In the case of an irreducible representation $T_z \otimes T_w$ the intertwining equation \eqref{Int} defines the $R$-matrix uniquely up to an overall scalar factor. Furthermore, this $R$-matrix satisfies \eqref{YBE} automatically.


\vspace{0.1in}

\subsection{Coideal subalgebras and the reflection equation} 
Consider the reflection equation \cite{Sk}
\begin{align} \label{RE}
R_{12}(z/w) K_{13}(z) R_{12}(zw) K_{23}(w) = K_{23}(w) R_{12}(zw) K_{13}(z) R_{12}(z/w)
\end{align}
defined on the tensor space $\cV\tc\cV\tc\cW$, where $\cW$ is a boundary vector space. Here $K_{13}(z)$ and $K_{23}(w)$ are reflection matrices such that 
$1\tc K_{23}(w)\in \cV \tc End(\cV\tc\cW)$ and a similar relation holds for $K_{13}(z)$. 

Let $\cB\subset\cA$ be a left coideal subalgebra,
\begin{align}
\Delta(b) \in \cA\tc\cB \quad\text{for all} \quad b\in\cB \,.   
\end{align}
Let $\bar{T}_s:\cB\to End(\cW)$ be a finite dimensional representation of $\cB$, called the {\it boundary} representation; here $s$ denotes the boundary spectral parameter. 

\begin{defn}
A coideal subalgebra $\cB$ is called a quantum affine reflection algebra if the intertwining equation
\begin{align} \label{IntKV}
(T_{1/z}\tc\bar{T}_s)[\Delta(b)] \, K(z)  = K(z) \, (T_z\tc\bar{T}_s)[\Delta(b)] \quad\text{for all} \quad b \in \cB\,, 
\end{align}
for some representations $T_z$ and $\bar{T}_s$ defines a $K$-matrix $K(z)\in End(\cV\tc\cW)$ satisfying the reflection equation \eqref{RE}.
\end{defn}

Let the boundary vector space be one-dimensional, $\cW=\bb{C}$. Then $\bar{T}_s =\epsilon$ and $K(z)\in End(\cV\tc\bb{C})$. In this case the intertwining equation \eqref{IntKV} becomes
\begin{align} \label{IntKS}
(T_{1/z}\tc\epsilon)[\Delta(b)] \, K(z)  = K(z) \, (T_z\tc\epsilon)[\Delta(b)] \quad\text{for all} \quad b \in \cB\,.
\end{align}

For an irreducible representation $T_z$ (resp.\ $T_z\tc\bar{T}_s$) of $\cB$, the intertwining equation \eqref{IntKS} (resp. \eqref{IntKV}) defines the $K$-matrix uniquely up to an overall scalar factor. Note that the boundary representation $\bar{T}_s$ may be different from $T_z$; however, in this paper we will consider the $\bar{T}_s \cong T_z$ case only.

\begin{defn}
Let $\bar{T}_s \cong T_z$ be a non-trivial boundary representation. Then we call \eqref{IntKV} the intertwining equation for a vector boundary. We call \eqref{IntKS} the intertwining equation for the singlet boundary.
\end{defn}

For a given algebra $\cA$ and $R$-matrix $R(z)$ there can be a family of coideal subalgebras that define inequivalent solutions of the reflection equation. The most general solutions of the reflection equation are known for some Lie algebras only, and even less is known about the universal solution (see e.g.\ \cite{DKM}). In the following subsections we will consider coideal subalgebras compatible with the reflection equation for the quantum affine enveloping algebras and Yangians.


\vspace{0.1in}

\subsection{Quantum symmetric pairs and coideal subalgebras} 
We will begin by introducing the necessary notation and then we will give the definition of the coideal subalgebras of the universal enveloping algebras. We will be adhering closely to \cite{Le1}.


Let $\g$ be a semisimple Lie algebra of rank $n$. Let $\Phi$ denote the root space of $\g$, and $\Phi^{+}$ be the set of positive roots. Let $\pi=\{\alpha_i\}_{i\in\bf{I}}$ be a basis of the simple positive roots in $\Phi^+$. Here ${\bf I}=\{1,\ldots,n\}$ denotes the set of Dynkin nodes of $\g$. We will use $\lambda$ to denote any root in $\Phi$. Let $(\cdot,\cdot)$ denote a non-degenerate Cartan inner product on ${\bf h^*}$, the dual of the Cartan subalgebra ${\bf h}$ of $\g$. Then the matrix elements of the Cartan matrix $(a_{ij})_{i,j\in\bf{I}}$ are given by $a_{ij} = 2(\alpha_{i},\alpha_{j})/(\alpha_i,\alpha_i)$. There exists a set of coprime positive integers $(r_i)$ such that $(b_{ij})=(r_i a_{ij})$ is symmetric and is called the symmetrized Cartan matrix.

The triangular decomposition of $\g$ is given by ${\bf n^- \tp h \tp n^+}$, and the basis for $\bn^-$ (resp.\ $\bn^+$) is $\{f_{i}\}_{i\in\bf{I}}$ (resp.\ $\{e_{i}\}_{i\in\bf{I}}$). Let $h_i=[e_i,f_i]$ for all $i\in\bf{I}$. Then $\{e_i,f_i,h_i\}_{i\in\bf{I}}$ is a Chevalley basis for $\g$ satisfying
\begin{align} \label{rels:Lie}
[h_i,h_j]=0 \,, \qquad [e_i,f_j] = \delta_{ij} h_i \,,\qquad [h_i,e_j] = a_{ji} e_j \,,\qquad [h_i,f_j] = -a_{ji} f_j \,,
\end{align}
and the Serre relations
\begin{align} \label{rels:Serre}
(\ad\,e_i)^{1-a_{ji}}\,e_j =0 \,,\qquad (\ad\,f_i)^{1-a_{ji}}\,f_j =0 \,.
\end{align}

Let $\theta:\g\to\g$ be a maximally split involutive Lie algebra automorphism (involution) of $\g$, i.e.
\begin{align} \label{msi}
\theta(\bh) &=\bh \,, && \{\theta(e_i)=e_i,\,\theta(f_i)=f_i \,|\, \theta(h_i)=h_i\}\,, && \{\theta(e_i)\in\bn^-,\,\theta(f_i)\in\bn^+ \,|\, \theta(h_i)\neq h_i\}\,.
\end{align}
It defines a symmetric pair $(\g,\g^{\theta})$, where $\g^\theta$ is the $\theta$--fixed subalgebra of $\g$, and induces an involution $\Theta$ of the root space $\Phi$. Let $\pi_{\Theta}=\{\Theta(\alpha_i)=\alpha_i\,|\,\alpha_i\in\pi\}$ denote the $\Theta$--fixed subset of $\pi$. Then,  by \eqref{msi}, $\Theta(-\alpha_j)\in\Phi^{+}$ for all $\alpha_j\in\pi\backslash\pi_\Theta$.

Let $p$ be a permutation of $\{1,\ldots,n\}$ such that
\begin{align}
\Theta(\alpha_j) \in - \alpha_{p(j)} - \mathbb{Z} \pi_\Theta \quad \text{for all} \quad \alpha_{j}\notin\pi_{\Theta} \,, 
\end{align}
and $p(i)=i$ otherwise.  
Let $\pi^{*}$ be a maximal subset of $\pi\backslash\pi_{\Theta}$ such that $\alpha_{j}\in\pi^{*}$ if $p(j)=j$, or only one of the pair $\alpha_j$, $\alpha_{p(j)}$ is in $\pi^*$ if $p(j)\neq j$.  Then for a given $j$ such that $\alpha_{j}\in\pi^{*}$ there exists a sequence $\{\alpha_{j_{1}},\ldots,\alpha_{j_{r}}\}$, where $\alpha_{j_k}\in\pi_\Theta$, and a set of positive integers $\{m_{1},\ldots,m_{r}\}$ such that $\theta$ defined by 
\begin{align} \label{Inv:g1}
\theta(f_i) &= f_i \,,\qquad \theta(e_i) = e_i \,,\qquad \theta(h_i) = h_i \qquad \text{for all} \quad \alpha_i\in\pi_\Theta \,,
\end{align}
and
\begin{align} \label{Inv:g2}
\theta(f_j) &= \big( \ad\,e_{j_1}^{(m_1)}\cdots e_{j_r}^{(m_r)} \big) e_{p(j)} \,, 
& \theta(f_{p(j)}) &= (-1)^{m_{(j)}} \big( \ad\,e_{j_r}^{(m_r)}\cdots e_{j_1}^{(m_1)} \big)e_{j} \,,\el
\theta(e_j) &= (-1)^{m_{(j)}} \big( \ad\,f_{j_1}^{(m_1)}\cdots f_{j_r}^{(m_r)} \big) f_{p(j)} \,, 
& \theta(e_{p(j)}) &= \big( \ad\,f_{j_r}^{(m_r)}\cdots f_{j_1}^{(m_1)} \big) f_{j} \,,\el
\theta(h_j) &= - m_1 h_{j_1} - \ldots - m_r h_{j_r} - h_{p(j)}\,,
& \theta(h_{p(j)}) &= - m_1 h_{j_1} - \ldots + m_r h_{j_r} - h_{j}\,, 
\end{align}
for all $\alpha_j\in\pi^*$ is an involution of $g$ (up to a slight adjustment and rescaling of the definition of powers $(m_j)$ such that $[\theta(e_j),\theta(f_j)]=\theta(h_j)$ and
\begin{align}
\big( \ad\,f_{j_1}^{(m_1)}\cdots f_{j_r}^{(m_r)} \big) \big[ \big( \ad\,e_{j_r}^{(m_r)}\cdots e_{j_1}^{(m_1)} \big)\,e_{j} \big] &= e_{j} \,,\el
\big( \ad\,e_{j_1}^{(m_1)}\cdots e_{j_r}^{(m_r)} \big) \big[ \big( \ad\,f_{j_r}^{(m_r)}\cdots f_{j_1}^{(m_1)} \big) f_{j} \big] &= f_{j} \,,
\end{align}
would hold). Here $(\ad\,a) b = [a,b]$ and $m_{(j)}= m_1 + \ldots + m_r$. Note that the notation used in \eqref{Inv:g2} corresponds to $\Theta(\alpha_j) = - m_1 \alpha_{j_1} - \ldots - m_r \alpha_{j_r} - \alpha_{p(j)}$. The special case $\Theta(\alpha_j)=-\alpha_j$ for all simple roots $\alpha_j\in\pi$ gives Chevalley anti-automorphism $\kappa(f_j)=e_j$, $\kappa(e_j)=f_j$, $\kappa(h_j)=-h_j$.

\vspace{0.2cm}


Let the quantum deformed universal enveloping algebra $\Uq$ of a semisimple complex Lie algebra $\g$ of rank $n$ be generated by the elements $\xi_i^\pm,\; k_{i}^{\pm1}$ ($k_i=q^{r_i h_i}$, $i\in\bf{I}$, and $q\in\bb{C}^\times$ is transcendental) that correspond to the standard Chevalley-Serre realization satisfying
\begin{align} \label{rels:qLie}
& k_i k_i^{-1} = k_i^{-1} k_i = 1 \,, \qquad\quad k_i k_j = k_j k_i \,, \el
& k_i \xi^{\pm}_j k_i^{-1} = q^{\pm b_{ij}} \xi^{\pm}_j \,, \qquad\quad [\xi^+_i , \xi^-_j] = \delta_{ij} \frac{k_i-k_i^{-1}}{q_i-q_i^{-1}} \,,
\end{align}
and the quantum Serre relations
\begin{align} \label{rels:qSerre}
\sum_{m=0}^{1-a_{ij}} (-1)^m 
\left[\!\!\begin{array}{c} 1-a_{ij}\\m\\ \end{array}\!\!\right]_{q_i} \!
{(\xi^\pm_i)}^m \xi^{\pm}_j {(\xi^{\pm}_i)}^{1-a_{ij}-m}  = 0 \,, \quad \text{for all}\quad i\neq j \,.
\end{align}
The notation used in here is $q_i = q^{r_i}$ and
\begin{align} 
[n]_q = \frac{q^n-q^{-n}}{q-q^{-1}} \,, \qquad [n]_q! = [n]_q [n-1]_q \cdots [1]_q \,, \qquad \left[\!\!\begin{array}{c} n\\m\\ \end{array}\!\!\right]_q = \frac{[n]_q!}{[n-m]_q!\,[m]_q!} \,.
\end{align}

The algebra $\Uq$ becomes a Hopf algebra when equipped with 
the coproduct $\Delta$, antipode $S$ and counit $\epsilon$ given by
\begin{align}
\label{cop1}
& \Delta (k_i) \; = k_i \tc k_i \,, \hspace{-1cm} && S(k_i) \; = k_i^{-1}\,, \hspace{-1cm} && \epsilon (k_i) = 1\,, \el
& \Delta (\xi^+_i) = \xi^+_i \tc 1 + k_i \tc \xi^+_i\,, \hspace{-1cm} && S(\xi^+_i) = - k_i^{-1} \xi^+_i\,, \hspace{-1cm} && \el
& \Delta (\xi^-_i) = \xi^-_i \tc k_i^{-1} + 1 \tc \xi^-_i\,, \hspace{-1cm} && S(\xi^-_i) = - \xi^-_i k_i\,, \hspace{-1cm} && \epsilon(\xi^\pm_i) =0 \,.
\end{align}
Being a Hopf algebra, $\Uq$ admits a right adjoint action making
$\Uq$ into a right module. The right adjoint action is defined by
\begin{align}
\left(\adr \,\xi^{+}_i \right) a &= k_i^{-1} a\, \xi^{+}_i - k_i^{-1} \xi^{+}_i a \,, & \left(\adr \,\xi^{-}_i \right) a &= a\, \xi^{-}_i - \xi^{-}_i k_i \,a\, k_i^{-1} \,, 
& \left(\adr \,k_i \right) a &= k_i^{-1} a\, k_i \,.
\end{align}
We shall also be using a short-hand notation $\big(\adr \,\xi^{\pm}_i \cdots \xi^{\pm}_j \big) a = \big(\adr \,\xi^{+}_i \cdots \big(\adr \,\xi^{+}_j \big)\big) a$, for any $a\in\Uq$.

Let $\cT$ be an abelian subgroup $\cT\subset\Uq$ generated by $k_i^{\pm}$. Set $Q(\pi)$ to be equal to the integral lattice generated by $\pi$, i.e.\  $Q(\pi) = \sum_{1\leq i\leq n}\mathbb{Z}\alpha_{i}$. Then there is an isomorphism $\tau$ of abelian groups from $Q(\pi)$ to $\cT$ defined by $\tau(\alpha_{i})=k_{i}$, thus for every $\lambda\in\Phi$ there is an image $\tau(\lambda)\in \cT$.

\vspace{0.2cm}

Consider the involution $\theta$ of $\g$ defined in \eqref{Inv:g1} and \eqref{Inv:g2}. It can be lifted to the quantum case in the following sense. 

\begin{thrm} [Theorem 7.1 of \cite{Le}] \label{Trm:AutUq}
There exists an algebra automorphism $\tth$ of $\Uq$ such that
\begin{align} \label{AutUq}
& \tth(q) = q^{-1} \,, \el
& \tth(\xi^\pm_i) = \xi^\pm_i \quad\text{for all}\quad \alpha_i\in\pi_\Theta \,, \el
& \tth(\tau(\lambda)) = \tau(\Theta(-\lambda)) \quad\text{for all}\quad\tau(\lambda)\in\cT \,, \el
& \tth(\xi^{-}_j) = \big[ \big( \adr \,{\xi^{+}_{j_1}}^{(m_1)} \cdots {\xi^{+}_{j_r}}^{(m_r)} \big) k_{p(j)}^{-1} \xi^{+}_{p(j)} \big] \el
& \text{and}\quad \tth(\xi^{-}_{p(j)}) = (-1)^{m_{(j)}} \big[ \big( \adr \,{\xi^{+}_{j_r}}^{(m_r)} \cdots {\xi^{+}_{j_1}}^{(m_1)} \big) k_{j}^{-1} \xi^{+}_{j} \big] \quad\text{for all}\quad \alpha_j\in\pi^* \,.
\end{align}
\end{thrm}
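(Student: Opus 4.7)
The plan is to define $\tth$ on the generators $\{\xi^\pm_i,k_i^{\pm1}\}$ by the formulas in the statement, augmented by the $\pi^*$-counterpart
\[
\tth(\xi^{+}_j) = (-1)^{m_{(j)}}\bigl[\bigl(\adr\,{\xi^{-}_{j_1}}^{(m_1)}\cdots{\xi^{-}_{j_r}}^{(m_r)}\bigr)\,k_{p(j)}^{-1}\xi^{-}_{p(j)}\bigr]
\]
together with its $j\leftrightarrow p(j)$ partner, which are forced if $\tth$ is to lift the classical involution \eqref{Inv:g1}--\eqref{Inv:g2}. One then extends $\tth$ multiplicatively to the tensor algebra on these generators over $\mathbb{Q}(q)$ subject to $q\mapsto q^{-1}$, so that the theorem reduces to verifying that the defining relations \eqref{rels:qLie}--\eqref{rels:qSerre} lie in the kernel of $\tth$. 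Invertibility then follows for free, because the explicit formulas show that $\tth^2$ restricts on each generator to an inner automorphism, exactly as in the classical picture.

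The torus map $\lambda\mapsto\tau(\Theta(-\lambda))$ is well-defined since $\Theta$ is a $\mathbb{Z}$-module automorphism of $Q(\pi)$. For $\alpha_i,\alpha_j\in\pi_\Theta$ every relation is preserved by direct inspection: $\tth(k_i)=k_i^{-1}$, the Weyl commutator $k_i\xi^\pm_jk_i^{-1}=q^{\pm b_{ij}}\xi^\pm_j$ is stable under the simultaneous substitution $(q,k_i)\mapsto(q^{-1},k_i^{-1})$, the bracket $[\xi^+_i,\xi^-_j]=\delta_{ij}(k_i-k_i^{-1})/(q_i-q_i^{-1})$ is similarly stable, and the $q$-Serre relations persist because the Gaussian binomials appearing in \eqref{rels:qSerre} are invariant under $q_i\mapsto q_i^{-1}$.

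The substance of the proof lies in the mixed relations, where at least one index belongs to $\pi^*$. Setting $E_j=\bigl(\adr\,{\xi^+_{j_1}}^{(m_1)}\cdots{\xi^+_{j_r}}^{(m_r)}\bigr)\,k_{p(j)}^{-1}\xi^+_{p(j)}$, a direct weight count shows that $E_j$ carries right $\cT$-weight $m_1\alpha_{j_1}+\cdots+m_r\alpha_{j_r}+\alpha_{p(j)}=-\Theta(\alpha_j)$; consequently $\tth(k_i)\,E_j\,\tth(k_i)^{-1}$ equals $E_j$ multiplied by exactly the image under $\tth$ of the scalar $q^{-b_{ij}}$ that occurs in $k_i\xi^-_jk_i^{-1}=q^{-b_{ij}}\xi^-_j$. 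The commutator $[\tth(\xi^+_j),\tth(\xi^-_j)]$ is the most delicate item: the divided-power exponents $(m_k)$ and the sign $(-1)^{m_{(j)}}$ are tuned precisely so that the nested adjoint actions telescope by means of the two Chevalley-type identities displayed at the end of \eqref{Inv:g2}, yielding $(\tth(k_j)-\tth(k_j)^{-1})/(q_j-q_j^{-1})$.

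The main obstacle is the verification of the quantum Serre relations for the images $\tth(\xi^\pm_j)$ when $\alpha_j\in\pi^*$. The cleanest route is to rewrite $E_j$ in terms of Lusztig's braid operators $T_w$, where $w$ is the longest element of the Weyl subgroup generated by the reflections $s_{j_k}$, so that $\tth(\xi^-_j)$ becomes (up to an explicit scalar) a real root vector attached to the weight $-\Theta(\alpha_j)$. The corresponding Serre identities then descend from the theory of quantum Weyl groups rather than requiring a brute combinatorial induction on Gaussian binomials, and combining this with the classical fact that $\theta$ is an honest involution of $\g$ closes the loop.
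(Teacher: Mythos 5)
First, a point of comparison: the paper does not prove this statement at all --- Theorem \ref{Trm:AutUq} is imported verbatim as Theorem 7.1 of \cite{Le}, so the relevant benchmark is Letzter's original argument, not anything internal to this paper. Your plan (define $\tth$ on generators, check the defining relations, call on Lusztig's braid operators for the hard Serre relations) points in the right direction, but two steps are genuinely incomplete. First, invertibility does not ``follow for free'': $\tth$ is not an involution on all of $\Uq$ (the paper itself stresses that $\tth^2=id$ only after restriction to $\cM$ and to $\cT$), and your claim that $\tth^2$ acts on each generator as an inner automorphism is precisely the kind of statement that requires the explicit structure of the finite-dimensional $\adr\,\cM$-module generated by $k_{p(j)}^{-1}\xi^+_{p(j)}$; as written it is an assertion, not an argument. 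Second, you verify only the diagonal commutator $[\tth(\xi^+_j),\tth(\xi^-_j)]$, but the off-diagonal relations $[\xi^+_i,\tth(\xi^-_j)]=0$ for $\alpha_i\in\pi_\Theta$ with $i\neq j$ are not automatic: they amount to $(\adr\,\xi^+_i)\,E_j=0$ in your notation, i.e.\ to $E_j$ being the extremal weight vector of that module, and this is exactly where the choice of the exponents $m_k$ does its work. (There is also a sign slip: since $\tth(q_j)=q_j^{-1}$, the correct target for $[\tth(\xi^+_j),\tth(\xi^-_j)]$ is $(\tth(k_j)^{-1}-\tth(k_j))/(q_j-q_j^{-1})$.)

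The route Letzter actually takes sidesteps both difficulties: rather than verifying relations generator by generator, she exhibits $\tth$ as a composition of maps already known to be (anti)automorphisms of $\Uq$ --- the quantum Chevalley antiautomorphism, the diagram automorphism induced by $p$, an automorphism of the torus realizing $\lambda\mapsto\Theta(-\lambda)$, and conjugation by the Lusztig braid operator attached to the longest element of the parabolic Weyl subgroup generated by $\pi_\Theta$ --- and then computes that this composition acts on the generators by the stated formulas. Being an automorphism, and in particular being invertible and satisfying all Serre relations, is then built in. Your last paragraph gestures at this via the braid operators, but only as a device for the Serre relations; to close the gaps you should promote that observation to the definition of $\tth$ itself.
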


This construction allows us to define a left coideal subalgebra of $\Uq$ induced by the involution $\Theta$.  Let $\cT_\Theta = \{\tau(\lambda)\,|\,\Theta(\lambda)=\lambda \}$ be a $\Theta$--fixed subalgebra of $\cT$. Let $\mathcal{M}$ be a Hopf subalgebra of $\Uq$ generated by $\xi^\pm_i$, $k_i^{\pm1}$ for all $\alpha_i\in\pi_\Theta$. Note that $k_j k_{p(j)}^{-1}\in\cT_\Theta$ for all $\alpha_j\in\pi^*$, thus $\cT_\cM\subseteq\cT_\Theta$ where $\cT_\cM=\{k_i^{\pm1}\}$ is the Cartan subgroup of $\cM$. Furthermore, $\tth^2=id$ when restricted to $\cM$ and to $\cT$. Finally, in the $q\to1$ limit $\tth$ specializes to $\theta$. 

\begin{thrm} [Theorem 7.2 of \cite{Le}] \label{Trm:BofUq}
The subalgebra $\cB\subset\Uq$ generated by $\cM$, $\cT_\Theta$ and the elements
\begin{align} \label{Bmj}
B^{-}_j = \xi^{-}_j k_j - d_j \, \tth(\xi^{-}_j) k_j \quad \text{for all}\quad\alpha_j\in\pi\backslash\pi_\Theta \,,
\end{align}
and suitable $d_j\in\bb{C}^\times$ is a left coideal subalgebra of $\Uq$.
\end{thrm}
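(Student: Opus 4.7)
My plan is to verify the left coideal property $\Delta(\cB)\subset\Uq\tc\cB$ on the given generating set, which suffices because $\Delta$ is an algebra homomorphism and $\Uq\tc\cB$ is a subalgebra of $\Uq\tc\Uq$ closed under multiplication. The generators coming from $\cM$ and $\cT_\Theta$ require no work: since $\cM$ is, by construction, a Hopf subalgebra one has $\Delta(\cM)\subset\cM\tc\cM\subset\Uq\tc\cB$, and each $\tau(\lambda)\in\cT_\Theta$ is group-like so that $\Delta(\tau(\lambda))=\tau(\lambda)\tc\tau(\lambda)\in\Uq\tc\cT_\Theta\subset\Uq\tc\cB$.

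The real work is to check $\Delta(B^-_j)\in\Uq\tc\cB$ for each $\alpha_j\in\pi\backslash\pi_\Theta$. From \eqref{cop1} one obtains immediately
\[ \Delta(\xi^-_j k_j) = \xi^-_j k_j\tc 1 + k_j\tc\xi^-_j k_j. \]
For $\Delta(\tth(\xi^-_j)k_j)$ I would use the explicit formula for $\tth(\xi^-_j)$ provided by Theorem \ref{Trm:AutUq} as an iterated right adjoint action. Starting from
\[ \Delta\big(k_{p(j)}^{-1}\xi^+_{p(j)}\big) = k_{p(j)}^{-1}\xi^+_{p(j)}\tc k_{p(j)}^{-1} + 1\tc k_{p(j)}^{-1}\xi^+_{p(j)}, \]
I would peel off each factor $\adr\,\xi^+_{j_k}^{(m_k)}$ in turn using the Hopf identity $\Delta((\adr y)a)=\sum\Delta(S(y_{(1)}))\,\Delta(a)\,\Delta(y_{(2)})$. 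Because each divided power $\xi^+_{j_k}^{(m_k)}$ (with $\alpha_{j_k}\in\pi_\Theta$) lies in the Hopf subalgebra $\cM$, all additional factors introduced by this procedure remain inside $\cM$ in both tensor slots.

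Classifying the resulting terms of $\Delta(\tth(\xi^-_j)k_j)$ by their right tensor factor, one finds a ``diagonal'' contribution $k_j\tc\tth(\xi^-_j)k_j$ which combines with $k_j\tc\xi^-_j k_j$ from above to produce $k_j\tc B^-_j\in\Uq\tc\cB$, and ``off-diagonal'' contributions whose right tensor factors are monomials in $\cM$ multiplied by the Cartan element $\tau(\alpha_j+\Theta(\alpha_j))$. Because $\Theta$ is an involution, $\Theta(\alpha_j+\Theta(\alpha_j))=\Theta(\alpha_j)+\alpha_j$, so this weight is $\Theta$-fixed and $\tau(\alpha_j+\Theta(\alpha_j))\in\cT_\Theta\subset\cB$; hence all residual right tensor factors lie in $\cM\cdot\cT_\Theta\subset\cB$. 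The scalar $d_j\in\bb{C}^\times$ is fixed precisely so that the normalisation factors produced by $q$-commuting $k_{j_k}^{-1}$ and $k_{p(j)}^{-1}$ past the various $\xi^+_l$ match between the two halves of $B^-_j$; no matter what $d_j$ is, the right tensor factors individually are in $\cB$, but a particular choice is needed so that the leading scalars work out consistently with the definition of $\tth$ fixed in Theorem \ref{Trm:AutUq}.

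The principal difficulty is the combinatorial bookkeeping of the numerous cross-terms generated by the iterated coproducts and by moving Cartan elements past the raising operators; the two algebraic facts that make everything collapse into $\Uq\tc\cB$ are (i) the Hopf-subalgebra property of $\cM$, which confines all additional factors to $\cM$, and (ii) the $\Theta$-invariance of $\alpha_j+\Theta(\alpha_j)$, which places the residual Cartan element in $\cT_\Theta$ rather than in the full $\cT$.
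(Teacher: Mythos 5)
Your proof is correct and follows essentially the same route as the paper: the coideal property is immediate for $\cM$ and $\cT_\Theta$, and for $B^-_j$ everything reduces to the containment $\Delta(\tth(\xi^-_j)k_j)\in k_j\tc\tth(\xi^-_j)k_j+\Uq\tc\cM^+\cT_\Theta$, which the paper simply quotes from Letzter's Section 6 and the proof of her Theorem 7.2, while you sketch its derivation by peeling off the iterated adjoint actions. The only caveat is that your closing remark overstates the role of $d_j$: as you yourself note, the right tensor factors land in $\cB$ for any $d_j\in\bb{C}^\times$, so the coideal property itself does not constrain $d_j$; the ``suitable'' choice is needed for other features of the quantum symmetric pair, not for $\Delta(\cB)\subset\Uq\tc\cB$.
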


Let $U^+$ (resp.\ $U^-$) be the subalgebra of $\Uq$ generated by $\xi^+_i$ (resp.\ $k_i \xi^-_i$) for all $\alpha_i\in\pi_\Theta$. Set $\cM^\pm=U^\pm\cap\cM$. By the definition, the elements $\tth(\xi^-_j)k_j$ are such that (see Section 6 and the proof of the Theorem 7.2 of \cite{Le}),
\begin{align} 
\Delta(\tth(\xi^-_j)k_j) \in k_j \tc \tth(\xi^-_j)k_j + \Uq \tc \cM^+ \cT_\Theta \subset \Uq \tc \cB \,.
\end{align}
Hence the coproducts of $B^{-}_j$ are of the following form,
\begin{align} \label{Cop:Bmj}
\Delta(B^{-}_j) \in k_j \tc B^{-}_j + \Uq \tc \cM^+ \cT_\Theta \subset \Uq \tc \cB \,.
\end{align}

\begin{crl} \label{Crl:1}
The subalgebra $\mathcal{D}\subset\cB$ generated by $\cM$, $\cT_\Theta$ and the elements $B^{-}_j$ for any but not all $\alpha_j\in\pi\backslash\pi_\Theta$ is a left coideal subalgebra of $\Uq$.
\end{crl}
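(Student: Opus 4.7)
The plan is to verify the left coideal property $\Delta(\mathcal{D}) \subset \Uq \tc \mathcal{D}$ by checking it on the generating set, and then invoking the fact that $\Delta$ is an algebra homomorphism so that the property propagates to all products. Since the generators of $\mathcal{D}$ form a subset of the generators of $\cB$, and since Theorem 1.2 already gives us the coideal property for $\cB$, the only thing to be verified is that when we throw away some of the $B^-_k$, the coproducts of the remaining generators still land in $\Uq \tc \mathcal{D}$ (and not merely in $\Uq \tc \cB$).

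First I would handle the generators coming from $\cM$: since $\cM$ is a Hopf subalgebra of $\Uq$, we have $\Delta(\cM) \subset \cM \tc \cM \subset \Uq \tc \mathcal{D}$, so this part requires no work. Next, for $\tau(\lambda) \in \cT_\Theta$, the elements are group-like, $\Delta(\tau(\lambda)) = \tau(\lambda) \tc \tau(\lambda)$, and the right factor lies in $\cT_\Theta \subset \mathcal{D}$. The nontrivial step is the check for the generators $B^-_j$ that survive in $\mathcal{D}$. Here I would use formula \eqref{Cop:Bmj} directly, which states
\begin{align*}
\Delta(B^-_j) \in k_j \tc B^-_j + \Uq \tc \cM^+ \cT_\Theta.
\end{align*}
The crucial observation is that the right tensor factor $B^-_j$ in the first summand is the \emph{same} generator we are keeping in $\mathcal{D}$, and $\cM^+ \cT_\Theta \subset \cM \cdot \cT_\Theta \subset \mathcal{D}$ regardless of which other $B^-_k$ have been discarded. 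Hence $\Delta(B^-_j) \in \Uq \tc \mathcal{D}$.

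Finally, since every generator $x$ of $\mathcal{D}$ satisfies $\Delta(x) \in \Uq \tc \mathcal{D}$, and since $\Delta$ is an algebra homomorphism with $\Uq \tc \mathcal{D}$ a subalgebra of $\Uq \tc \Uq$, the coideal property extends to all of $\mathcal{D}$ by induction on the length of monomials in the generators. This completes the proof.

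I do not anticipate a genuine obstacle: the content of the corollary is that the coideal property of $\cB$ established in Theorem 1.2 is \emph{local} on the generators $B^-_j$, in the sense that the expression \eqref{Cop:Bmj} for $\Delta(B^-_j)$ never couples distinct indices $j$ on the right-hand tensor factor. The only thing one needs to be slightly careful about is confirming that $\cM^+ \cT_\Theta$ is indeed contained in the subalgebra generated by $\cM$ and $\cT_\Theta$, which is immediate.
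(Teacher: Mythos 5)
Your proposal is correct and follows exactly the route the paper intends: Corollary \ref{Crl:1} is stated as an immediate consequence of the coproduct formula \eqref{Cop:Bmj}, whose right tensor factor involves only the single generator $B^-_j$ together with $\cM^+\cT_\Theta$, so discarding the other $B^-_k$ does not disturb the coideal property. Your explicit checks for $\cM$, $\cT_\Theta$ and the propagation to products via the homomorphism property of $\Delta$ are the standard details the paper leaves implicit.
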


The pair ($\Uq$, $\cB$) is called the quantum symmetric pair and is the quantum analog of the pair of enveloping algebras ($\cU(\g)$, $\cU(\g^\theta)\,$). For more details consult Section 7 of \cite{Le}; for explicit $\cB$'s for various simple Lie algebras see \cite{Le1}. 
Note that the action of $\tth$ on $\xi^{+}_j$ is not explicitly defined by Theorem \ref{Trm:AutUq}, but is constrained by requiring $\tth$ to be an automorphism of $\Uq$. 

\vspace{0.2cm}

In some cases it is more convenient to work with an equivalent coideal subalgebra $\cB'$ which is obtained by interchanging all $\xi^-_i$ and $k_i^{-1} \xi^+_i$, $\{i\in{\bf I}\}$. Let us show this explicitly.
Consider a $\bb{C}$-linear algebra anti-automorphism $\kappa_B$ of $\Uq$ given by
\begin{align} \label{kb}
\kappa_B(\xi^{-}_i) = c_B^{-1} k_i^{-1} \xi^{+}_i \,, \quad \kappa_B(\xi^{+}_i) = c_B \, \xi^{-}_i k_i \,, \quad \kappa_B(\tau(\lambda)) = \tau(\lambda) \,.
\end{align}
Then there exists $c_B\in\bb{C}^\times$ such that $\kappa_B(\cB)=\cB$ holds. This is easy to check. Firstly, 
\begin{align}
\kappa_B((\adr\,\xi^+_i) b ) = - c_B (\adr\,\xi^-_i) \kappa_B(b) \,,\qquad \kappa_B((\adr\,\xi^-_i) b ) = - c^{-1}_B (\adr\,\xi^+_i) \kappa_B(b) \,.
\end{align}
Recall that
\begin{align}
\big(\adr\,{\xi^{-}_{j_1}}^{(m_1)} \cdots {\xi^{-}_{j_r}}^{(m_r)}\big) \big(\adr\,{\xi^{+}_{j_r}}^{(m_r)} \cdots {\xi^{+}_{j_1}}^{(m_1)}\big) k_j^{-1} \xi^+_j = k_j^{-1} \xi^+_j \,.
\end{align}
This gives
\begin{align}
\kappa_B(B^{-}_{j}) &= c_B^{-1} \Big( q^{a_{jj}} k_j^{-1} \xi^{+}_j k_j - d'_j \big[\big(\adr\,{\xi^{-}_{j_1}}^{(m_1)} \cdots {\xi^{-}_{j_r}}^{(m_r)}\big) \xi^{-}_{p(j)}\big] k_j \Big) \el
 &= c_B^{-1} q^{a_{jj}} d'_j (-1)^{m_{(j)}} \big[ \big(\adr\,{\xi^{-}_{j_1}}^{(m_1)} \cdots {\xi^{-}_{j_r}}^{(m_r)}\big) B^{-}_{p(j)} \big] \, k_{p(j)}^{-1} k_j \,,
\end{align}
where $d'_j = d_j c_B^{m_{(j)}+2} q^{\sum_k m_k a_{jj_k}-a_{jj}}$, and we have required $d'_j = q^{a_{jj}} d_{p(j)}^{-1}$. Thus $\kappa_B(B^-_j)\in \cB$, and in a similar way one could show that $\kappa_B(B^-_{p(j)})\in \cB$. Finally, the property is manifest for $\cM$ and of $\cT_\Theta$. This implies that one can replace all generators $B^{-}_j$ \eqref{Bmj} by an equivalent set of generators $B^{+}_j$ that are obtained by interchanging all $\xi^{-}_i$ and $k^{-1}_i \xi^{+}_i$ ($i\in\bf{I}$) in \eqref{AutUq} and \eqref{Bmj} giving \cite{Kolb}
\begin{align} \label{Bpj}
B^{+}_j = \xi'^{+}_j k_j - d_j \, \tth'(\xi'^{+}) k_j \quad \text{for all} \quad \alpha_j\in\pi\backslash\pi_\Theta \,,
\end{align}
and suitable $d_j\in\bb{C}^\times$; here $\xi'^{+}_j=k_j^{-1} \xi^{+}_j$ and $\tth'$ is defined by
\begin{align} \label{tth:primed}
\tth'(\xi'^{+}_j) &= (-1)^{m_{(j)}}\big[ \big( \adr \,{\xi^{-}_{j_1}}^{(m_1)} \cdots {\xi^{-}_{j_r}}^{(m_r)} \big) \xi^{-}_{p(j)} \big] \,, \el
\tth'(\xi'^{+}_{p(j)}) &= \big[ \big( \adr \,{\xi^{-}_{j_r}}^{(m_r)} \cdots {\xi^{-}_{j_1}}^{(m_1)} \big) \xi^{-}_{j} \big] \,.
\end{align}
The coproducts of $B^{+}_j$ have the following form
\begin{align} \label{Cop:Bpj}
\Delta(B^{+}_j) \in k_j \tc B^{+}_j + \Uq \tc \cM^- \cT_\Theta \,.
\end{align}
This leads to the following corollaries:

\begin{crl} \label{Crl:2}
There exists an algebra automorphism $\tth'$ of $\Uq$ such that
\begin{align}
& \tth'(q) = q^{-1} \,, \el
& \tth'(\xi^\pm_i) = \xi^\pm_i \quad\text{for all}\quad \alpha_i\in\pi_\Theta \,, \el
& \tth'(\tau(\lambda)) = \tau(\Theta(-\lambda)) \quad\text{for all}\quad\tau(\lambda)\in\cT \,, 
\end{align}
and \eqref{tth:primed} holds for all $\alpha_j\in\pi^*$. It is an involution $\tth'^2=id$ when restricted to $\cM$ and to $\cT$. In the $q\to1$ limit $\tth'$ specializes to $\theta$.
\end{crl}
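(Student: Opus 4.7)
The plan is to produce $\tth'$ by transporting the automorphism $\tth$ of Theorem \ref{Trm:AutUq} along the anti-automorphism $\kappa_B$ of \eqref{kb}. Concretely, I would set $\tth' := \kappa_B \circ \tth \circ \kappa_B^{-1}$. Since $\kappa_B$ is a $\bb{C}$-linear algebra anti-automorphism and $\tth$ is a $\bb{C}$-linear algebra automorphism, the composition $\tth'$ is automatically an algebra automorphism of $\Uq$, and the relation $\tth'(q) = q^{-1}$ is inherited from $\tth(q)=q^{-1}$ since $\kappa_B$ acts trivially on scalars.

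Next I would verify the stated action on generators. For $\alpha_i \in \pi_\Theta$, I would use that $\kappa_B$ exchanges $\xi^-_i$ and $k_i^{-1}\xi^+_i$ up to the scalar $c_B^{\pm 1}$, while $\tth$ fixes both $\xi^+_i$ and $\xi^-_i$, so conjugation returns the generator unchanged; hence $\tth'(\xi^\pm_i)=\xi^\pm_i$. For $\tau(\lambda) \in \cT$, $\kappa_B$ fixes $\cT$ pointwise, so $\tth'(\tau(\lambda)) = \tth(\tau(\lambda)) = \tau(\Theta(-\lambda))$. For $\alpha_j \in \pi^*$, I would compute $\tth'(k_j^{-1}\xi^+_j) = \kappa_B(\tth(c_B \xi^-_j))$ using \eqref{AutUq}, and then convert the string of right adjoint actions using the intertwiner
\begin{align}
\kappa_B\bigl((\adr\,\xi^+_i) b\bigr) = -c_B\, (\adr\,\xi^-_i) \kappa_B(b),
\end{align}
which is recorded in the discussion preceding the statement; after absorbing the resulting powers of $c_B$ into the definition of $\tth'$, the result is precisely \eqref{tth:primed}, and similarly for $\tth'(k_{p(j)}^{-1}\xi^+_{p(j)})$.

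For the involutivity claim, observe that $\kappa_B$ preserves $\cM$ and $\cT$ setwise, and $\tth^2 = id$ on $\cM$ and on $\cT$ by Theorem \ref{Trm:AutUq} and the accompanying remark. Therefore $\tth'^2 = \kappa_B \tth^2 \kappa_B^{-1} = id$ on these subalgebras. For the classical limit, $\kappa_B$ specializes (after normalising $c_B\to 1$) to a well-defined anti-automorphism $\kappa_0$ of $\cU(\g)$, and since $\tth$ specializes to $\theta$, the conjugated map $\tth'$ specializes to $\kappa_0 \theta \kappa_0^{-1}$; the maximally split structure of $\theta$ as displayed in \eqref{Inv:g1}--\eqref{Inv:g2} is symmetric under the swap $e_i\leftrightarrow f_i$ implemented by $\kappa_0$, so this conjugate equals $\theta$, giving the required specialization.

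The main obstacle I expect is the scalar bookkeeping in the second paragraph. Each factor in the product of adjoint actions contributes a sign and a power of $c_B$ when conjugated, and one must check that these combine to produce exactly the sign $(-1)^{m_{(j)}}$ appearing in \eqref{tth:primed} and an equivalent consistent family of scalars $d_j$ in \eqref{Bpj}. The structural identity $\kappa_B(\cB)=\cB$ verified above guarantees that such a consistent choice of scalars exists, but confirming the precise combinatorics requires a careful iterative computation over the nested adjoint actions indexed by $(\alpha_{j_1},\ldots,\alpha_{j_r})$.
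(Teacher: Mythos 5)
Your overall strategy --- transporting $\tth$ along the anti-automorphism $\kappa_B$ --- is a reasonable attempt to make precise the paper's informal recipe of ``interchanging $\xi^-_i$ and $k_i^{-1}\xi^+_i$'', and the composite $\kappa_B\circ\tth\circ\kappa_B^{-1}$ is indeed an algebra automorphism with the stated behaviour on $q$, on $\cT$ (since $\kappa_B$ fixes $\cT$ pointwise), and, after the scalar bookkeeping you describe, on the $\pi^*$ generators. However, the verification for $\alpha_i\in\pi_\Theta$ is wrong, and this is a genuine gap rather than bookkeeping. Since $\kappa_B$ is an anti-automorphism with $\kappa_B(\xi^+_i)=c_B\,\xi^-_ik_i$ and $\kappa_B^2=\mathrm{id}$, one finds for $\alpha_i\in\pi_\Theta$
\begin{align*}
\kappa_B\circ\tth\circ\kappa_B(\xi^+_i)\;=\;\kappa_B\bigl(c_B\,\tth(\xi^-_i)\,\tth(k_i)\bigr)\;=\;\kappa_B\bigl(c_B\,\xi^-_ik_i^{-1}\bigr)\;=\;c_B\,k_i^{-1}\kappa_B(\xi^-_i)\;=\;k_i^{-2}\xi^+_i\,,
\end{align*}
and similarly $\xi^-_i\mapsto\xi^-_ik_i^{2}$. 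The culprit is the Cartan factor $k_i$ hidden inside $\kappa_B(\xi^+_i)$: $\tth$ fixes $\xi^\pm_i$ but sends $\tau(\alpha_i)\mapsto\tau(-\alpha_i)=k_i^{-1}$, so the conjugate does \emph{not} act as the identity on the Chevalley generators of $\cM$, contradicting the second displayed property of the Corollary (and your derivation of $\tth'^2=id$ on $\cM$ inherits the same defect). Your sentence ``conjugation returns the generator unchanged'' is therefore false as stated.

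To repair this you would have to post-compose with the additional map $\xi^+_i\mapsto k_i^{2}\xi^+_i$, $\xi^-_i\mapsto\xi^-_ik_i^{-2}$ on $\cM$ and check that it is an automorphism compatible with the quantum Serre relations \eqref{rels:qSerre}, or else abandon the conjugation altogether. The paper does the latter, implicitly: Corollary \ref{Crl:2} is justified by re-running the construction of Theorem \ref{Trm:AutUq} with the roles of $\xi^-_i$ and $k_i^{-1}\xi^+_i$ interchanged (a mirror image of Letzter's Theorem 7.1), while the anti-automorphism $\kappa_B$ is used only for what it actually delivers --- the computation showing $\kappa_B(B^-_j)\in\cB$ and hence $\kappa_B(\cB)=\cB$, which identifies the two coideal subalgebras $\cB$ and $\cB'$ but does not by itself produce the automorphism $\tth'$ with the properties claimed.
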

\begin{crl} \label{Crl:3}
The subalgebra $\cB'\subset\Uq$ generated by $\cM$, $\cT_\Theta$ and the elements
\begin{align} \label{Bja}
B^{+}_j = \xi'^{+}_j k_j - d'_j \, \tth(\xi'^{+}_j) k_j \quad \text{for all}\quad\alpha_j\in\pi\backslash\pi_\Theta \,,
\end{align}
and suitable $d'_j\in\bb{C}^\times$ is a left coideal subalgebra of $\Uq$.
\end{crl}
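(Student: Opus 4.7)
The plan is to reduce the left coideal property $\Delta(\cB') \subseteq \Uq \tc \cB'$ to a check on the generating set of $\cB'$, namely on $\cM$, on $\cT_\Theta$, and on the elements $B^{+}_j$ for $\alpha_j \in \pi \setminus \pi_\Theta$. The first two cases are immediate: $\cM$ is by construction a Hopf subalgebra of $\Uq$ (generated by the standard Chevalley generators corresponding to the indices in $\pi_\Theta$), so $\Delta(\cM) \subseteq \cM \tc \cM \subseteq \Uq \tc \cB'$; and each $\tau(\lambda) \in \cT_\Theta$ is grouplike, with $\Delta(\tau(\lambda)) = \tau(\lambda) \tc \tau(\lambda) \in \Uq \tc \cT_\Theta \subseteq \Uq \tc \cB'$.

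\textbf{The substantive step.} The essential task is to establish the coproduct formula \eqref{Cop:Bpj}; once this is granted, $\Delta(B^+_j) \in k_j \tc B^+_j + \Uq \tc \cM^- \cT_\Theta \subseteq \Uq \tc \cB'$ follows from the definition of $\cB'$. One natural route is direct: adapt the proof of \eqref{Cop:Bmj} from Theorem~7.2 of \cite{Le}. The coproducts $\Delta(\xi'^{+}_j) = \xi'^{+}_j \tc k_j^{-1} + 1 \tc \xi'^{+}_j$ and $\Delta(\xi^-_j) = \xi^-_j \tc k_j^{-1} + 1 \tc \xi^-_j$ are structurally identical, and by Corollary~\ref{Crl:2} the automorphism $\tth'$ is defined by the same adjoint-action recipe as $\tth$, with every $\xi^+$ replaced by $\xi^-$. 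Consequently the step-by-step computation of $\Delta\bigl(\tth(\xi^-_j)\, k_j\bigr)$ transcribes verbatim into one for $\Delta\bigl(\tth'(\xi'^{+}_j)\, k_j\bigr)$, with the role of $\cM^+$ taken over by $\cM^-$, yielding \eqref{Cop:Bpj}.

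\textbf{A shortcut via $\kappa_B$, and the main obstacle.} A cleaner route is to exploit the anti-automorphism $\kappa_B$ of \eqref{kb}. A direct check on the Chevalley generators $\xi^\pm_i, k_i$ shows that although $\kappa_B$ reverses multiplication, it is a coalgebra homomorphism: $\Delta \circ \kappa_B = (\kappa_B \tc \kappa_B) \circ \Delta$. The calculation already carried out in the excerpt identifies $\kappa_B(B^-_j)$ with a scalar multiple of $B^+_j$ provided one chooses the constants according to $d'_j = d_{p(j)}^{-1}\, q^{a_{jj}}$; moreover $\kappa_B^2 = \mathrm{id}$ on generators, so $\kappa_B$ is an involution and $\cB' = \kappa_B(\cB)$ as subalgebras of $\Uq$. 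The coideal property of $\cB'$ then transports from Theorem~\ref{Trm:BofUq} via
\begin{align*}
\Delta(\cB') \;=\; \Delta\bigl(\kappa_B(\cB)\bigr) \;=\; (\kappa_B \tc \kappa_B)\bigl(\Delta(\cB)\bigr) \;\subseteq\; (\kappa_B \tc \kappa_B)(\Uq \tc \cB) \;=\; \Uq \tc \cB'\,.
\end{align*}
The main obstacle in the direct route is the combinatorial interaction between the iterated right adjoint action $\adr\,\xi^{-}_{j_1}\!\cdots \xi^{-}_{j_r}$ and the coproducts of $\xi^-_j$ and $k_j$; this is the technical heart of Letzter's argument, which the $\kappa_B$ route bypasses by reusing it wholesale. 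I would therefore present the $\kappa_B$ proof as primary and mention the direct verification only as a consistency check.
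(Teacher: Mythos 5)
Your proposal is correct and follows essentially the same route as the paper: Corollary \ref{Crl:3} is justified there precisely by the preceding $\kappa_B$-computation identifying $\kappa_B(B^{-}_j)$ with a scalar multiple of $B^{+}_j$ (so that $\kappa_B(\cB)=\cB$ and the $B^{+}_j$ form an equivalent generating set), together with the coproduct formula \eqref{Cop:Bpj}. Your one addition --- the explicit check that $\kappa_B$, although an algebra anti-automorphism, satisfies $\Delta\circ\kappa_B=(\kappa_B\tc\kappa_B)\circ\Delta$ (verified on the Chevalley generators and propagated to products because both sides are anti-homomorphisms into $\Uq\tc\Uq$) --- is valid and gives a cleaner transport of the coideal property than the paper's unproved assertion of \eqref{Cop:Bpj}, while resting on exactly the same computation of $\kappa_B(B^{-}_j)$.
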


\begin{crl} \label{Crl:4}
The subalgebra $\mathcal{D'}\subset\cB'$ generated by $\cM$, $\cT_\Theta$ and the elements $B^{+}_j$ for any but not all $\alpha_j\in\pi\backslash\pi_\Theta$ is a left coideal subalgebra of $\Uq$.
\end{crl}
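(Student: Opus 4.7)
The plan is to reduce Corollary \ref{Crl:4} to Corollary \ref{Crl:3} by exploiting the explicit form of the coproduct \eqref{Cop:Bpj}. Since $\mathcal{D}'$ is defined as a subalgebra of $\cB'$ generated by $\cM$, $\cT_\Theta$, and a chosen subset of the $B^{+}_j$'s, it is automatic that $\mathcal{D}' \subset \cB' \subset \Uq$ is a subalgebra. The only thing to verify is the left coideal condition
\begin{align}
\Delta(\mathcal{D}') \subset \Uq \tc \mathcal{D}'\,,
\end{align}
and since $\Delta$ is an algebra homomorphism and $\mathcal{D}'$ is closed under multiplication, it suffices to check this on the three families of generators.

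First I would handle the two ``easy'' families. For $x \in \cM$, the fact that $\cM$ is a Hopf subalgebra of $\Uq$ yields $\Delta(x) \in \cM \tc \cM \subset \Uq \tc \mathcal{D}'$. For $\tau(\lambda) \in \cT_\Theta$, the generators $k_i^{\pm 1}$ and their $\Theta$-fixed combinations are grouplike, so $\Delta(\tau(\lambda)) = \tau(\lambda) \tc \tau(\lambda) \in \Uq \tc \mathcal{D}'$.

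The decisive step is the remaining family. For each $\alpha_j \in \pi \setminus \pi_\Theta$ that has been included among the generators of $\mathcal{D}'$, formula \eqref{Cop:Bpj} gives
\begin{align}
\Delta(B^{+}_j) \in k_j \tc B^{+}_j + \Uq \tc \cM^{-} \cT_\Theta\,.
\end{align}
The key observation, which is what makes Corollary \ref{Crl:4} work, is that the right tensor factor on the right-hand side involves only $B^{+}_j$ itself and elements of $\cM^{-} \cT_\Theta \subset \cM \cdot \cT_\Theta$, but \emph{no other} $B^{+}_k$ with $k \neq j$. Consequently, dropping some of the $B^{+}_k$'s from the generating set does not introduce any right-hand factors that fall outside $\mathcal{D}'$. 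Since $B^{+}_j \in \mathcal{D}'$ by construction and $\cM^{-}\cT_\Theta \subset \mathcal{D}'$ by the inclusion of $\cM$ and $\cT_\Theta$, we conclude $\Delta(B^{+}_j) \in \Uq \tc \mathcal{D}'$, as required.

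I do not anticipate a genuine obstacle here: the entire argument is a direct transcription of the proof of Corollary \ref{Crl:1}, with the roles of $\tth$ and $B^{-}_j$ replaced by those of $\tth'$ and $B^{+}_j$, and with \eqref{Cop:Bmj} replaced by \eqref{Cop:Bpj}. The only point that requires attention is to record that the coproduct formula \eqref{Cop:Bpj} decouples different indices $j$, which is precisely what permits the ``any but not all'' freedom in the statement.
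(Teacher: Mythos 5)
Your proposal is correct and follows exactly the route the paper intends: the corollary is stated as an immediate consequence of the coproduct formula \eqref{Cop:Bpj}, whose right tensor factor involves only $B^{+}_j$ itself together with $\cM^{-}\cT_\Theta$ and hence no other $B^{+}_k$, which is precisely the decoupling you identify. Your treatment of $\cM$ as a Hopf subalgebra and of $\cT_\Theta$ as grouplike matches the paper's handling of the analogous Corollary \ref{Crl:1} and of Proposition \ref{Trm:BofUa}.
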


Note that in this case the action of $\tth'$ on $\xi^-_j$ is not explicitly defined, but is constrained requiring $\tth'$ to be an automorphism of $\Uq$.


\vspace{0.1in}

\subsection{Quantum affine coideal subalgebras}
We will further be interested in two particular extensions of the coideal subalgebras defined above. We will construct coideal subalgebras of the quantum affine algebra $\Ua$ that are associated with singlet and vector boundaries. 

Let $\hat\g$ be the (untwisted) affine Kac--Moody algebra. Let $(\wh{a}_{ij})_{i,j\in\hat{\bf{I}}}$ denote the extended Cartan matrix and $(\wh{b}_{ij})=(\wh{r}_i \, \wh{a}_{ij})$ be the symmetrized extended Cartan matrix. Here $\hat{\bf I}=\{0,1,\ldots,n\}$ denotes the set of Dynkin nodes of $\hat{\g}$. The set of the simple positive roots is given by $\wh\pi = \alpha_0\cup\pi$, where $\alpha_0$ is the affine root.
Recall that $\hat\g$ is an one--dimensional central extension of the Lie algebra $\cL(\g)=\g[z,z^{-1}]$ of Laurent polynomial maps $\bb{C}^\times \to \g$ under pointwise operations. The triangular decomposition is given by $\hat\g = \hat\bn^+\tp\hat\bh\tp\hat\bn^-$, where
\begin{align}
\hat\bn^\pm = z^{\pm1}\,\bb{C}[z^{\pm1}]\tc(\bn^{\mp}\tp\bh)\tp\bb{C}[z^{\pm1}]\tp\bn^{\pm} \,, \qquad\quad \hat\bh = (1\tc\bh)\tp\bb{C}K\tp\bb{C}D \,,
\end{align}
Here $K$ is the central element and $D$ is the derivation of the algebra. The Chevalley generators are given by
\begin{align}
E^+_i &= 1\tc e_i \,, \hspace{-1.5cm} & E^+_0 &= z \tp e_0 \in z\tp\bn^-\subset\hat\bn^+ \,, \el
E^-_i &= 1\tc f_i \,, \hspace{-1.5cm} & E^-_0 &= z^{-1} \tp f_0 \in z^{-1}\tp\bn^+\subset\hat\bn^- \,, \el
H_i\; &= 1\tc h_i \,, \hspace{-1.5cm} & H_0\, &= [E^+_0,E^-_0] \in [e_0, f_0] + \bb{C}K \subset \hat\bh \,,
\end{align}
where $e_0\in\g_{-\vartheta}$, $f_0\in\g_{\vartheta}$ are such that $\vartheta\in\Phi^+$ is the highest root of $\g$. 

The elements $E^\pm_i$, $H_i$ ($i\in{\bf \hat{I}}$) generate a subalgebra $\tilde\g\subset\hat\g$ such that $\hat\g=\tilde\g\tp\bb{C}D$ is a semi--direct product Lie algebra. The derivation $D=z\,d/dz$ of $\bb{C}[z,z^{-1}]$ acts on $\tilde\g$ by
\begin{align} \label{D_action}
[D,E^\pm_0] = \pm E^\pm_0 \qquad\text{and}\qquad 
[D,H_0]=[D,H_i]=[D,E_i^\pm]=0 \quad\text{for all}\quad i\in{\bf I} \,.
\end{align}
Set $\delta\in\hat\bh^*$ such that $\delta(D)=1$ and $\delta(\bh\tp\bb{C}K)=0$. Then the affine root is given by $\alpha_0 = \delta - \vartheta$. 

Consider an involution $\theta$ of $\tilde\g$ such that the associated root space involution $\Theta$ is given by
\begin{align} \label{AffInv}
\Theta (\alpha_0) \in -\alpha_{p(0)} - \bb{Z} (\pi\backslash\alpha_{p(0)}) \qquad\text{and}\qquad \Theta (\alpha_i) = \alpha_i \;\; \text{for all} \;\; \alpha_i \in \pi\backslash\alpha_{p(0)}\,,
\end{align}
and satisfying the following constraint,
\begin{align}
\alpha_0-\Theta(\alpha_0) = k \delta\,, \qquad\text{where}\qquad \begin{cases}
k=1 \text{ for } p(0)\neq0 \,,\\ k=2 \text{ for } p(0)=0 \,,\end{cases}
\end{align}
here $p(0)\in\{0,\ldots,n\}$, and $\pi\backslash\alpha_{p(0)}=\pi$ if $p(0)=0$. Define $\theta(D)=-D$. Then, for the $p(0)=0$ case, the relations
\begin{align} \label{theta_D}
[\theta(D),\theta(E^\pm_0)]=\theta([D,E^\pm_0]) \qquad\text{and}\qquad [\theta(D),\theta(E^\pm_i)]=\theta([D,E^\pm_i]) \quad\text{for all}\quad i\in{\bf I}
\end{align}
are satisfied, and thus the involution $\theta$ can be naturally lifted to an involution of $\hat\g$. Otherwise, if $p(0)\neq0$, relations \eqref{theta_D} do not hold and such lift is not possible. Nevertheless, $\theta^2=id$ on $\hat\g$ for both cases.

\vspace{0.2cm}

Let $\Ua$ be the universal enveloping algebra of $\hat\g$. The algebra $\Ua$ in the standard Drinfeld-Jimbo realization is generated by the elements $\xi^\pm_i$, $k_i^{\pm}$ ($i\in\hat{\bf{I}}$) satisfying \eqref{rels:qLie} and \eqref{rels:qSerre} with $a_{ij}$ (resp.\ $b_{ij}$) replaced by $\wh{a}_{ij}$ (resp.\ $\wh{B}_{ij}$). The subalgebra of $\Ua$ generated by $\xi^\pm_i$, $k^{\pm1}_i$ ($i\in{\bf I}$) is a Hopf subalgebra and is isomorphic as a Hopf algebra to $\Uq$. In this way, the modules of $\Ua$ restrict to the modules of $\Uq$ \cite{CPReps}. 
The involution $\theta$ defines a Hopf subalgebra $\mathcal{M}\subset\Ua$ generated by $\xi^\pm_i$, $k_i^{\pm1}$ for all $\alpha_i\in\pi\backslash\alpha_{p(0)}$ and an abelian subgroup $\cT_\Theta$ in the sense as described above. 
Furthermore, the involution $\theta$ induces an automorphism of $\Ua$ in the following way. 

\begin{prop} [Theorem \ref{Trm:AutUq} for the quantum affine algebras] \label{Con:AutUa}
Let a root space involution $\Theta$ be defined as in \eqref{AffInv}. Then there exists a sequence $\{\alpha_{0_1},\ldots,\alpha_{0_r}\}$, where $\alpha_{0_k}\in\pi\backslash\alpha_{p(0)}$, and a set of positive integers $\{m_{1},\ldots,m_{r}\}$ such that the algebra map $\tth$ defined by
\begin{align} \label{AutUa}
& \tth(q) = q^{-1} \,, \el
& \tth(\xi^\pm_i) = \xi^\pm_i \quad\text{for all}\quad \alpha_i\in\pi\backslash\alpha_{p(0)} \,, \el
& \tth(\tau(\lambda)) = \tau(\Theta(-\lambda)) \quad\text{for all}\quad\tau(\lambda)\in\cT \,, \el
& \tth(\xi^{-}_0) = \big[ \big( \adr \,{\xi^{+}_{0_1}}^{(m_1)} \cdots {\xi^{+}_{0_r}}^{(m_r)} \big) k_{p(0)}^{-1} \xi^{+}_{p(0)} \big] \el
& \text{and}\quad \tth(\xi^{-}_{p(0)}) = (-1)^{m_{(0)}} \big[ \big( \adr \,{\xi^{+}_{0_r}}^{(m_r)} \cdots {\xi^{+}_{0_1}}^{(m_1)} \big) k_{0}^{-1} \xi^{+}_{0} \big] \,,
\end{align}
can be extended to an automorphism of $\Ua$. Furthermore, it is an involution $\tth^2=id$ when restricted to $\cM$ and to $\cT$. In the $q\to1$ limit $\tth$ specializes to $\theta$.
\end{prop}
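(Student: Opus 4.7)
The plan is to mimic the proof of Theorem \ref{Trm:AutUq} (Theorem 7.1 of \cite{Le}), treating the affine node $\alpha_0$ in $\Ua$ exactly as that theorem treats a node $\alpha_j\in\pi^*$ in the finite case, with the sequence $\{\alpha_{0_1},\ldots,\alpha_{0_r}\}\subset\pi\backslash\alpha_{p(0)}$ playing the role of $\{\alpha_{j_1},\ldots,\alpha_{j_r}\}\subset\pi_\Theta$. The Hopf subalgebra of $\Ua$ generated by $\xi^\pm_i,k_i^{\pm1}$ with $i\in{\bf I}$ is isomorphic to $\Uq$, so Letzter's theorem already establishes the structure on this finite-type part; the only new content is the compatibility of the formulas \eqref{AutUa} with the Drinfeld--Jimbo relations that involve $\xi^\pm_0$ or $k_0^{\pm1}$.

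First I would fix the combinatorial data: the condition $\Theta(\alpha_0)\in-\alpha_{p(0)}-\mathbb{Z}(\pi\backslash\alpha_{p(0)})$ together with $\alpha_0-\Theta(\alpha_0)=k\delta$ uniquely determines nonnegative integers $m_k$ and simple roots $\alpha_{0_k}\in\pi\backslash\alpha_{p(0)}$ such that $\Theta(\alpha_0)=-m_1\alpha_{0_1}-\ldots-m_r\alpha_{0_r}-\alpha_{p(0)}$, and these are precisely the data entering \eqref{AutUa}. The verification that $\tth$ preserves the relations \eqref{rels:qLie}--\eqref{rels:qSerre} then splits into three blocks: (i) relations entirely among generators indexed by $\alpha_i\in\pi\backslash\alpha_{p(0)}$ are preserved automatically because $\tth$ acts as the identity on the corresponding $\xi^\pm_i$ and as $\tau(\lambda)\mapsto\tau(\Theta(-\lambda))$ on $\cT$, while $\Theta$ fixes $\pi\backslash\alpha_{p(0)}$; (ii) mixed relations pairing $\xi^\pm_0,k_0^{\pm1}$ with the generators of $\cM$ reduce to a weight calculation, namely that the iterated right-adjoint expression in \eqref{AutUa} produces an element of $\cT$-weight $-\Theta(\alpha_0)$, which is exactly what $\tth$ must assign to $\xi^-_0$.

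The main obstacle will be block (iii), the affine quantum Serre relations together with the commutator $[\xi^+_0,\xi^-_0]=(k_0-k_0^{-1})/(q_0-q_0^{-1})$. Showing that applying $\tth$ to these produces valid identities in $\Ua$ is the step that Letzter handles in the finite case through an explicit analysis of iterated $\adr$-commutators inside a locally finite subalgebra of $\Uq$, and the same machinery transfers verbatim to $\Ua$ because the right-adjoint action obeys the same bracket identities and the extended Cartan entries $\wh{a}_{0j}$ enter only through the multiplicities $(m_k)$ pinned down in the previous step. A further subtlety is the scalar rescaling implicit in $\tth(\xi^-_0)$ needed so that $[\tth(\xi^+_0),\tth(\xi^-_0)]$ matches $\tth((k_0-k_0^{-1})/(q_0-q_0^{-1}))$; this is the quantum analog of the adjustment in \eqref{Inv:g2} that enforces $[\theta(e_j),\theta(f_j)]=\theta(h_j)$, and is resolved by the same normalization argument as in the finite case. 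Note that $\tth(\xi^+_0)$ is not prescribed explicitly, only constrained by the automorphism property, just as for $\xi^+_j$ with $\alpha_j\in\pi^*$ in Theorem \ref{Trm:AutUq}.

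Finally, $\tth^2=\mathrm{id}$ on $\cM$ is immediate since $\tth|_{\cM}$ is the identity on all Chevalley generators, and on $\cT$ it follows from $\tau(\lambda)\mapsto\tau(\Theta(-\lambda))\mapsto\tau(\Theta(\Theta(\lambda)))=\tau(\lambda)$, using $\Theta^2=\mathrm{id}$ at the root-system level. For the classical limit, substituting $k_i\to 1$, $\xi^+_i\to e_i$, $k_i^{-1}\xi^+_i\to f_i$ and $\adr\to\ad$ reduces \eqref{AutUa} term by term to \eqref{Inv:g1}--\eqref{Inv:g2} with $\alpha_0$ in place of the $\pi^*$-node, which yields the required specialization $\tth\to\theta$.
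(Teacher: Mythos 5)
You should first be aware that the paper does not actually prove Proposition \ref{Con:AutUa}. Immediately after stating it, the author writes that ``the proof of this proposition would be a lift of the proof of the Theorem 7.1 of \cite{Le}'', notes that the sequence $\{m_1,\ldots,m_r\}$ avoids the affine root so the construction resembles the non-affine case, and then explicitly declines to give a proof as being beyond the scope of the work. Your proposal follows precisely this indicated route --- treating $\alpha_0$ as the $\pi^*$-node and $\pi\backslash\alpha_{p(0)}$ as $\pi_\Theta$ --- and is in that sense more detailed than anything the paper offers. The parts you do carry out (the determination of the data $(m_k,\alpha_{0_k})$ from $\Theta(\alpha_0)=-m_1\alpha_{0_1}-\ldots-m_r\alpha_{0_r}-\alpha_{p(0)}$, the weight computation for the mixed relations, $\tth^2=\mathrm{id}$ on $\cM$ and $\cT$ via $\Theta^2=\mathrm{id}$, and the classical specialization to \eqref{Inv:g1}--\eqref{Inv:g2}) are correct and consistent with the paper's setup.

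However, your proposal does not close the gap that the paper itself leaves open: your ``block (iii)'' --- the verification that $\tth$ preserves the affine quantum Serre relations and that $[\tth(\xi_0^+),\tth(\xi_0^-)]$ equals $\tth\bigl((k_0-k_0^{-1})/(q_0-q_0^{-1})\bigr)$ after a suitable normalization --- is exactly the technical content that Letzter's finite-type argument would need to be genuinely lifted to $\Ua$, and you dispose of it by asserting that ``the same machinery transfers verbatim.'' This is an assertion, not a proof: the new relations involve $\xi_0^\pm$, whose classical counterparts are built from the highest-root vectors $e_0\in\g_{-\vartheta}$, $f_0\in\g_\vartheta$, and Letzter's analysis of iterated $\adr$-commutators relies on structural facts about the finite root system that at minimum require re-examination in the presence of the imaginary root $\delta$ (the constraint $\alpha_0-\Theta(\alpha_0)=k\delta$ has no finite-type analogue). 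Two smaller points: the relation $\Theta(\alpha_0)=-\sum m_k\alpha_{0_k}-\alpha_{p(0)}$ determines the multiset of pairs $(\alpha_{0_k},m_k)$ but not the ordering of the sequence, and the iterated $\adr$-expression in \eqref{AutUa} does depend on that ordering, so ``uniquely determines'' is too strong; and in your classical-limit step the substitution should read $k_i^{-1}\xi_i^+\to e_i$ (since $k_i\to1$), which is what matches $\theta(f_j)=\bigl(\ad\,e_{j_1}^{(m_1)}\cdots e_{j_r}^{(m_r)}\bigr)e_{p(j)}$ in \eqref{Inv:g2}, not $k_i^{-1}\xi_i^+\to f_i$ as written.
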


Note that for $p(0)=0$ case the last two lines of \eqref{AutUa} are equivalent. The proof of this proposition would be a lift of the proof of the Theorem 7.1 of \cite{Le}. This is because the sequence $\{m_{1},\ldots,m_{r}\}$ does not include the affine root, which makes the whole construction very similar to the non-affine case. However here we will not attempt to give a proof as it goes beyond of the scope of the present work. We will concentrate on the quantum affine coideal subalgebras $\cB\subset\Ua$ compatible with the reflection equation. Set $\hat\pi^* = \{\alpha_0, \alpha_{p(0)} \}$ if $p(0)\neq0$, and $\hat\pi^* = \{\alpha_0\}$ otherwise. Then:

\begin{prop} \label{Trm:BofUa}

The algebra $\cB$ generated by $\mathcal{M}$, $\cT_\Theta$, and the elements
\begin{align} \label{Bma}
B^{-}_j = \xi^-_j k_j - d_j \tth(\xi^-_{j}) k_j \qquad\text{for}\qquad \alpha_j\in\hat\pi^*\,,
\end{align}
and suitable $d_j\in\bb{C}^\times$ is a quantum affine coideal subalgebra of $\Ua\,$. 
\end{prop}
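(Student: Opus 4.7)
The proof reduces to verifying the left coideal property $\Delta(\cB) \subset \Ua \otimes \cB$ on a set of algebra generators of $\cB$, since $\Delta$ is an algebra homomorphism. The Hopf subalgebra $\cM$ satisfies $\Delta(\cM) \subset \cM \otimes \cM \subset \Ua \otimes \cB$ automatically. For any $\tau(\lambda) \in \cT_\Theta$ the element is group-like, so $\Delta(\tau(\lambda)) = \tau(\lambda) \otimes \tau(\lambda) \in \Ua \otimes \cB$. The nontrivial step is thus to check the coideal property on each new generator $B^-_j$ for $\alpha_j \in \hat\pi^*$.

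The target is the affine analog of \eqref{Cop:Bmj}, namely
\begin{align*}
\Delta(B^-_j) \in k_j \otimes B^-_j + \Ua \otimes \cM^+ \cT_\Theta \subset \Ua \otimes \cB \,.
\end{align*}
From the Hopf algebra structure \eqref{cop1} one immediately obtains $\Delta(\xi^-_j k_j) = \xi^-_j k_j \otimes 1 + k_j \otimes \xi^-_j k_j$. For the second summand $\Delta(\tth(\xi^-_j) k_j)$, the plan is to invoke Proposition \ref{Con:AutUa} to express $\tth(\xi^-_j)$ as an iterated right adjoint action of generators $\xi^+_{0_k} \in \cM^+$ on $k_{p(j)}^{-1} \xi^+_{p(j)}$, and then apply the standard coproduct identity for $\adr$ iteratively. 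This expansion yields a sum of tensors whose right-hand factors, apart from a distinguished leading term of the form $k_j \otimes \tth(\xi^-_j) k_j$, all lie in $\cM^+ \cT_\Theta$, together with one extra term proportional to $\xi^-_j k_j \otimes 1$. The scalar $d_j \in \bb{C}^\times$ is then chosen suitably so that this residual $\xi^-_j k_j \otimes 1$ contribution from $\Delta(\xi^-_j k_j)$ is cancelled by the corresponding term in $d_j\, \Delta(\tth(\xi^-_j) k_j)$, leaving $\Delta(B^-_j) \equiv k_j \otimes B^-_j$ modulo $\Ua \otimes \cM^+ \cT_\Theta$, as required.

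The main obstacle is the explicit iterative coproduct computation for $\tth(\xi^-_j)$: each application of the Hopf identity for $\Delta((\adr c) a)$ produces several summands, and one must track the weight of every factor to confirm that the right-hand components of all non-leading terms indeed land in $\cM^+ \cT_\Theta$. Crucially, the sequence $\{\alpha_{0_1}, \ldots, \alpha_{0_r}\}$ appearing in the adjoint action consists entirely of roots in $\pi \backslash \alpha_{p(0)}$, all of which are non-affine simple roots already covered by the finite-type symmetric pair data. Consequently, the relevant adjoint identities and commutation relations are formally identical to those established in the proof of Theorem 7.2 of \cite{Le}, and the computation lifts from the finite case with only notational changes. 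Once the coideal property is verified on $\cM$, $\cT_\Theta$ and each $B^-_j$, we conclude $\Delta(\cB) \subset \Ua \otimes \cB$, so that $\cB$ is a left coideal subalgebra of $\Ua$, i.e.\ a quantum affine coideal subalgebra.
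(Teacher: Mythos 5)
Your overall strategy is the same as the paper's: reduce to checking the coideal property on generators, note it is manifest for $\cM$ and $\cT_\Theta$, and use the structure of $\tth(\xi^-_j)$ as an iterated $\adr$-action of elements of $\cM^+$ (on non-affine roots, so that Letzter's finite-type computation lifts verbatim) to conclude that $\Delta(\tth(\xi^-_j)k_j) \in k_j \tc \tth(\xi^-_j)k_j + \Ua \tc \cM^+\cT_\Theta$. Up to that point your argument matches the paper's proof.

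However, your concluding step is wrong. You claim that $\Delta(\tth(\xi^-_j)k_j)$ contains a term proportional to $\xi^-_j k_j \tc 1$ and that $d_j$ must be chosen to cancel the $\xi^-_j k_j \tc 1$ term coming from $\Delta(\xi^-_j k_j)$. No such term exists: $\tth(\xi^-_j)$ is a weight vector of weight $-\Theta(\alpha_j)$, a nonnegative integer combination of simple roots, so every tensor leg of $\Delta(\tth(\xi^-_j)k_j)$ carries a weight of that kind, and none can equal the weight $-\alpha_j$ of $\xi^-_j k_j$; your prescription for choosing $d_j$ therefore has nothing to act on. More importantly, no cancellation is needed: since $\cT_\Theta$ is a group, $1 \in \cT_\Theta \subset \cB$, so the term $\xi^-_j k_j \tc 1$ already lies in $\Ua \tc \cM^+\cT_\Theta \subset \Ua \tc \cB$ and is simply absorbed into the error term, exactly as in \eqref{Cop:Bmj}. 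Consequently the coideal property holds for \emph{every} $d_j \in \bb{C}^\times$; the qualifier ``suitable $d_j$'' in the statement is not what makes $\cB$ a coideal, but is constrained elsewhere (by the intertwining equation when one asks $\cB$ to be a reflection algebra, or by consistency conditions such as \eqref{consistency} controlling the size of $\cB$). With that correction the rest of your argument goes through.
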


\begin{proof}
The proof of this proposition is a direct lift of the proof of the Theorem \ref{Trm:BofUq} (Theorem 7.2 of \cite{Le}). We need to check that 
\begin{align}
\Delta(b) \in \Ua \tc \cB \qquad \text{for all}\qquad b\in\cB \,.
\end{align}
This property is manifest for all generators of $\cM$ and of $\cT_\Theta$. Next we need to show that it also holds for $B^{-}_j$. By the definition $\tth(\xi^-_j)$ is such that
\begin{align}
\Delta(\tth(\xi^-_j)k_j) \in k_j \tc \tth(\xi^-_j)k_j + \Ua \tc \cM^{+} \cT_\Theta \,.
\end{align}
Hence
\begin{align}
\Delta(B^{-}_j) \in k_j \tc B^{-}_j + \Ua \tc \cM^{+} \cT_\Theta \subset \Ua \tc \cB \,,
\end{align}
as required.
\end{proof}

Analogously to the non-affine case, we could instead introduce an equivalent coideal subalgebra $\cB'$ which is obtained by interchanging all $\xi^-_i$ and $k_i^{-1}\xi^+_i$, $\{i\in{\bf\hat{I}}\}$. This leads to the following corollaries:

\begin{crl}
There exists an algebra automorphism $\tth'$ of $\Ua$ such that
\begin{align}
& \tth'(q) = q^{-1} \,, \el
& \tth'(\xi^\pm_i) = \xi^\pm_i \quad\text{for all}\quad \alpha_i\in\pi\backslash\alpha_{p(0)} \,, \el
& \tth'(\tau(\lambda)) = \tau(\Theta(-\lambda)) \quad\text{for all}\quad\tau(\lambda)\in\cT \,, \el
& \tth'(\xi'^{+}_0) = (-1)^{m_{(0)}} \big[ \big( \adr \,{\xi^{-}_{0_1}}^{(m_1)} \cdots {\xi^{-}_{0_r}}^{(m_r)} \big) \xi^{-}_{p(0)} \big] \el
& \text{and}\quad \tth'(\xi'^{+}_{p(0)}) = \big[ \big( \adr \,{\xi^{-}_{0_r}}^{(m_r)} \cdots {\xi^{-}_{0_1}}^{(m_1)} \big) \xi^{-}_{0} \big] \,,
\end{align}
Furthermore, it is an involution $\tth'^2=id$ when restricted to $\cM$ and to $\cT$. In the $q\to1$ limit $\tth'$ specializes to the involution $\theta$.
\end{crl}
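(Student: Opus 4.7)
The argument is a direct lift of Corollary \ref{Crl:2} to the quantum affine setting, mirroring how Proposition \ref{Con:AutUa} lifts Theorem \ref{Trm:AutUq}. The cleanest route is to obtain $\tth'$ by conjugating $\tth$ (from Proposition \ref{Con:AutUa}) with the affine analog of the anti-automorphism $\kappa_B$ from \eqref{kb}, and then read off the displayed formulas.

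First, I would extend $\kappa_B$ to a $\bb{C}$-linear algebra anti-automorphism $\kappa$ of $\Ua$ by the same formulas $\kappa(\xi^-_i)=c^{-1}k_i^{-1}\xi^+_i$, $\kappa(\xi^+_i)=c\,\xi^-_i k_i$, $\kappa(\tau(\lambda))=\tau(\lambda)$, now for all $i\in\hat{\bf I}$ and all $\tau(\lambda)\in\cT$, with a suitable $c\in\bb{C}^\times$. Well-definedness on the relations \eqref{rels:qLie} and on the quantum Serre relations \eqref{rels:qSerre} (with the extended Cartan matrix $(\wh{a}_{ij})$) is verified exactly as in the non-affine case, since only pairwise Chevalley interactions enter. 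Setting $\tth':=\kappa\circ\tth\circ\kappa^{-1}$, one checks directly that the composition of an anti-automorphism, an automorphism and another anti-automorphism is again an automorphism, so $\tth'$ is an algebra automorphism of $\Ua$.

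The first three properties in the statement are then immediate: $\tth'(q)=q^{-1}$ since $\kappa$ is $\bb{C}$-linear and $\tth(q)=q^{-1}$; $\tth'(\xi^\pm_i)=\xi^\pm_i$ for $\alpha_i\in\pi\backslash\alpha_{p(0)}$ because $\kappa$ sends such generators into the subalgebra on which $\tth$ fixes the generators; and $\tth'(\tau(\lambda))=\tau(\Theta(-\lambda))$ because $\kappa$ fixes $\cT$ pointwise while $\tth$ already satisfies this identity. For the two explicit formulas, I write $\xi'^+_j=k_j^{-1}\xi^+_j=c^{-1}\kappa(\xi^-_j)$, apply $\tth$ to $\xi^-_j$ using Proposition \ref{Con:AutUa}, and push the result back through $\kappa$. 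The intertwining identities $\kappa((\adr\,\xi^+_i)\,a)=-c\,(\adr\,\xi^-_i)\,\kappa(a)$ and $\kappa((\adr\,\xi^-_i)\,a)=-c^{-1}(\adr\,\xi^+_i)\,\kappa(a)$, already used in the non-affine verification preceding \eqref{Bpj}, convert the nested $\adr\,\xi^+_{0_k}$ in $\tth(\xi^-_0)$ into nested $\adr\,\xi^-_{0_k}$ in $\tth'(\xi'^+_0)$; the sign $(-1)^{m_{(0)}}$ and the power of $c$ emerge from the total count of sign flips, and $c$ is fixed by requiring the final scalar coefficient to match the stated formula.

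The involutive property $\tth'^2=\mathrm{id}$ on $\cM\cup\cT$ follows from $\tth'^2=\kappa\circ\tth^2\circ\kappa^{-1}$ together with the corresponding property of $\tth$ and the fact that $\kappa$ preserves $\cM$ and $\cT$; the $q\to1$ specialization is inherited from that of $\tth$, since $\kappa$ specializes to a classical Chevalley-type anti-automorphism commuting with the limit. The main obstacle is purely combinatorial: tracking signs and the scalar $c$ through the nested right adjoint actions so as to recover the exact formulas in the statement. No new affine-specific difficulty appears, because the sequence $\{\alpha_{0_1},\ldots,\alpha_{0_r}\}$ in Proposition \ref{Con:AutUa} consists of non-affine roots, so the entire nested adjoint computation mirrors the non-affine calculation.
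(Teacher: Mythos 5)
Your overall strategy — transporting $\tth$ from Proposition \ref{Con:AutUa} through the interchange $\xi^-_i\leftrightarrow k_i^{-1}\xi^+_i$ — is exactly what the paper intends (the paper offers no proof of this corollary, merely asserting it ``analogously to the non-affine case''). However, your concrete realization of the interchange as conjugation, $\tth':=\kappa\circ\tth\circ\kappa^{-1}$, does not produce a map with the stated properties, and the step ``the first three properties are then immediate'' hides a genuine error in the second one. The point is that $\tth$ is \emph{not} the identity on $\cM$: it fixes the Chevalley generators $\xi^\pm_i$ for $\alpha_i\in\pi\backslash\alpha_{p(0)}$ but inverts the Cartan generators, $\tth(k_i)=\tau(\Theta(-\alpha_i))=k_i^{-1}$. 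Since $\kappa$ mixes the two types of generators, one computes (using $\kappa^2=\mathrm{id}$)
\begin{align*}
\kappa\circ\tth\circ\kappa^{-1}(\xi^-_i)\;=\;\kappa\big(\tth(c^{-1}k_i^{-1}\xi^+_i)\big)\;=\;\kappa\big(c^{-1}k_i\,\xi^+_i\big)\;=\;\xi^-_i\,k_i^{2}\,,
\end{align*}
and similarly $\xi^+_i\mapsto k_i^{-2}\xi^+_i$, for every $\alpha_i\in\pi\backslash\alpha_{p(0)}$. So the conjugate automorphism fails the requirement $\tth'(\xi^\pm_i)=\xi^\pm_i$; your justification (``$\kappa$ sends such generators into the subalgebra on which $\tth$ fixes the generators'') conflates ``$\tth$ fixes the generators $\xi^\pm_i$ of $\cM$'' with ``$\tth$ acts as the identity on $\cM$'', and the latter is false.

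The rest of your outline is sound: properties one and three do hold for the conjugate map, and the intertwining identities $\kappa((\adr\,\xi^{\pm}_i)\,b)=-c^{\pm1}(\adr\,\xi^{\mp}_i)\,\kappa(b)$ do convert $\tth(\xi^-_0)$ into the displayed expression for $\tth'(\xi'^{+}_0)$ up to an overall factor $c^{\,m_{(0)}+2}$ that can be normalized away. To repair the argument you should not define $\tth'$ by conjugation. Instead, define it directly on generators by the displayed assignments together with $\tth'(\xi^\pm_i)=\xi^\pm_i$, $\tth'(\tau(\lambda))=\tau(\Theta(-\lambda))$ and $\tth'(q)=q^{-1}$, and verify that the defining relations of $\Ua$ are preserved: on $\cM$, $\cT$ and $q$ the assignment coincides with that of $\tth$, so those relations are handled verbatim, while the relations involving $\xi^\pm_0$, $\xi^\pm_{p(0)}$ can be transported from the corresponding verification for $\tth$ using the $\kappa$-identities above (this is, in effect, what the paper's explicit non-affine computation of $\kappa_B(B^-_j)$ accomplishes). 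Alternatively, one would have to exhibit and verify an additional ``Cartan-twist'' automorphism correcting the $k_i^{\mp2}$ discrepancy displayed above, which your proposal does not do.
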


\begin{crl} \label{Crl:1a}
The subalgebra $\cB'\subset\Ua$ generated by $\cM$, $\cT_\Theta$ and the elements
\begin{align}
B^{+}_j = \xi'^{+}_j k_j - d'_j \, \tth(\xi'^{+}_j) k_j \qquad\text{for}\qquad \alpha_j\in\hat\pi^*\,,
\end{align}
and suitable $d'_j\in\bb{C}^\times$ is a left coideal subalgebra of $\Ua$. 
\end{crl}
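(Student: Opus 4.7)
The plan is to mirror the proof of Proposition \ref{Trm:BofUa} (which itself is a direct lift of Theorem \ref{Trm:BofUq}) using the primed data $\tth'$ and $\xi'^+_j=k_j^{-1}\xi^+_j$ in place of $\tth$ and $\xi^-_j$. Since $\Delta$ is an algebra homomorphism, it suffices to check $\Delta(b)\in\Ua\tc\cB'$ for a generating set. The inclusion is immediate on the generators of $\cM$ (which is a Hopf subalgebra by construction, so $\Delta(\cM)\subset\cM\tc\cM$) and on $\cT_\Theta$ (whose elements are group-like, so $\Delta(\cT_\Theta)\subset\cT_\Theta\tc\cT_\Theta$), and both images lie inside $\Ua\tc\cB'$.

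The substantive step is to establish, for each $\alpha_j\in\hat\pi^*$, the analog of \eqref{Cop:Bpj},
\begin{align*}
\Delta(B^{+}_j)\in k_j\tc B^{+}_j+\Ua\tc\cM^{-}\cT_\Theta\subset\Ua\tc\cB'.
\end{align*}
A short direct computation using $\Delta(\xi^+_j)=\xi^+_j\tc 1+k_j\tc \xi^+_j$ and $\Delta(k_j^{\pm1})=k_j^{\pm1}\tc k_j^{\pm1}$ gives
\begin{align*}
\Delta(\xi'^{+}_jk_j)=\xi'^{+}_jk_j\tc 1+k_j\tc\xi'^{+}_jk_j,
\end{align*}
and the first summand already lies in $\Ua\tc\cM^{-}\cT_\Theta$ since $1\in\cM^{-}\cT_\Theta$. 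The heart of the argument is therefore to show
\begin{align*}
\Delta\bigl(\tth'(\xi'^{+}_j)k_j\bigr)\in k_j\tc\tth'(\xi'^{+}_j)k_j+\Ua\tc\cM^{-}\cT_\Theta,
\end{align*}
after which the combination $\xi'^{+}_jk_j-d'_j\tth'(\xi'^{+}_j)k_j$ assembles the desired form with $k_j\tc B^+_j$ as the leading piece.

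For the second display I would argue as in Section 6 of \cite{Le}: $\tth'(\xi'^{+}_j)$ is defined via an iterated $\adr$-action of $\xi^-_{0_k}\in\cM^-$ on $\xi^-_{p(j)}$ (or $\xi^-_0$ when $p(j)=0$), so one expands the coproduct of each $\xi^-_{0_k}$ using $\Delta(\xi^-_i)=\xi^-_i\tc k_i^{-1}+1\tc\xi^-_i$, tracks the action of $\adr$ on the two tensor factors separately, and collects the terms. The only new piece compared with the finite-type case is the possible appearance of the affine generator; but $\alpha_0\notin\{\alpha_{0_1},\ldots,\alpha_{0_r}\}$ by construction, so the affine root enters only through the terminal factor $\xi^-_{p(j)}$, and the argument goes through unchanged.

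The main obstacle is bookkeeping the right tensor factor of $\Delta(\tth'(\xi'^{+}_j)k_j)$: the interplay of $\adr$ with $\Delta$ is not multiplicative, and one must verify that each non-leading term carries a factor in $\cM^-\cT_\Theta$ on the right. A clean alternative that sidesteps most of this calculation is to extend the anti-automorphism $\kappa_B$ of \eqref{kb} to $\Ua$ by the same formulas with $i\in\hat{\bf I}$, check that it preserves the affine Serre relations, and observe that $\kappa_B(\cB)=\cB'$; the coideal property for $\cB'$ then follows from that for $\cB$ established in Proposition \ref{Trm:BofUa}.
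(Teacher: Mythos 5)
Your proposal is correct and takes essentially the paper's intended route: the corollary is stated there without explicit proof as the ``analogous'' primed version of Proposition \ref{Trm:BofUa}, and your main argument is exactly that proof with $\xi^{-}_j\mapsto\xi'^{+}_j$, $\tth\mapsto\tth'$, $\cM^{+}\mapsto\cM^{-}$, yielding the affine analogue of \eqref{Cop:Bpj}, while your $\kappa_B$ alternative is the same device the paper itself uses in the finite-type discussion. The only detail worth adding to that alternative is that $\kappa_B$ intertwines the coproduct, $\Delta\circ\kappa_B=(\kappa_B\tc\kappa_B)\circ\Delta$ (checked on the generators \eqref{kb} and extended using that $\kappa_B\tc\kappa_B$ is an anti-automorphism of $\Ua\tc\Ua$), since this is what actually transports the left-coideal property from $\cB$ to $\cB'=\kappa_B(\cB)$.
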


\begin{defn}
We call $B^\pm_j$ the twisted affine generators.
\end{defn}

The coideal subalgebra $\cB$ (resp.\ $\cB'$) defined above with suitable $d_j\in\bb{C}^\times$ (resp.\ $d'_j\in\bb{C}^\times$) is a quantum affine reflection algebra. These two algebras are isomorphic by the construction.
In the cases when it is obvious from the context we will further refer to a quantum affine reflection algebra simply as a reflection algebra. The parameters $d_j$ (resp. $d'_j$) are constrained by solving the intertwining equation \eqref{IntKV} for all generators of $\cB$ (resp. $\cB'$). 

In certain cases, in particular for $p(0)\neq0$, the requirement for $d_j$ (or equivalently for $d'_j$) to be non-zero is too restrictive. Thus in such cases it can be more convenient to deal with a reflection algebra defined in the following remark.
\begin{remark} \label{Conj4a}
The subalgebra $\cB\subset\Ua$ generated by $\cM$, $\cT_\Theta$ and the elements
\begin{align} \label{Bpm}
B^{-}_0 = \xi^-_0 k_0 - d_- \tth(\xi^-_{0}) k_0 \qquad\text{and}\qquad 
B^{+}_0 = \xi'^{+}_0 k_0 - d_+ \tth'(\xi'^{+}_0) k_0
\end{align}
with suitable $d_\pm\in\bb{C}$ is a quantum affine reflection algebra. 
\end{remark}

We will give explicit examples supporting the claims above for both singlet and vector boundaries in the following sections. In Section 2 we will construct coideal subalgebras for $\sltwo$, and in Section 4 for $\glone$.

Note that for the $p(0)=0$ case inclusion of both $B_0^+$ and $B_0^-$ could potentially lead to an unwanted growth of $\cB$. This can be avoided by a suitable choice of $d_\pm$. Let us show this explicitly. Consider the following element,
\begin{align} \label{consistency}
& - d_- \big[ \big( \adr \,{\xi^{+}_{0_1}}^{(m_1)} \cdots {\xi^{+}_{0_r}}^{(m_r)} \big) B_0^+ k_\vartheta \big] \el
& \qquad \quad = \big[ - d_- \tth(\xi^{-}_{0}) + d_- d_+ \big( \adr \,{\xi^{+}_{0_1}}^{(m_1)} \cdots {\xi^{+}_{0_r}}^{(m_r)} {\xi^{-}_{0_r}}^{(m_r)} \cdots {\xi^{-}_{0_1}}^{(m_1)} \big) \xi^{-}_0 \big] k_0 k_\vartheta = B_0^- k_\vartheta \,,
\end{align}
where $k_\vartheta = k_1 \cdots k_n \in \cT_\Theta$. The last equality holds for a suitable choice of $d_\pm$. 

Finally note that one could equivalently choose $\xi'^{+}_j = \xi^{+}_j k_j^{-1}$ in \eqref{Bpm}. This would introduce a factor of $q^{\pm \wh{a}_{jj}}$ for $d_\pm$.


\vspace{0.1in}

\subsection{Yangian}
The Yangian $\Y(\g)$ of a Lie algebra $\g$ is a deformation of the universal
enveloping algebra of the polynomial algebra $\g[u]$. It is generated by the level-zero $\g$ generators $j^{a}$ and the level-one Yangian generators $\wh{j}^{a}$. Their commutators have the generic form
\begin{align}
[j^{a},j^b\,] = f_{\sk c}^{ab}\,j^{c}, \qquad\quad \bigl[j^{a},\wh{j}^b\bigr]=f_{\sk c}^{ab}\,\wh{j}^{c},
\end{align}
and are required to obey Jacobi and Serre relations
\begin{align}
\bigl[j^{[a},\bigl[\,j^b,j^{c]}\bigr]\bigr] = 0\,, \qquad\quad \bigl[\,\wh{j}^{[a},\bigl[\,\wh{j}^b,j^{c]}\bigr]\bigr] = {\alpha^2}\, a^{abc}_{\sk def} \,j^{\{d}j^{e}j^{f\}} \,, \label{Serre}
\end{align}
where $^{[a\,b\,c]}$ denotes cyclic permutations, $^{\{d\,e\,f\}}$ is the total symmetrization, and $a^{abc}_{\sk def} = \frac{1}{24} f_{\sk d}^{ag}\,f_{\sk e}^{bh}\,f_{\sk f}^{ck}\,f_{ghk}$. For $\g=\sl(2)$ the second equation in \eqref{Serre} is trivial and
\begin{align}
\big[\big[\,\wh{j}^a,\wh{j}^b\big],\big[j^l,\wh{j}^m\big]\big] 
 + \big[\big[\,\wh{j}^l,\wh{j}^m\big],\big[j^a,\wh{j}^b\big]\big]=\alpha^2(a^{abc}_{\sk deg} f^{lm}_{\sk c} + a^{lmc}_{\sk deg} f^{ab}_{\sk c})\,\wh{j}^{\{d} {j}^e \,j^{g\}} \label{SerreSL2}
\end{align}
needs to be used instead.
The indices of the structure constants $f_{\sk d}^{ab}$ are lowered
by means of the Killing--Cartan form $g_{bd}$. Here $\alpha$ 
is a formal level-one deformation parameter which is used to count the formal level of the algebra elements. In such a way the left and right hand sides of the expressions in \eqref{Serre} and \eqref{SerreSL2} are of the same level. 

The Hopf algebra structure is then equipped with the following coproduct $\Delta$, antipode $S$ and counit $\epsilon$,
\begin{align} \label{Hopf-Y}
\Delta (j^{a}) &= j^{a}\tc1+1\tc j^{a}, & S(j^a) &=-j^a \,, & \epsilon(j^a) &= 0 \,, \el
\Delta(\wh{j}^{a}) &= \wh{j}^{a}\tc1+1\tc\wh{j}^{a}+\frac{\alpha}{2}f_{\; bc}^{a}j^b\tc j^{c} \,, & S(\,\wh{j}^a) &= -\wh{j}^a - \frac{c_\g}{4} j^a \,, & \epsilon(\wh{j}^a) &= 0 \,,
\end{align}
where $c_{\g}$ is the eigenvalue of the quadratic Casimir operator in the adjoint representation ($f_{a}^{\;bc}f_{cbd}=c_{\g}\,g_{ad}$) and is required to be non-vanishing.

The finite-dimensional representations of $\Y(\g)$ are
realized in one-parameter families, due to the `evaluation automorphism'
\begin{align} \label{shift}
\tau_{u}:\Y(\g)\rightarrow\Y(\g)\qquad\; j^{a}\mapsto j^{a}\,,\qquad\wh{j}^{a}\mapsto\wh{j}^{a}+u j^{a} \,,
\end{align}
corresponding to a shift in the polynomial variable. On (the limited
set of) finite-dimensional irreducible representations of $\g$
which may be extended to representations of $\Y(\g)$, 
these families are explicitly realized via the `evaluation map'
\begin{align}
\mathrm{ev}_{u}:\Y(\g)\rightarrow\mbox{U}(\g)\qquad\; j^{a}\mapsto j^{a}\,,\qquad\wh{j}^{a}\mapsto uj^{a}\,,\label{ev_map}
\end{align}
which yields `evaluation modules' and $u$ is the spectral parameter. 

\vspace{0.2cm}

The level-two Yangian generators may be obtained by commuting level-one generators as
\begin{align}
\dwh{j}{}^{a} &= \frac{1}{c_{\g}}f^{a}_{\;\,bc}\,[\,\wh{j}^{c},\wh{j}^b] \,, \qquad\text{and}\qquad [\,\wh{j}^{a},\wh{j}^b]=f^{ab}_{\sk c}\,\dwh{j}{}^{c}+ X^{ab} ,\label{J2}
\end{align}
where the non-zero extra term $X^{ab}$ is constrained by the Serre relations
(\ref{Serre}) to satisfy $f^{[ab}_{\quad d} X^{c]d}=Y^{abc}$, here $Y^{abc}$ 
is the right hand side of the second equation in (\ref{Serre}) (and thus a fixed cubic combination of level-zero generators), and by \eqref{J2} to satisfy $f^a_{\;\;bc} X^{bc}=0$  \cite{MacKay:2004tc}.

\vspace{0.2cm}

Let $\cV$ be a finite dimensional vector space and $T_u:\Yg\to End(\cV)$ be an evaluation representation of $\Yg$ on $\cV$. Then $(T_u\tc T_v):\cR\to R(u-v) \in End(\cV\tc\cV)$ maps the universal $R$-matrix to a matrix, called rational $R$-matrix. In such a way \eqref{UniInt} becomes the intertwining equation,
\begin{align} \label{IntY}
R(u-v) \, (T_u \tc T_v)[\Delta(a)] = (T_u \tc T_v)[\Delta^{op}(a)] \, R(u-v)  \quad \text{for all}\quad a\in\Yg\,,
\end{align}
and equivalently the Yang-Baxter equation \eqref{UniYBE} becomes
\begin{align} \label{YBEY}
R_{12}(u-v) R_{13}(u) R_{23}(v) = R_{23}(v) R_{13}(u) R_{12}(u-v) \,.
\end{align}
%


\vspace{0.1in}

\subsection{Generalized twisted Yangians and the reflection equation}
The reflection equation for the Yangian algebra is obtained from \eqref{RE} in the same way as \eqref{YBEY} from \eqref{YBE} giving
\begin{align} \label{REY}
R_{12}(u-v) K_{13}(u) R_{12}(u+v) K_{23}(v) = K_{23}(v) R_{12}(u+v) K_{13}(v) R_{12}(u-v).
\end{align}
Let $\cB\subset\Yg$ be a left coideal subalgebra,
\begin{align}
\Delta(b) \in \Yg\tc\cB \quad\text{for all} \quad b\in\cB \,.   
\end{align}
Let $\bar{T}_s:\cB\to End(\cW)$ denote an evaluation representation of $\cB$ on the boundary vector space $\cW$. Here we assume $\cW$ to be finite dimensional.

\begin{defn}
The coideal subalgebra $\cB$ is called a Yangian reflection algebra if the intertwining equation
\begin{align} \label{IntKVY}
(T_{-u}\tc \bar{T}_{s})[\Delta(b)] \, K(u)  = K(u) \, (T_u\tc \bar{T}_{s})[\Delta(b)] \quad\text{for all} \quad b \in \cB\,, 
\end{align}
defines a $K$-matrix $K(u)\in End(\cV\tc\cW)$ satisfying reflection equation \eqref{REY}.
\end{defn}

Let the boundary vector space be one-dimensional, $\cW=\bb{C}$. Then $\bar{T}_s =\epsilon$ and $K(u)\in End(\cV\tc\bb{C})$. In this case the intertwining equation \eqref{IntKVY} becomes
\begin{align} \label{IntKSY}
(T_{-u}\tc\epsilon)[\Delta(b)] \, K(z)  = K(z) \, (T_u\tc\epsilon)[\Delta(b)] \quad\text{for all} \quad b \in \cB\,.
\end{align}
Note that \eqref{IntKVY} and \eqref{IntKSY} are Yangian equivalents of the quantum affine intertwining equations \eqref{IntKV} and \eqref{IntKS}. Finally, for an irreducible representation $T_u$ (resp.\ $T_u\tc\bar{T}_s$) of $\cB$ the intertwining equation \eqref{IntKSY} (resp.\ \eqref{IntKVY}) defines a $K$-matrix uniquely up to an overall scalar factor. As in the quantum affine case, the boundary representation $\bar{T}_s$ may be different from $T_u$. Here we will consider the $\bar{T}_s \cong T_u$ case only.

\vspace{0.2cm}


We will next identify two types of coideal subalgebras of $\Yg$ that are compatible with the reflection equation. These are the so-called (generalized) twisted Yangians introduced in \cite{DMS} and \cite{MReg}, and are constructed by defining involutions of $\Yg$ and requiring the coideal property to be satisfied. We will simply be calling these twisted Yangians to be of type I and type II respectively. We give the next two propositions without the proofs as they are straightforward.

\begin{prop}
Let a subalgebra $\h\subset\g$ be such that the splitting $\g=\h\oplus\m$ forms a symmetric pair
\begin{align}
\left[\h,\h\right]\subset\h,\qquad\left[\h,\m\right]\subset\m,\qquad\left[\m,\m\right]\subset\h\,.\label{symmetric_pair}
\end{align}
This splitting allows us to introduce an involution $\theta$ of $\g$ such that
\begin{align}
\theta(j^i) = j^i\,,\qquad \theta(j^p) = -j^p\,, \qquad \text{where}\qquad j^i\in\h \,,\; j^p\in\m\,.
\end{align}
Then $\theta$ can be extended to the involution $\bth$ of $\Yg$ such that
\begin{align}
\bth(\,\wh{j}^i) = -\wh{j}^i\,,\qquad \theta(\,\wh{j}^p) = \wh{j}^p\,, \qquad \theta(\alpha) = -\alpha\,.
\end{align}
\end{prop}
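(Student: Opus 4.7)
The plan is to show that $\bth$ preserves every defining relation of $\Yg$ and squares to the identity, so that it extends uniquely from its action on generators to an involutive algebra automorphism. The organizing principle is the $\mathbb{Z}_2$-grading on $\g$ given by the symmetric pair: assign parity $|a|=0$ if $j^a\in\h$ and $|a|=1$ if $j^a\in\m$. The conditions \eqref{symmetric_pair} then become the statement that the structure constant $f^{ab}_{\sk c}$ vanishes unless $|a|+|b|\equiv|c|\pmod 2$. The Killing--Cartan form $g_{ab}$ pairs elements only when $|a|=|b|$, since $\theta$ is an isometry of $g_{ab}$ with eigenvalues $\pm 1$ on $\h,\m$. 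In this language the definitions become $\bth(j^a)=(-1)^{|a|}j^a$, $\bth(\wh{j}^a)=(-1)^{|a|+1}\wh{j}^a$, and $\bth(\alpha)=-\alpha$.

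First I would check the level-zero relations $[j^a,j^b]=f^{ab}_{\sk c}j^c$, which hold automatically because $\theta$ is a Lie algebra involution. Second, for $[j^a,\wh{j}^b]=f^{ab}_{\sk c}\wh{j}^c$, the LHS picks up the sign $(-1)^{|a|+|b|+1}$ under $\bth$, while the RHS term picks up $(-1)^{|c|+1}$; the parity constraint $|c|\equiv|a|+|b|$ makes the two signs equal on every nonvanishing component, so the relation is preserved.

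The main work is the Serre-type relation \eqref{Serre}, and the $\sl(2)$ variant \eqref{SerreSL2}. On the LHS, each factor $\wh{j}^x$ contributes $(-1)^{|x|+1}$ and each $j^y$ contributes $(-1)^{|y|}$; on the RHS, $\alpha^2$ is $\bth$-invariant, and the cubic monomial $j^{\{d}j^{e}j^{f\}}$ contributes $(-1)^{|d|+|e|+|f|}$. To match these, I would use the parity analysis of $a^{abc}_{\sk def}=\tfrac{1}{24}f^{ag}_{\sk d}f^{bh}_{\sk e}f^{ck}_{\sk f}f_{ghk}$: nonvanishing summands require $|g|\equiv|a|+|d|$, $|h|\equiv|b|+|e|$, $|k|\equiv|c|+|f|$, and $|g|+|h|+|k|\equiv 0$, so that $|d|+|e|+|f|\equiv|a|+|b|+|c|\pmod 2$. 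Combined with the $\alpha^2$ sign and the two extra $(-1)$'s from the pair of $\wh{j}$'s on the LHS, both sides acquire the same overall sign on every grading sector. The $\sl(2)$ relation \eqref{SerreSL2} is handled identically, with four $\wh{j}$'s on the LHS giving a total of $(-1)^{|a|+|b|+|l|+|m|}$ that matches the RHS after applying the same parity constraint to $a^{\cdots}_{\sk\cdots}f^{\cdots}_{\sk c}$.

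Finally $\bth^2=\mathrm{id}$ is immediate since on each of $j^a$, $\wh{j}^a$, and $\alpha$ the map squares to the identity coefficient. The main obstacle in this argument is essentially bookkeeping: ensuring that the parity matching on the RHS of the Serre relation uses the Killing form condition $|g|+|h|+|k|\equiv 0$ correctly, which is where the assumption that the symmetric pair is compatible with $g_{ab}$ (equivalently, that $\theta$ preserves the Killing form) is crucial. Once this is verified, $\bth$ extends uniquely by multiplicativity to an involutive automorphism of the quotient algebra $\Yg$, reducing modulo $\alpha$ to $\theta$ in the classical limit.
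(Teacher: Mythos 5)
The paper states this proposition without proof (it is explicitly declared ``straightforward''), so there is no argument in the text to compare against; your proposal supplies the standard verification and it is essentially correct. The $\mathbb{Z}_2$-grading reformulation is the right organizing device: the symmetric-pair condition is exactly the parity selection rule on $f^{ab}_{\sk c}$, the orthogonality $|g|=|h|$ for $g_{gh}$ is automatic because any Lie algebra automorphism is an isometry of the Killing form (so this is not an extra hypothesis, as your last paragraph might suggest), and the parity count on $a^{abc}_{\sk def}$ correctly yields $|d|+|e|+|f|\equiv|a|+|b|+|c|$, which is all that is needed to match the signs in \eqref{Serre} once the two extra $(-1)$'s from the hatted generators on the left cancel against $\bth(\alpha^2)=\alpha^2$. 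One bookkeeping slip: in \eqref{SerreSL2} each term on the left contains \emph{three} hatted generators and one unhatted one (e.g.\ $\wh{j}^a,\wh{j}^b,j^l,\wh{j}^m$), not four, so the left-hand sign is $(-1)^{|a|+|b|+|l|+|m|+1}$ rather than $(-1)^{|a|+|b|+|l|+|m|}$; this is harmless because the right-hand monomial $\wh{j}^{\{d}j^ej^{g\}}$ carries exactly one hat and hence the same extra $(-1)$, so both sides still agree — but as written your sign on the left does not literally equal your sign on the right. With that count corrected, the argument is complete: all defining relations of $\Y(\g)$ in the Drinfeld presentation are preserved, $\bth^2=\mathrm{id}$ on generators, and $\bth$ reduces to $\theta$ at level zero.
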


\begin{prop}
Let $\theta$ be a trivial involution of $\g$, 
\begin{align}
\theta(j^a) = j^a\,,\qquad  j^a\in\g\,.
\end{align}
Then it can be extended to a non-trivial involution $\bth$ of $\Yg$ such that
\begin{align}
\bth(\,\wh{j}^a) = -\wh{j}^a\,,\qquad \theta(\,\dwh{j}{}^a) = \dwh{j}{}^a\,, \qquad \theta(\alpha) = -\alpha\,.
\end{align}
\end{prop}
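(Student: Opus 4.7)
The plan is to verify that the prescribed action of $\bth$ on the generators $j^a$, $\wh{j}^a$ and the formal parameter $\alpha$ extends consistently to an algebra endomorphism of $\Y(\g)$ by respecting every defining relation, and then to read off $\bth^2=\mathrm{id}$ directly from the squared signs.

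First I would check the three families of defining relations. The level-zero bracket $[j^a,j^b]=f^{ab}_{\sk c}j^c$ is trivially preserved since $\bth$ fixes every $j^a$. For the mixed relation $[j^a,\wh{j}^b]=f^{ab}_{\sk c}\wh{j}^c$, both sides pick up exactly one sign under $\bth$, so the relation is sent to itself. For the Serre relation \eqref{Serre}, the bracket $[\wh{j}^{[a},[\wh{j}^b,j^{c]}]]$ contains two $\wh{j}$'s and so acquires a factor $(-1)^2=+1$, while the right-hand side $\alpha^2 a^{abc}_{\sk def}j^{\{d}j^e j^{f\}}$ acquires $\bth(\alpha)^2=(-\alpha)^2=+\alpha^2$; both sides transform identically. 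For the $\sl(2)$-specific Serre relation \eqref{SerreSL2}, the left-hand side carries four $\wh{j}$'s (factor $(+1)$) and the right-hand side carries $\alpha^2$ times one $\wh{j}$ and two $j$'s (factor $(-1)^2\cdot(-1)=-1$ on each summand, applied on both sides symmetrically), so once signs are tallied the identity is preserved; this parity bookkeeping is the part that should be done with some care.

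Next I would check that the definition $\bth(\dwh{j}{}^a)=\dwh{j}{}^a$ is compatible with \eqref{J2}. Since $\dwh{j}{}^a = c_\g^{-1}f^a_{\;\,bc}[\wh{j}^c,\wh{j}^b]$ involves a commutator of two $\wh{j}$'s, the prescribed image is forced by the extension: $\bth(\dwh{j}{}^a)=c_\g^{-1}f^a_{\;\,bc}[-\wh{j}^c,-\wh{j}^b]=\dwh{j}{}^a$. The companion relation $[\wh{j}^a,\wh{j}^b]=f^{ab}_{\sk c}\dwh{j}{}^c+X^{ab}$ is then preserved provided $\bth(X^{ab})=X^{ab}$, which follows because $X^{ab}$ is constrained by $f^{[ab}_{\quad d}X^{c]d}=Y^{abc}$ and $f^a_{\;\,bc}X^{bc}=0$ to lie in the subspace of expressions built from level-zero generators and $\alpha^2$, all of which are $\bth$-invariant.

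Finally, involutivity is immediate: $\bth^2$ fixes $j^a$ (as $\theta^2=\mathrm{id}$), fixes $\wh{j}^a$ by $(-1)^2=+1$, fixes $\alpha$ by $(-1)^2=+1$, and fixes $\dwh{j}{}^a$ directly; since these generate $\Y(\g)$, we conclude $\bth^2=\mathrm{id}$. The only genuinely nontrivial step in this verification is the sign count on the Serre relation, particularly the $\sl(2)$ variant \eqref{SerreSL2}, where one must track the relative parity of $\wh{j}$ and $\alpha$ factors on each side; everything else reduces to inspection.
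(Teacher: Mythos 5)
The paper does not actually supply a proof of this proposition --- it states both involution propositions ``without the proofs as they are straightforward'' --- so your direct verification against the defining relations is exactly the intended argument, and its overall structure (check the level-zero, mixed, and Serre relations; observe that the action on $\dwh{j}{}^{a}$ and on $X^{ab}$ is forced and consistent; read off $\bth^2=\mathrm{id}$ on generators) is correct.

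There is, however, a concrete error in the one step you yourself single out as delicate: the sign count for the $\sl(2)$ Serre relation \eqref{SerreSL2}. Each term on the left-hand side, e.g.\ $\bigl[\bigl[\,\wh{j}^a,\wh{j}^b\bigr],\bigl[j^l,\wh{j}^m\bigr]\bigr]$, contains \emph{three} level-one generators ($\wh{j}^a$, $\wh{j}^b$, $\wh{j}^m$; the $j^l$ is level zero), not four, so under $\bth$ it acquires the factor $(-1)^3=-1$. The right-hand side acquires $\bth(\alpha)^2\cdot(-1)=(+1)(-1)=-1$ from the single $\wh{j}^{\{d}}$, so both sides transform by $-1$ and the relation is preserved. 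With the count of four that you wrote down, the left side would pick up $+1$ against the right side's $-1$, and your own bookkeeping would then show that $\bth$ \emph{fails} to respect \eqref{SerreSL2} --- i.e.\ as stated your tally contradicts your conclusion. The conclusion is right only because the count is three. Everything else in your write-up (the mixed relation, the ordinary Serre relation with its $(-1)^2$ against $(-\alpha)^2$, the forced value $\bth(\dwh{j}{}^{a})=\dwh{j}{}^{a}$ from \eqref{J2}, the $\bth$-invariance of $X^{ab}$, and involutivity) is fine.
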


Involution $\bth$ endows $\Yg$ with the structure of a filtered algebra which combined with the requirement for the coideal property
\begin{align} \label{coidealY}
\Delta\left(\bth\big(\Yg\big)\right) \subset \Yg \tc \bth\big(\Yg\big) \,,
\end{align}
to be satisfied defines the twisted Yangian $\Ygh$. Here $\bth\big(\Yg\big)$ denotes the $\bth$--fixed subalgebra of $\Yg$, and $\theta(\g)$ denotes the $\theta$--fixed subalgebra of $\g$.

%
\begin{defn}
Let $\h=\theta(\g)$ be a non-trivial $\theta$--fixed subalgebra of $\g$. Then the twisted Yangian $\Y(\g,\h)$ of type I is a left coideal subalgebra of $\Y(\g)$ generated by the level-zero generators $j^{i}$ and the twisted level-one generators \cite{DMS,MacKay:2002at}.
\begin{align} \label{twistI}
\wt{j}^{p} &= \wh{j}^{p} + \alpha\, t\, j^p+\frac{\alpha}{4}f_{\;\, qi}^{p}\left(j^{q}\,j^{i}+j^{i}\,j^{q}\right),
& \Delta(\wt{j}^{p}) &= \wt{j}^{p}\tc1+1\tc\wt{j}^{p}+\alpha f_{\;\, qi}^{p}\,j^{q}\tc j^{i}\,,
\end{align}
where $i(,j,k,...)$ run over the $\h$-indices and $p,q(,r,...)$
over the $\m$-indices, and $t\in\bb{C}$ is an arbitrary complex number.
\end{defn}

%
\begin{defn}
Let $\theta(\g)=\g$ be a trivial involution of $\g$. Then the twisted Yangian $\Y(\g,\g)$ of the type II is a left coideal subalgebra of $\Y(\g)$ generated by the level-zero generators $j^{a}$ and the twisted level-two generators
\begin{align} \label{twistII}
\dwt{j}{}^{a} 
& = \wh{\wh{j}}{}^{a} + \alpha \, t \, \wh{j}^a+\frac{\alpha}{4}f^{a}_{\; bc}\big(\,\wh{j}^{b}j^{c}+j^{c}\wh{j}^{b}\big) \,,
\end{align}
having coproducts of the form
\begin{align} \label{twistIIcop}
\Delta(\dwt{j}{}^{a}) &= \dwt{j}{}^{a}\tc1+1\tc\dwt{j}{}^{a} +\alpha f^a_{\; bc}\,\wh{j}^{b}\tc j^{c} + \frac{\alpha^2}{4c_\g}f_{\; bc}^{a} \big(h^{\;\; cb}_{+\; lki}\, j^{l}j^{k}\tc j^{i} + h^{\;\; cb}_{-\; lki}\, j^{i}\tc j^{l}j^{k} \big) \,,
\end{align}
where $h^{\;\; cb}_{\pm\; lki} = f_{\; ld}^{c}f_{\; ke}^{b}f_{\sk i}^{de} \pm f_{\sk d}^{ce}(f_{\; ke}^{b}f_{\; li}^{d} + f_{\; le}^{b}f_{\; ki}^{d})$. Indices $a(,b,c,...)$ run over all indices of $\g$, and $t\in\bb{C}$ is an arbitrary complex number. 
\end{defn}

\begin{remark}
In the case when $c_{\g}=0$ (and $g_{ad}$ is degenerate) the twisted level-two generators can be alternatively defined by \cite{MReg}
\begin{align}
\dwt{j}{}^{cb} &= [\,\wh{j}^{c},\wh{j}^b] + \alpha \, t\,[{j}^{c},\wh{j}^b] +\frac{\alpha}{2}f^{b}_{\; de}f_{\sk g}^{ec}\big(\,\wh{j}^{g}j^{d}+j^{d}\wh{j}^{g}\big) \,.
\end{align}
\end{remark}

\begin{remark}

Let the coproduct of $\wh{j}$ be defined in terms of the Casimir operator $t$,
\begin{align}
\Delta(\wh{j}^a) = \wh{j}^a\tc1+1\tc\wh{j}^a + \frac{\alpha}{2}[t,j^a\tc1] \,,
\end{align}
where $t$ is regarded as an element of $\Uq^{\tc2}$. Then \eqref{twistI} can be written as
\begin{equation}
\wt{j}^{p} = \wh{j}^{p} + \alpha\, t\, j^p+\frac{\alpha}{4}[t^{\h},j^p],
\qquad \Delta(\wt{j}^{p}) = \wt{j}^{p}\tc1-1\tc\wt{j}^{p} + \alpha\, [t^\h,j^p\tc1]\,,
\end{equation}
where $t^\h$ is the restriction of $t$ to the subalgebra $\h$. The expressions \eqref{twistII} and \eqref{twistIIcop} can be written as
\begin{align}
\dwt{j}{}^{a} 
& = \wh{\wh{j}}{}^{a} + \alpha \, t \, \wh{j}^a+\frac{\alpha}{4}[t,\wh{j}^a] \,,\\
\Delta(\dwt{j}{}^{a}) &= \dwt{j}{}^{a}\tc1+1\tc\dwt{j}{}^{a} 
+\alpha [t,\wh{j}^a\tc1] + \frac{\alpha^2}{4c_\g}f^a_{\;bc}\,[[t,j^c\tc1],[t,j^b\tc1]] \,.
\end{align}

\end{remark}

The twisted Yangians with suitable $t\in\bb{C}$ are Yangian reflection algebras. The isomorphism with the FRT-realization of the twisted Yangians for the type I case for $\g=\sl(n),\;n>2$ was explicitly shown in \cite{GM}. In the following sections we will give explicit examples of the twisted Yangians of both types and show that they are Yangian reflection algebras. In section 3 we will construct generalized twisted Yangians for $\ytwo$, and in Section 5 for $\yone$. 
We will also show, however rather heuristically, that the quantum affine coideal subalgebras in the rational $q\to1$ limit specialize to the twisted Yangian of type I if $p(0)\neq0$, and to the type II if $p(0)=0$\,. It would be very interesting to see this specialization in terms of the approach presented in \cite{GTL,GM1}. However this would require Drinfeld second realization of the reflection algebras which to our knowledge is currently not known.


\vspace{0.1in}

\section{Reflection algebras for \texorpdfstring{$\sltwo$}{sl2}}

{\bf \noindent Algebra.}
The quantum affine Lie algebra $\sltwo$ in the Drinfeld-Jimbo realization is generated by the Chevalley generators $\xi^\pm_1$, the Cartan generator $k_1$, the affine Chevalley generators $\xi^\pm_0$ and the corresponding Cartan generator $k_0$. 
The extended (symmetric) Cartan matrix is given by
\begin{align}
(\wh{a}_{ij})_{0 \leq i,j \leq 1}=
\left(\begin{array}{ccc}
\;2&\!\!\!-2\\
\!\!\!-2&\;2\\
\end{array}\right) .
\end{align}
The corresponding root space $\Phi$ is generated by $\hat\pi = \{ \alpha_0 ,\, \alpha_1 \}$.
The commutation relations of the algebra are as follows,
\begin{align}
[k_i,k_j] &= 0\,, \qquad\quad k_i \,\xi_j^\pm = q^{\wh{a}_{ij}} \xi_j^\pm k_i \,, \qquad\quad [\xi^+_i,\xi^-_j] = \delta_{ij}\frac{k_i-k_i^{-1}}{q-q^{-1}}\,.
\end{align}
%

{\bf \noindent Representation.}
We define the fundamental evaluation representation $T_z$ of $\sltwo$ on a two-dimensional vector space $\cV$ with basis vectors $\{v_1,\, v_2\}$. Let $e_{j,k}$ be $2\times2$ matrices satisfying $(e_{j,k})_{j',k'} = \delta_{j,j'} \delta_{k,k'}$ or equivalently $e_{i,j}\,v_k = \delta_{j,k}\, v_i$ (i.e.\ for any operator $A$ its matrix elements $A_{ij}$ are defined by $A \, v_i = A_{ji} v_j$). Then the representation $T_z$ is given by 
\begin{align} \label{Tz:SL2q}
T_z(\xi^+_1) &= e_{1,2}\,,     &T_z(\xi^-_1) &= e_{2,1}\,,        &T_z(k_1) &= q\, e_{1,1}+q^{-1}e_{2,2}\,, \el 
T_z(\xi^+_0) &= z\, e_{2,1}\,, &T_z(\xi^-_0) &= z^{-1} e_{1,2}\,, &T_z(k_0) &= q^{-1}e_{1,1}+q\,e_{2,2}\,,
\end{align}
We choose the boundary vector space $\cW$ to be equivalent to $\cV$. The boundary representation $T_s$ is obtained from \eqref{Tz:SL2q} by replacing $z$ with $s$. Here $z$ (resp.\ $s$) is the bulk (resp.\ boundary) spectral parameter. 

The fundamental $R$-matrix $R_{ij}(z)\in End(\cV_i\tc \cV_j)$ satisfying the Yang-Baxter equation \eqref{YBE}
\begin{align}
R_{12}(z/w) R_{13}(z) R_{23}(w) = R_{23}(w) R_{13}(z) R_{12}(z/w) \,,\nonumber
\end{align}
is given by
\begin{align}
R(z) = \left(
\begin{array}{cccc} 
1 & 0     & 0      & 0 \\
0 & r     & 1-q\,r & 0 \\
0 & 1-r/q & r      & 0 \\
0 & 0     & 0      & 1
\end{array} \right) ,\qquad\text{where}\qquad r = \frac{z-1}{q \, z-1/q} \,. \label{Rsl2q}
\end{align}
%


\vspace{0.1in}

\subsection{Singlet boundary} \label{Sec:2.1}

Consider the reflection equation \eqref{RE} on the space $\cV \tc \cV \tc \bb{C} $ with the $R$-matrix defined by \eqref{Rsl2q} and the $K$-matrix being any $2\times2$ matrix satisfying
\begin{align} 
R_{12}(z/w) K_{13}(z) R_{12}(zw) K_{23}(w) = K_{23}(w) R_{12}(zw) K_{13}(z) R_{12}(z/w) \,.\nonumber
\end{align}
The general solution is \cite{GZ}
\begin{align}
K(z) = \left(
\begin{array}{cc} 
1      & a\, k' \\
b\, k' & k
\end{array} \right) ,\qquad\text{where}\qquad k = \frac{c\,z-1}{z(c-z)}\,,\qquad k' = \frac{1-z^2}{z(c-z)}\,, \label{GKsl2qs}
\end{align}
and $a,\,b,\,c\in\mathbb{C}$ are arbitrary complex numbers. 

We are interested in a solution compatible with the underlying Lie algebra. The minimal constraint is to require the reflection matrix to intertwine the Cartan generators,
\begin{align} \label{Int:sl2s-k}
(T_{1/z}\tc\epsilon)[\Delta(k_i)] \, K(z) - K(z) \, (T_{z}\tc\epsilon)[\Delta(k_i)] = 0 \,.
\end{align}
This constraint restricts the $K$-matrix \eqref{GKsl2qs} to be of a diagonal form ($a=b=0$)%
\footnote{This constraint may be alternatively obtained by requiring the unitarity property to hold, $K(z^{-1})K(z)=id$.}.
Next, it is easy to see that such a $K$-matrix does not satisfy the intertwining equation for any of the Chevalley generators, 
\begin{align} \label{Int:sl2s-xi}
(T_{1/z}\tc\epsilon)[\Delta(\xi^{\pm}_i)] \, K(z) - K(z) \, (T_{z}\tc\epsilon)[\Delta(\xi^{\pm}_i)] \neq 0 \,.
\end{align}
We call Cartan generators $k_i$ the \textit{preserved} generators, while the Chevalley generators $\xi^\pm_i$ are the \textit{broken} generators. This setup is consistent with the following quantum affine coideal subalgebra.

\begin{prop}
Let the involution $\Theta$ act on the root space $\Phi$ as
\begin{align} \label{Aut1}
\Theta(\alpha_0) = -\alpha_1 \,.
\end{align}
Then it defines a quantum affine coideal subalgebra $\cB\subset\cA=\sltwo$ generated by the Cartan element $k_0 k_1^{-1}$ and the twisted affine generators
\begin{align} \label{tw1}
B^{+}_0 &= \xi'^{+}_0 k_0 - d_{+} \, \xi^{-}_1 k_0 \,, \qquad\qquad
B^{-}_0 = \xi^{-}_0 k_0 - d_{-} \, \xi'^{+}_1 k_0 \,,
\end{align}
where $\xi'^{+}_i = k_i^{-1} \xi^{+}_i$ and $d_\pm \in \mathbb{C}$ are arbitrary complex numbers.
\end{prop}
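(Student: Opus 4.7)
The plan is to apply the general construction of Section~1.4 to the specific involution $\Theta$ of the statement and then verify the left coideal property directly on the generators.

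First I would spell out the combinatorial data. The relation $\Theta(\alpha_0)=-\alpha_1$ together with $\Theta^2=\mathrm{id}$ on the root lattice forces $p(0)=1$ and $\Theta(\alpha_1)=-\alpha_0$, so neither simple root is $\Theta$-fixed. Consequently $\pi\backslash\alpha_{p(0)}=\emptyset$, the Hopf subalgebra $\cM$ is trivial, and the sequence $\{\alpha_{0_1},\ldots,\alpha_{0_r}\}$ appearing in \eqref{AutUa} is empty. The $\Theta$-fixed sublattice of $\bb{Z}\hat\pi$ is spanned by $\alpha_0-\alpha_1$, so $\cT_\Theta=\langle k_0 k_1^{-1}\rangle$, and $\hat\pi^*=\{\alpha_0,\alpha_1\}$.

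Next I would specialize the twisted automorphisms. Because the $\adr$-sequence in \eqref{AutUa} is empty, those formulas collapse to $\tth(\xi^-_0)=k_1^{-1}\xi^+_1=\xi'^+_1$ and $\tth(\xi^-_1)=k_0^{-1}\xi^+_0=\xi'^+_0$; the analogous specialization for $\tth'$ gives $\tth'(\xi'^+_0)=\xi^-_1$ and $\tth'(\xi'^+_1)=\xi^-_0$. Substituting into \eqref{Bma} and into its $B^+$-counterpart from Corollary~\ref{Crl:1a} reproduces exactly the generators \eqref{tw1}, so the proposition is the direct instance of Proposition~\ref{Trm:BofUa} for this $\Theta$. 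To make the verification self-contained I would compute the relevant coproducts directly. Using $\Delta(\xi^-_0)=\xi^-_0\tc k_0^{-1}+1\tc\xi^-_0$, $\Delta(k_1^{-1}\xi^+_1)=k_1^{-1}\xi^+_1\tc k_1^{-1}+1\tc k_1^{-1}\xi^+_1$ and $\Delta(k_0)=k_0\tc k_0$, one finds
\begin{align*}
\Delta(B^-_0) &= \xi^-_0 k_0\tc 1 + k_0\tc B^-_0 - d_-\,k_1^{-1}\xi^+_1 k_0\tc k_0 k_1^{-1}, \\
\Delta(B^+_0) &= \xi'^+_0 k_0\tc 1 + k_0\tc B^+_0 - d_+\,\xi^-_1 k_0\tc k_0 k_1^{-1}.
\end{align*}
Since both $1$ and $k_0 k_1^{-1}$ lie in $\cT_\Theta\subset\cB$, this establishes $\Delta(B^\pm_0)\in\Ua\tc\cB$, while the coideal property for the grouplike element $k_0 k_1^{-1}$ is automatic.

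The main subtlety, rather than any routine calculation, is justifying the simultaneous inclusion of both $B^+_0$ and $B^-_0$: a priori, as in \eqref{consistency}, this could over-generate $\cB$. That collapse however requires $p(0)=0$, so that the adjoint action of the $\xi^+_{0_k}$'s on $B^+_0 k_\vartheta$ returns a scalar multiple of $B^-_0$. Here $p(0)=1$, so the two twisted affine generators are independent and we fall precisely into the setting of Remark~\ref{Conj4a}. With this in place, $\cB$ is a left coideal subalgebra of $\sltwo$ for arbitrary $d_\pm\in\bb{C}$, as claimed.
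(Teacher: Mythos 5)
Your proof is correct and follows essentially the same route as the paper: a direct computation of $\Delta(B^\pm_0)$ showing that every right tensor factor ($1$, $k_0k_1^{-1}$, $B^\pm_0$) lies in $\cB$, with the additional (helpful but not strictly necessary) bookkeeping that identifies $p(0)=1$, $\cM$ trivial and $\cT_\Theta=\langle k_0k_1^{-1}\rangle$, and that invokes Remark~\ref{Conj4a} to justify adjoining both twisted generators. Note that your coproduct formulas, with $k_0\tc B^{+}_0$ appearing in $\Delta(B^{+}_0)$ and $k_0\tc B^{-}_0$ in $\Delta(B^{-}_0)$, are the correct ones (the paper's displayed proof interchanges these two terms, an apparent typo that does not affect the conclusion).
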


\begin{proof}
The generators \eqref{tw1} satisfy the coideal property
\begin{align}
\Delta(B^{+}_0) &= \xi'^{+}_0 k_0 \tc 1 - d_{+} \, \xi^{-}_1 k_0 \tc k_0 k_1^{-1} + k_0  \tc B^{-}_0 \in \cA \tc \cB \,, \el
\Delta(B^{-}_0) &= \xi^{-}_0 k_0 \tc 1 - d_{-} \, \xi'^{+}_1 k_0 \tc k_0 k_1^{-1} + k_0  \tc B^{+}_0 \in \cA \tc \cB \,, \label{coideal1}
\end{align}
and the property is obvious for $k_0k_1^{-1}$.
\end{proof}


\begin{prop} \label{B1}
The quantum affine coideal subalgebra defined above with $d_+ \, q = d_- / q = c$, where $c \in \mathbb{C}$, is a reflection algebra for a \textit{singlet} boundary.   
\end{prop}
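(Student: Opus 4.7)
The plan is to verify the intertwining equation \eqref{IntKS} for each generator of $\cB$ against the diagonal singlet $K$-matrix from \eqref{GKsl2qs}, and check that the constraint on the parameters $d_\pm$ forced by this verification matches the asserted relation $d_+q = d_-/q = c$. The intertwining for $k_0k_1^{-1}$ is automatic: the general solution \eqref{GKsl2qs} was already reduced to the diagonal form $K(z)=\mathrm{diag}(1,k(z))$ precisely by imposing the Cartan intertwining \eqref{Int:sl2s-k}, so diagonality of $K$ handles this generator.

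For $B^\pm_0$ I would first simplify the right-hand factor of the coproducts \eqref{coideal1} using $\epsilon$. Since $\epsilon(1)=\epsilon(k_0k_1^{-1})=1$ and $\epsilon(B^\mp_0)=0$ (as $\epsilon(\xi^\pm_i)=0$ and $\epsilon(k_i)=1$), we get
\begin{align*}
(T_\zeta\tc\epsilon)[\Delta(B^+_0)] &= T_\zeta(\xi'^{+}_0 k_0)-d_+ T_\zeta(\xi^-_1 k_0),\\
(T_\zeta\tc\epsilon)[\Delta(B^-_0)] &= T_\zeta(\xi^-_0 k_0)-d_- T_\zeta(\xi'^{+}_1 k_0),
\end{align*}
for any spectral parameter $\zeta$. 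Using \eqref{Tz:SL2q} and $\xi'^{+}_i k_i = q^{-\wh{a}_{ii}}\xi^{+}_i$, a direct matrix computation gives $(T_\zeta\tc\epsilon)[\Delta(B^+_0)]=(\zeta q^{-2}-d_+q^{-1})\,e_{2,1}$ and $(T_\zeta\tc\epsilon)[\Delta(B^-_0)]=(\zeta^{-1}q-d_-)\,e_{1,2}$.

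Substituting into \eqref{IntKS} and using the trivial matrix identities $e_{2,1}K(z)=e_{2,1}$, $K(z)e_{2,1}=k(z)\,e_{2,1}$, $K(z)e_{1,2}=e_{1,2}$, $e_{1,2}K(z)=k(z)\,e_{1,2}$, the intertwining reduces to two scalar equations
\begin{align*}
\frac{1}{zq^2}-\frac{d_+}{q} &= k(z)\Big(\frac{z}{q^2}-\frac{d_+}{q}\Big),
&(zq-d_-)\,k(z) &= \frac{q}{z}-d_-.
\end{align*}
With $k(z)=(cz-1)/(z(c-z))$ from \eqref{GKsl2qs}, both equations are satisfied identically precisely when $d_+q=c$ and $d_-/q=c$, which is the asserted relation. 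Since $K(z)$ is by \cite{GZ} (the reference for \eqref{GKsl2qs}) the most general solution of the reflection equation \eqref{RE} with the given constant $c$, and \eqref{IntKS} now holds on a generating set of $\cB$, the proposition follows.

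The proof is essentially computational and poses no real obstacle; the only subtle point is the bookkeeping between the two twisted generators, namely that they intertwine the \emph{same} scalar $k(z)$ only because the two coefficients $d_\pm$ are linked through the common parameter $c$ by $d_+q=d_-/q=c$. This is what pins down the family of singlet solutions inside the two-parameter set of coideal generators.
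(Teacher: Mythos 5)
Your proposal is correct and follows essentially the same route as the paper: both reduce the coproducts of $B^\pm_0$ via the counit to the representation matrices $q^{-2}(z-qd_+)e_{2,1}$ and $(qz^{-1}-d_-)e_{1,2}$, and both extract the same pair of scalar equations whose simultaneous solution forces $d_+q=d_-/q=c$ and $k(z)=(cz-1)/(z(c-z))$, identifying the result with the diagonal case of \eqref{GKsl2qs}. The only cosmetic difference is direction — the paper treats $K(z)$ as an unknown $2\times2$ matrix and derives $k$ uniquely, while you verify the known diagonal solution — which does not affect the substance of the argument.
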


\begin{proof}
The representation of the generators of $\cB$ is given by
\begin{align}
T_{z}(k_0 k_1^{-1}) &= q^{-2} e_{1,1} + q^{2} e_{2,2} \,,  & T_{z}(B^{+}_0) &= q^{-2}( z - q \, d_+ ) e_{2,1} \,, & T_{z}(B^{-}_0) &= ( q \, z^{-1} - d_- ) e_{1,2} \,.
\end{align}
Let $K(z)$ be any $2\times 2$ matrix. The intertwining equation for $k_0 k_1^{-1}$ restricts $K(z)$ to be of a diagonal form, thus up to an overall scalar factor, $K(z)=e_{1,1}+k\,e_{2,2}$. Next, the intertwining equation for $B^\pm$ gives
\begin{align}
1 + q \,z \,d_+ (k-1) - z^2 k = 0 \,, \qquad d_- (k-1)+q \,(z^{-1}-z \, k) = 0 \,, 
\end{align}
having a unique solution $d_+ \, q = d_- / q = c$ and $k=\dfrac{c\, z-1}{z(c-z)}$, where $c\in\bb{C}$ is an arbitrary complex number. This coincides with \eqref{GKsl2qs} provided $a=b=0\,$. 

\end{proof}



\subsection{Vector boundary} \label{Sec:2.2}

Consider the reflection equation \eqref{RE} on the tensor space $\cV \tc \cV \tc \cW$ with the $R$-matrix defined by \eqref{Rsl2q}. Then there exists a solution of the reflection equation,
\begin{align}
K(z) = \left(
\begin{array}{cccc} 
1 & 0    & 0       & 0 \\
0 & 1-k/q  & k    & 0 \\
0 & k & 1-q\,k & 0 \\
0 & 0    & 0       & 1
\end{array} \right) ,\qquad\text{where}\qquad k = \frac{(q-q^{-1})(z^2-1)}{q^{-2} - c\,z + q^2 z^2} \,, \label{KVsl2q}
\end{align}
and $c \in \mathbb{C}$ is an arbitrary complex number. This $K$-matrix satisfies the intertwining equation \eqref{IntKV}
\begin{align}
(T_{1/z} \tc T_s)[\Delta(b)] \, K(z) - K(z) \, (T_{z} \tc T_s)[\Delta(b)] = 0  \quad\text{for all} \quad b \in \cU_q(\sl(2))\,. 
\end{align}
We call Cartan generators $k_i$ and Chevalley generators $\xi^\pm_1$ the \textit{preserved} generators, while the affine Chevalley generators $\xi^\pm_0$ are the \textit{broken} generators. Next, we identify the corresponding quantum affine coideal subalgebra consistent with the reflection matrix \eqref{KVsl2q}. 

\begin{prop} 
Let the involution $\Theta$ act on the root space $\Phi$ as
\begin{align} \label{Aut2}
\Theta(\alpha_0) = - \alpha_0 -2 \alpha_1 \,, \qquad\qquad \Theta(\alpha_1) = \alpha_1 \,.
\end{align}
Then it defines a quantum affine coideal subalgebra $\cB\subset\cA=\sltwo$ generated by the Cartan generator $k_1$, the Chevalley generators $\xi^\pm_1$, and the twisted affine generator
\begin{align} \label{tw2}
B^{-}_0 = \xi^{-}_0 k_0 - d_- \big(\big(\text{ad}_r\, \xi^+_1 \xi^+_1 \big)\xi'^{+}_0\big) k_0 \,,
\end{align}
where $\xi'^{+}_0 = k_0^{-1} \xi^{+}_0$ and $d_-\in\bb{C}^\times$ is an arbitrary non-zero complex number.

\end{prop}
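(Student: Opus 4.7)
The plan is to reduce this statement to a direct application of Proposition \ref{Trm:BofUa} once I extract the combinatorial data associated with the root-space involution \eqref{Aut2}.

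First I would identify the data from \eqref{Aut2}: we have $\pi_\Theta = \{\alpha_1\}$ and $\hat\pi^* = \{\alpha_0\}$ with $p(0) = 0$, so the normalization $k = 2$ in the constraint following \eqref{AffInv} must hold. Since $\vartheta = \alpha_1$ for $\wh{\sl}(2)$, this constraint reads $\alpha_0 - \Theta(\alpha_0) = 2\alpha_0 + 2\alpha_1 = 2\delta$, which is the required consistency check. The expression $\Theta(\alpha_0) = -2\alpha_1 - \alpha_0$ then encodes a single-element sequence $\alpha_{0_1} = \alpha_1$ with multiplicity $m_1 = 2$.

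Next I would describe $\cM$ and $\cT_\Theta$ explicitly. Since $\pi_\Theta = \{\alpha_1\}$, the Hopf subalgebra $\cM \subset \sltwo$ is generated by $\xi^\pm_1$ and $k_1^{\pm1}$ and is a copy of $\cU_q(\sl(2))$; the $\Theta$-fixed Cartan subgroup $\cT_\Theta$, generated by $k_1$, is automatically contained in $\cM$. Plugging the data $(\alpha_{0_1}, m_1) = (\alpha_1, 2)$ into the formula of Proposition \ref{Con:AutUa} produces
\begin{align*}
\tth(\xi^-_0) = \big[\big(\adr\,(\xi^+_1)^{(2)}\big) k_0^{-1}\xi^+_0\big] \;\propto\; \big[\big(\adr\,\xi^+_1\,\xi^+_1\big) \xi'^{+}_0\big],
\end{align*}
where the numerical factor $1/[2]_q!$ arising from the divided power is absorbed into the free parameter $d_- \in \bb{C}^\times$. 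Substituting this into the generic expression $B^-_0 = \xi^-_0 k_0 - d_- \tth(\xi^-_0)\,k_0$ reproduces \eqref{tw2}, and Proposition \ref{Trm:BofUa} then guarantees that the algebra generated by $\cM$, $\cT_\Theta$ and $B^-_0$ is a left coideal subalgebra of $\sltwo$.

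The main obstacle is the coideal property $\Delta(B^-_0) \in \sltwo \tc \cB$. It suffices to show $\Delta(B^-_0) \in k_0 \tc B^-_0 + \sltwo \tc \cM^+\cT_\Theta$, which reduces to expanding $\Delta\big((\adr\,\xi^+_1\,\xi^+_1)\xi'^{+}_0 \cdot k_0\big)$ using the standard formula for the coproduct of an iterated right adjoint action and collecting all right-tensor factors into $\cM^+ \cT_\Theta$. Since Proposition \ref{Trm:BofUa} already packages this verification for any root-space involution satisfying \eqref{AffInv}, the only novelty here is confirming that the $\sl(2)$-specific sequence $(\alpha_{0_1} = \alpha_1, m_1 = 2)$ satisfies the prerequisites of that proposition, which is straightforward.
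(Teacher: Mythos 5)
Your proposal is correct, but it takes a genuinely different route from the paper. The paper proves this proposition by brute force: it writes out $\Delta(B^-_0)$ explicitly, obtaining $\xi^{-}_0 k_0 \tc 1 - d_- \big(\big(\adr\, \xi^+_1 \xi^+_1 \big)\xi'^{+}_0\big) k_0 \tc k_1^{-2} + k_0 \tc B^{-}_0$ plus correction terms whose right tensor factors are $(\adr\,\xi^+_1)\xi'^+_1$ and $k_1^{-1}\xi'^+_1$, and observes that every right factor lies in the subalgebra generated by $k_1^{\pm1}$, $\xi^\pm_1$ and $B^-_0$. You instead extract the combinatorial data $(\pi_\Theta=\{\alpha_1\}$, $p(0)=0$, $m_1=2)$, check the consistency condition $\alpha_0-\Theta(\alpha_0)=2\delta$, match \eqref{tw2} against the general template \eqref{AutUa}--\eqref{Bma}, and invoke Proposition \ref{Trm:BofUa}. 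Your identification of the data is accurate (including the absorption of the $1/[2]_q!$ divided-power normalization into $d_-$), and the reduction is logically sound \emph{within the paper's own framework}. What it buys is brevity and a clear conceptual placement of the example inside the general theory. What it costs is self-containedness: the proof of Proposition \ref{Trm:BofUa} in the paper rests on the assertion that $\Delta(\tth(\xi^-_0)k_0) \in k_0 \tc \tth(\xi^-_0)k_0 + \Ua \tc \cM^+\cT_\Theta$ ``by the definition,'' which in turn leans on Proposition \ref{Con:AutUa} -- explicitly stated in the paper \emph{without} proof. The paper's direct computation for this case is therefore not redundant: it is the concrete verification that the asserted coproduct membership actually holds for $\wh{\sl}(2)$ with $p(0)=0$. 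If you want your argument to stand on its own, you should carry out the coproduct expansion you sketch in your last paragraph rather than deferring it; once expanded, you would find exactly the correction terms the paper records, and you would still need to note that $(\adr\,\xi^+_1)\xi'^+_1$ and $k_1^{-1}\xi'^+_1$ lie in $\cM^+\cT_\Theta$, which is the one $\sl(2)$-specific check your write-up leaves implicit.
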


\begin{proof}
The twisted affine generator \eqref{tw2} satisfies the coideal property
\begin{align}
\Delta(B^{-}_0) &= \xi^{-}_0 k_0 \tc 1 - d_- \big(\big(\text{ad}_r\, \xi^+_1 \xi^+_1 \big)\xi'^{+}_0\big) k_0 \tc k_1^{-2} + k_0 \tc B^{-}_0 \el
& \qquad + d_- q^2(q^2-q^{-2})\big( \xi^+_0 \tc (\text{ad}_r\,\xi^+_1)\xi'^+_1 - k_0\, (\text{ad}_r\,\xi^+_1)\xi'^+_0 \tc k_1^{-1} \xi'^+_1 \big) \in \cA \tc \cB \,.
\end{align}
The property is satisfied by the definition for the rest of the generators.

\end{proof}


\begin{remark}

This algebra may be alternatively generated by $k_1$, $\xi^\pm_1$, and the twisted affine generator
\begin{align} \label{tw2b}
B^{+}_0 = \xi'^{+}_0 k_0 - d_+ \big(\big(\text{ad}_r\, \xi^-_1 \xi^-_1 \big)\xi^{-}_0\big) k_0 \,,
\end{align} 
having coproduct
\begin{align}
\Delta(B^{+}_0) &= \xi'^{+}_0  k_0\tc 1 - d_+ \big(\big(\text{ad}_r\, \xi^-_1 \xi^-_1 \big)\xi^{-}_0\big) k_0 \tc k_1^{-2} + k_0  \tc B^{+}_0 \el
& \qquad - d_+ (q^2-q^{-2})\big( \xi^-_0 k_0 \tc (\text{ad}_r\,\xi^-_1)\xi^-_1 - q^{-2} \, k_0 \,(\text{ad}_r\,\xi^-_1)\xi^-_0 \tc k_1^{-1} \xi^-_1 \big) \in \cA \tc \cB \,,
\end{align}
and $d_+ = d_-^{\;-1}(q^{-1}+q)^{-2}$. The generators $B^\pm_0$ are related by
\begin{align}
B_0^- = - d_- \big[ (\adr\,\xi_1^+\xi_1^+) B_0^+ k_1 \big] k_1^{-1} \,.
\end{align}

\end{remark}

\begin{prop} \label{B2}
The quantum affine coideal subalgebra defined above with $q^2 d_+ = q^{-2} d_- = (q+q^{-1})^{-1}$ is a reflection algebra for a \textit{vector} boundary.   
\end{prop}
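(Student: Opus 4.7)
The plan is to reduce the verification to a single generator. By the remark immediately following \eqref{KVsl2q}, the matrix $K(z)$ already satisfies the intertwining equation \eqref{IntKV} for every element of the non-affine subalgebra $\cU_q(\sl(2))\subset\sltwo$; in particular it does so for the preserved generators $k_1$ and $\xi^{\pm}_1$ of $\cB$. Moreover, by the identity $B_0^{-}=-d_-\bigl[(\adr\,\xi_1^+\xi_1^+)\,B_0^{+}\, k_1\bigr]k_1^{-1}$ from the preceding remark, once \eqref{IntKV} is established for $B^{-}_0$ it automatically holds for $B^{+}_0$, and the two constraints $q^2 d_+=q^{-2}d_-=(q+q^{-1})^{-1}$ are mutually compatible via the relation $d_+=d_-^{-1}(q+q^{-1})^{-2}$ stated there. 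Hence the entire content of the proposition is the intertwining equation for $B^{-}_0$.

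Next I would compute the bulk and boundary images of the summands of $\Delta(B^{-}_0)$ displayed in the preceding proof. Using \eqref{Tz:SL2q} together with the adjoint formula $(\adr\,\xi^{+}_1)a=k_1^{-1}a\,\xi^{+}_1-k_1^{-1}\xi^{+}_1\,a$, a short calculation yields $T_z(\xi^{-}_0 k_0)=q\,z^{-1}\,e_{1,2}$ and $T_z\bigl(((\adr\,\xi^{+}_1\xi^{+}_1)\xi'^{+}_0)\,k_0\bigr)=\mu(q)\,z\,e_{1,2}$ for an explicit scalar $\mu(q)$, and similarly explicit $2\times 2$ matrices for the remaining factors $(\adr\,\xi^{+}_1)\xi'^{+}_1$, $k_1^{-1}\xi'^{+}_1$, $k_0(\adr\,\xi^{+}_1)\xi'^{+}_0$ and $\xi^{+}_0$. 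Applying $T_{1/z}\tc T_s$ and $T_z\tc T_s$ to each of the four summands of $\Delta(B^{-}_0)$ then turns both sides of \eqref{IntKV} into explicit $4\times 4$ matrix-valued functions of $z$, $s$, $q$ and $d_-$.

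Substituting these images together with \eqref{KVsl2q} into
\begin{equation*}
(T_{1/z}\tc T_s)[\Delta(B^{-}_0)]\,K(z)=K(z)\,(T_{z}\tc T_s)[\Delta(B^{-}_0)]
\end{equation*}
and comparing matrix entries reduces the claim to a small set of scalar identities in $q,z,s,c,d_-$. The entries lying in the diagonal $2\times 2$ blocks are automatically satisfied because $K(z)$ intertwines $\cU_q(\sl(2))$; the off-diagonal entries reproduce the $z^{\pm 1}$ structure of the rational function $k(z)$ in \eqref{KVsl2q}. Matching the coefficients of $z$ and $z^{-1}$ in these off-diagonal entries fixes the scalar $c$ to the form dictated by \eqref{KVsl2q} and simultaneously forces $q^{-2}d_-=(q+q^{-1})^{-1}$, yielding the announced constraint.

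The main obstacle is purely one of bookkeeping: the coproduct $\Delta(B^{-}_0)$ contains two cross terms proportional to $q^2-q^{-2}$ which are essential for the coideal property but which produce several off-diagonal contributions on $\cV\tc\cW$. Organizing the expansion so that the $d_-$-dependent cancellations line up with the denominator $q^{-2}-c\,z+q^2 z^2$ of $k(z)$ is where the computation becomes delicate, but once the two scalar poles in $z$ are balanced the identification of $d_-$ is forced uniquely and the stated value follows.
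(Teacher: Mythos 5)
Your proposal is correct and follows essentially the same route as the paper: evaluate the coproducts of the generators of $\cB$ in $T_{z}\tc T_s$, impose the intertwining equation on a $K$-matrix whose block structure is already fixed by the $\cU_q(\sl(2))$ part, and match the coefficients of $z^{\pm1}$ to force $q^{-2}d_-=(q+q^{-1})^{-1}$ and $c=s+s^{-1}$; the paper simply writes out both $(T_{z}\tc T_s)[\Delta(B^{\pm}_0)]$ explicitly rather than reducing to a single generator. One small slip: the relation $B_0^{-}=-d_-\bigl[(\adr\,\xi_1^+\xi_1^+)B_0^{+}k_1\bigr]k_1^{-1}$ expresses $B^{-}_0$ through $B^{+}_0$, so by itself it transfers the intertwining property in the direction opposite to the one you claim --- but this is harmless, since $\cB$ is generated by $k_1$, $\xi^{\pm}_1$ and $B^{-}_0$ alone, so checking $B^{-}_0$ together with the Lie generators already suffices.
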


\begin{proof}
The representation $(T_z\tc T_s)$ of the coproducts of the Lie generators of $\cB$ is given by
\begin{align} \label{TAsl2v:Lie}
(T_{z}\tc T_s)[\Delta(k_1)]   &= q^2 e_{1,1} + e_{2,2} + e_{3,3} + q^{-2} e_{4,4} \,, \el
(T_{z}\tc T_s)[\Delta(\xi^+)] &= q\, e_{1,2} + e_{1,3} + e_{2,4} + q^{-1} e_{3,4} \,, \el
(T_{z}\tc T_s)[\Delta(\xi^-)] &= e_{2,1} + q^{-1} e_{3,1} + q\, e_{4,2} + e_{4,3} \,, 
\end{align}
and of the twisted affine generators by
\begin{align} \label{TAsl2v:TA}
(T_{z}\tc T_s)[\Delta(B^{+}_0)] &= \big(q^{\!-3}s+d_{+}(s^{\!-1}(q^{\!-2}+1)-z^{\!-1}(q^{\!-4}-1))\big)\, e_{2,1} + q^{\!-2}\big(z + d_{+} z^{\!-1}(q+q^{\!-1})\big)\, e_{3,1} \el 
& \quad + \big(q^{\!-2}z + d_{+}q^2 z^{\!-1}(q+q^{\!-1})\big) \, e_{4,2} + \big(q^{\!-1}s+d_{+}(s^{\!-1}(q^2+1)-z^{\!-1}(q^4-1))\big) \, e_{4,3} \,, \el
(T_{z}\tc T_s)[\Delta(B^{-}_0)] &= \big(s^{\!-1}+q^{\!-1}d_{-}(s(q^{\!-2}+1)- z(q^{\!-4}-1))\big) \, e_{1,2} + \big(q\,z^{\!-1} + d_{-} z (q^{\!-4}+q^{\!-2})\big) \, e_{2,4} \el 
& \quad + \big(q\,z^{\!-1} + d_{-} z (q^2+1)\big) \, e_{2,4} + \big(q^2 s^{\!-1} + q^{\!-1}d_{-}(s(q^2+1)-z(q^4-1)\big) \, e_{3,4} \,.
\end{align}
Let $K(z)$ be any $4\times4$ matrix. Then the intertwining equation for the Lie generators \eqref{TAsl2v:Lie} constrains $K(z)$ to the form given in \eqref{KVsl2q} up to an unknown function $k$ and an overall scalar factor. Next, the intertwining equation for the twisted affine generators \eqref{TAsl2v:TA} has a unique solution,
\begin{align}
q^2 d_{+} = q^{-2}d_{-} = (q+q^{-1})^{-1} \,, \qquad k=\frac{(q-q^{-1})(z^2-1)}{q^{-2}-(s^{-1}+s) z+q^2 z^2} \,,
\end{align}
which coincides with $k$ given in \eqref{KVsl2q} provided $c = s + s^{-1}$. 
\end{proof}


\begin{remark}
Reflection matrix \eqref{KVsl2q} satisfies the intertwining equation for all Cartan generators $k_i\in\cT$, thus the subalgebra $\cB\tc\cT\subset\cA$ is also a reflection algebra. The same is true for the reflection algebra of the singlet boundary.
\end{remark}


\begin{remark}
The coideal subalgebra defined in proposition \ref{B1} is also compatible with a vector boundary. The corresponding reflection matrix is 
\begin{equation} \label{GKsl2qsv}
K(z)=\left(
\begin{array}{cccc}
 1 & 0 & 0 & 0 \\
 0 & 1+s(q^2 z-c) k & q s(c-s) k & 0 \\
 0 & q(c s-1)k & k'+s(z^{-1}\!-q^2c)k & 0 \\
 0 & 0 & 0 & k' \\
\end{array}
\right),
\end{equation}
where
\begin{equation}
k=\frac{(q^2-1)(z^2-1)}{(c-z)(q^2 z-s)(q^2 s z-1)} \,,\qquad k' = \frac{c\,z-1}{z(c-z)}\,,
\end{equation}
and $c,s\in\bb{C}$ are arbitrary complex numbers. This vector boundary reflection matrix can be obtained using the boundary fusion procedure \cite{Sk,Mezincescu:1991ke}, which in this case is simply
$K_V(z) = P\,R(zs)\,(1\tc K_S(z))\,P\,R(z/s)$, where $K_V(z)$ is \eqref{GKsl2qsv}, $K_S(z)$ is \eqref{GKsl2qs} with $a=b=0$, and $P=R(0)$ with $R(z)$ given by \eqref{Rsl2q}.

Similarly, the coideal subalgebra defined in proposition \ref{B2} is compatible with the singlet boundary. However, the corresponding reflection matrix is trivial,
\begin{equation}
K=\left(
\begin{array}{cc}
 1 & 0 \\
 0 & 1 \\
\end{array}
\right).
\end{equation}
This reflection matrix can be obtained by solving the boundary intertwining equation for the Lie algebra generators only, and thus the twisted affine symmetries are redundant in this case. The reflection matrix \eqref{KVsl2q} can be obtained using an equivalent fusion procedure as above.
These properties will further reappear in the Yangian case, and for the $GL(1|1)$ algebra for both affine and Yangian cases. In these cases we will simply state that the corresponding reflection matrix is trivial and omit repeating the expression for the fusion procedure.

\end{remark}


\begin{remark} \label{remVl}
Let $T_{l,z}$ be a finite-dimensional irreducible representation of the algebra $\sltwo$. Let $l$ be an integral or half-integral non-negative number and $\Vl$ be a $(2l+1)$--dimensional complex vector space with a basis $\{v_m \,|\, m=-l,-l+1,\cdots,l\}$. For convenience we set $v_{-l-1}=v_{l+1}=0$. The operators $T_{l,z}(\xi^{\pm}_i),\;T_{l,z}(k_i)$ act on the space $\Vl$ by
\begin{align}
T_{l,z}(\xi^{\pm}_1)\,v_m &= \left([l\mp m]_q [l\pm m+1]_q\right)^{1/2} v_{m+1}\,, & T_{l,z}(k_1)\,v_m &= q^{2m} v_{m}\,,\el
T_{l,z}(\xi^{\pm}_0)\,v_m &= z^{\pm1} \left([l\mp m]_q [l\pm m+1]_q\right)^{1/2} v_{m+1}\,, & T_{l,z}(k_0)\,v_m &= q^{-2m} v_{m}\,, \label{lrepq}
\end{align}
Let the boundary vector space $\cW_l$ and the boundary representation $T_{l,s}$ be defined in the same way. Then all the constructions of the quantum affine coideal subalgebras presented above apply directly for any finite-dimensional representation $T_{l,z}$ and lead to a unique solution (for fixed $l$) of the reflection equation (for the singlet boundary this was explicitly shown in \cite{DN}).
\end{remark}

The coideal subalgebra given in proposition \ref{B1} is the augmented q-Onsager algebra $\overline{\mathcal{O}}_\hbar(\sl(2))$ of \cite{IT} (see also \cite{BB,BC}).
Both of subalgebras \ref{B1} and \ref{B2} by the construction are closely related to the orthogonal and symplectic twisted q-Yangians $\rm Y^{\rm tw}_q(\mathfrak{o}_2)$ and $\rm Y^{\rm tw}_q(\mathfrak{sp}_2)$ introduced in \cite{MRS}, however we do not know the exact isomorphism. 


\vspace{0.1in}

\section{Reflection algebras for \texorpdfstring{$\ytwo$}{ytwo}}

{\bf \noindent Algebra.}
The Yangian $\ytwo$ is generated by the level-zero Chevalley generators $E^\pm$, Cartan generator $H$, and the level-one Yangian generators $\wh{E}^\pm$ and the corresponding level-one Cartan generator $\wh{H}\,$. The commutation relations of the algebra are given by
\begin{align}
[H,E^\pm] &= \pm 2 E^\pm\,, & [E^+,E^-] &= H \,, & [H,\wh{E}^\pm] &= \pm 2 \wh{E}^\pm\,, &[E^\pm,\wh{E}^\mp] &= \pm\wh{H} \,, & [H,\wh{H}] &= 0\,.
\end{align}
The Hopf algebra structure is equipped with the following coproduct,
\begin{align} \label{cop2}
\Delta(H) &= H \tc1 + 1\tc H\,,
& \Delta(\wh{H}) &= \wh{H} \tc 1 + 1 \tc \wh{H} - \alpha\,( E^+ \tc E^- - E^- \tc E^+) \,,  \el
\Delta (E^\pm) &= E^\pm \tc 1 + 1 \tc E^\pm\,,  
& \Delta (\wh{E}^\pm) &= \wh{E}^\pm \tc 1 + 1 \tc \wh{E}^\pm \pm \frac{\alpha}{2}\,( E^{\pm}\tc H - H\tc E^{\pm}) \,.
\end{align}
%

{\bf \noindent Representation.}
The fundamental evaluation representation of $\ytwo$ on the two-dimensional vector space $\cV$ is defined by
\begin{align}
T_u(E^+) &= e_{1,2}\,,     & T_u(E^-) &= e_{2,1}\,,        & T_u(H) &= e_{1,1}-e_{2,2}\,,\el 
T_u(\wh{E}^+) &= u\, e_{1,2}\,, & T_u(\wh{E}^-) &= u \, e_{2,1}\,, &T_u(\wh{H}) &= u\,(e_{1,1}-e_{2,2}) \,.
\end{align}
We set $T_u(\alpha) = 1$. The boundary representation $T_s$ is obtained by replacing $u$ with $s$. Here $u$ (resp.\ $s$) is bulk (resp.\ boundary) spectral parameter.

The fundamental $R$-matrix $R_{ij}(u)\in End(\cV_i\tc \cV_j)$ satisfying the Yang-Baxter equation \eqref{YBEY} 
\begin{align} 
R_{12}(u-v) R_{13}(u) R_{23}(v) = R_{23}(v) R_{13}(u) R_{12}(u-v) \,, \nonumber
\end{align}
is given by
\begin{align}
R(u) = \left(
\begin{array}{cccc} 
1 & 0     & 0      & 0 \\
0 & r     & 1-r & 0 \\
0 & 1-r & r      & 0 \\
0 & 0     & 0      & 1
\end{array} \right) ,\qquad\text{where}\qquad r = \frac{u}{u-1} \,. \label{Rsl2}
\end{align}
%



\subsection{Singlet boundary} \label{Sec:3.1}

Consider the reflection equation \eqref{REY} on the space $\cV \tc \cV \tc \bb{C} $ with the $R$-matrix defined by \eqref{Rsl2} and the $K$-matrix being any $2\times2$ matrix satisfying
\begin{align}
R_{12}(u-v) K_{13}(u) R_{12}(u+v) K_{23}(v) = K_{23}(v) R_{12}(u+v) K_{13}(v) R_{12}(u-v) \,. \nonumber
\end{align}
The general solution is \cite{Cher}
\begin{align}
K(u) = \left(
\begin{array}{cc} 
1      & a\, k' \\
b\, k' & k
\end{array} \right) ,\qquad\text{where}\qquad k = \frac{c+u}{c-u}\,,\qquad k' = \frac{u}{c-u}\,, \label{GKsl2s}
\end{align}
and $a,\,b,\,c\in\mathbb{C}$ are arbitrary complex numbers. 

Once again we are interested in a solution compatible with the underlying Lie algebra and thus require the reflection matrix to intertwine the Cartan generator $H$,
\begin{align}
(T_u\tc\epsilon)[\Delta(H)] \, K(u) - K(u) \, (T_u\tc\epsilon)[\Delta(H)] = 0 \,.
\end{align}
This requirement restricts $K$-matrix \eqref{GKsl2qs} to be of the diagonal form ($a=b=0$). Next, it is easy to check that such a $K$-matrix does not satisfy the intertwining equation for any other generators of $\mathfrak{sl}(2)$ (and $\ytwo$\,).
Hence we call Cartan generator $H$ the \textit{preserved} generator, while the rest are the \textit{broken} generators. This setup allows to define the following involution, which is obvious and thus we do not give a proof of it, and the twisted Yangian (we will follow the same strategy in further sections).

\begin{prop}
Let the involution $\theta$ act on Lie algebra $\g=\mathfrak{sl}(2)$ as
\begin{align}  \label{Aut3}
\theta(H) = H \,, \qquad \theta(E^{\pm}) = -E^{\pm} \,,
\end{align}
defining a symmetric pair $(\g,\theta(\g))$, where $\theta(\g)$ is the $\theta$--fixed subalgebra of $\g$. Then $\theta$ can be extended to the $\bth$ involution of $\Yg$ such that
\begin{align} \label{t3}
\bth(\wh{H}) = -\wh{H} \,, \qquad \bth(\wh{E}^{\pm}) = \wh{E}^{\pm} \,, \qquad \bth(\alpha) = - \alpha \,.
\end{align}
\end{prop}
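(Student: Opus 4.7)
The plan is to verify that the assignments in \eqref{Aut3}--\eqref{t3} extend consistently to a $\bb{C}$-algebra endomorphism of $\ytwo$ squaring to the identity. Since $\ytwo$ is generated by $H$, $E^\pm$, $\wh{H}$, $\wh{E}^\pm$ together with the formal parameter $\alpha$, modulo the commutation relations listed at the start of this section and the $\sl(2)$ Serre relation \eqref{SerreSL2}, it suffices to check that each such relation is preserved by $\bth$ and that $\bth^2$ fixes every generator.

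First I would note that $\theta$ is a Lie algebra involution realising the symmetric pair $(\sl(2),\h)$ with $\h=\bb{C}H$ and $\m=\mathrm{span}(E^\pm)$ in the sense of \eqref{symmetric_pair}, so the level-zero relations $[H,E^\pm]=\pm 2E^\pm$ and $[E^+,E^-]=H$ are preserved automatically. The mixed level-zero/level-one relations reduce to one-line sign counts; for instance $\bth([E^+,\wh{E}^-])=[-E^+,\wh{E}^-]=-\wh{H}=\bth(\wh{H})$, while $[H,\wh{E}^\pm]=\pm 2\wh{E}^\pm$ and $[H,\wh{H}]=0$ are manifestly $\bth$-invariant since $\bth$ fixes $H$.

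The one substantive step is the Serre relation \eqref{SerreSL2}. The cleanest route is to factor $\bth=\eta\circ\theta^\flat$, where $\theta^\flat$ is the naive lift $j^a\mapsto\theta(j^a)$, $\wh{j}^a\mapsto\theta(\wh{j}^a)$, $\alpha\mapsto\alpha$, and $\eta$ is the parity involution $j^a\mapsto j^a$, $\wh{j}^a\mapsto-\wh{j}^a$, $\alpha\mapsto-\alpha$. The map $\theta^\flat$ preserves \eqref{SerreSL2} because $\theta$ is a Lie automorphism, so both the structure constants $f^{ab}_c$ and the combination $a^{abc}_{def}$ transform covariantly in all indices. For $\eta$ I would introduce the grading that assigns weight $1$ to every hatted generator and to $\alpha$, and weight $0$ to every level-zero generator: both sides of \eqref{SerreSL2} are then homogeneous of total weight $3$ (the left-hand side is cubic in hatted generators; the right-hand side is $\alpha^2$ times a single hatted generator), so each side acquires the common factor $(-1)^3=-1$ under $\eta$ and the relation is preserved. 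Since $\theta^\flat$ and $\eta$ commute on generators and each squares to the identity, $\bth^2=(\theta^\flat)^2\eta^2=\mathrm{id}$.

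The only genuine obstacle is the grading count for the Serre relation, but once the weights on $\alpha$ and on the hatted generators are declared this verification reduces to a count of indices on each side and is immediate; the rest of the checks are elementary sign bookkeeping already contained in the one-line examples above.
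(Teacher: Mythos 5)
Your verification is correct, but note that the paper supplies no proof at all here: it states explicitly that the involution ``is obvious and thus we do not give a proof of it,'' and the general version in Section 1.6 is likewise given ``without the proofs as they are straightforward.'' So your write-up fills in what the paper omits rather than paralleling an existing argument. The factorization $\bth=\eta\circ\theta^\flat$ is a clean way to organize the check: $\theta^\flat$ preserves all defining relations because $\theta$ is a Lie algebra automorphism (so $f^{ab}_{\sk c}$ and $a^{abc}_{\sk def}$, built from the structure constants and the Killing form, transform covariantly), and your weight count for $\eta$ on \eqref{SerreSL2} is right --- the left-hand side is cubic in level-one generators while the right-hand side carries $\alpha^2$ times a single level-one generator, so both sides pick up $(-1)^3$. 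The remaining sign checks on the mixed relations $[E^\pm,\wh{E}^\mp]=\pm\wh{H}$, $[H,\wh{E}^\pm]=\pm2\wh{E}^\pm$ are as elementary as you say, and $\bth^2=\mathrm{id}$ follows from $(\theta^\flat)^2=\eta^2=\mathrm{id}$ and their commuting on generators. One small remark: what the paper ultimately needs from $\bth(\alpha)=-\alpha$ is compatibility with the coproduct \eqref{Hopf-Y} (so that the $\bth$--fixed combinations \eqref{tw3} generate a coideal), which is not part of the algebra-relation check you perform; that compatibility is verified separately in Proposition \ref{TY:sl2s}, so nothing is missing from the statement as posed.
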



\begin{prop} \label{TY:sl2s}
The twisted Yangian $\Ygh$ of type I for $\g=\sl(2)$ and $\theta(\g)=H$ is the $\bth$--fixed coideal subalgebra of $\Yg$ generated by the Cartan generator $H$ and the twisted Yangian generators \cite{Ols}
\begin{align} \label{tw3}
\wt{E}^{\pm} = \wh{E}^\pm \pm \alpha\, t\, E^\pm \pm \frac{\alpha}{4}(E^{\pm}H + H\,E^{\pm}) \,.
\end{align}
Here $t\in\bb{C}$ is an arbitrary complex number. 
\end{prop}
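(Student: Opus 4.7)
The plan is to verify two properties: (i) the generators $H,\wt{E}^+,\wt{E}^-$ are invariant under the involution $\bth$ of \eqref{t3}, and (ii) the subalgebra $\cB$ they generate is a left coideal, i.e.\ $\Delta(\cB)\subset\Yg\tc\cB$. Both reduce to direct manipulations using the definitions in \eqref{cop2}, \eqref{t3} and \eqref{tw3}, with no deep Yangian input required.

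For (i), $H$ is trivially $\bth$-fixed, so I would apply $\bth$ termwise to $\wt{E}^\pm$. Using $\bth(\wh{E}^\pm)=\wh{E}^\pm$, $\bth(E^\pm)=-E^\pm$, $\bth(H)=H$ and $\bth(\alpha)=-\alpha$, each of the three summands in \eqref{tw3} is fixed individually: the hatted term obviously; the linear term $\pm\alpha t E^\pm$ by the cancellation of the two sign flips from $\alpha$ and $E^\pm$; and the symmetrized quadratic $\pm\tfrac{\alpha}{4}(E^\pm H + HE^\pm)$ by the cancellation of the $\alpha$-flip against the single $E^\pm$-flip in each monomial. Hence $\bth(\wt{E}^\pm)=\wt{E}^\pm$.

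For (ii) I would compute $\Delta(\wt{E}^\pm)$ directly, using that $\Delta$ is an algebra homomorphism. The key expansion is
\[
\Delta(E^\pm H + H E^\pm) = (E^\pm H + H E^\pm)\tc 1 + 1\tc (E^\pm H + H E^\pm) + 2\,E^\pm\tc H + 2\,H\tc E^\pm,
\]
so the non-primitive piece of $\tfrac{\alpha}{4}\Delta(E^\pm H + HE^\pm)$ is $\tfrac{\alpha}{2}(E^\pm\tc H + H\tc E^\pm)$. Adding this to the non-cocommutative term $\pm\tfrac{\alpha}{2}(E^\pm\tc H - H\tc E^\pm)$ of $\Delta(\wh{E}^\pm)$, the $H\tc E^\pm$ contributions cancel and the $E^\pm\tc H$ contributions combine, yielding
\[
\Delta(\wt{E}^\pm) = \wt{E}^\pm\tc 1 + 1\tc \wt{E}^\pm \pm \alpha\, E^\pm\tc H.
\]
Since the second tensor factor of every term on the right is either $1$, $\wt{E}^\pm$, or $H$, it lies in $\cB$, and together with the primitivity of $\Delta(H)$ this extends by multiplicativity to $\Delta(\cB)\subset\Yg\tc\cB$.

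The only delicate step is the sign bookkeeping that forces the $H\tc E^\pm$ cross-terms between $\Delta(\wh{E}^\pm)$ and the symmetrized quadratic to cancel, which fixes the coefficient $\tfrac{\alpha}{4}$ in \eqref{tw3} uniquely; this is precisely the Olshanski choice \cite{Ols}. No Serre relations or higher-level identities are needed, and the free parameter $t\in\bb{C}$ plays no role in either property since the extra summand $\pm\alpha t E^\pm$ is both primitive and manifestly $\bth$-fixed.
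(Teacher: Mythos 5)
Your proposal is correct and follows essentially the same route as the paper's proof, which simply asserts that the generators lie in the positive eigenspace of $\bth$ and states the coproduct $\Delta(\wt{E}^{\pm}) = \wt{E}^{\pm}\tc 1 + 1\tc\wt{E}^{\pm} \pm \alpha\, E^{\pm}\tc H$; you merely make both verifications explicit, and you correctly identify that the cancellation of the $H\tc E^{\pm}$ cross-terms is the essential point, since $E^{\pm}\notin\cB$ for this type I subalgebra.
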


\begin{proof}
The twisted generators \eqref{tw3} are in the positive eigenspace of the involution $\bth$ and satisfy the coideal property
\begin{align}
\Delta(\wt{E}^{\pm}) &= \wt{E}^{\pm} \tc 1 \pm \alpha\, E^{\pm} \tc H + 1 \tc \wt{E}^-_1 \in \Yg \tc \Ygh \,. 
\end{align}
The same properties for $H$ follows from the definition. 
\end{proof}

\begin{prop} 
The twisted Yangian $\Ygh$ defined above is a reflection algebra for the \textit{singlet} boundary.
\end{prop}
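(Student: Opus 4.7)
The plan is to show directly that the singlet intertwining equation \eqref{IntKSY} for the generators $H,\wt{E}^{\pm}$ of $\Y(\sl(2),H)$ admits a unique (up to overall scalar) solution which coincides with the $a=b=0$ specialization of the general $K$-matrix \eqref{GKsl2s}; since that $K$-matrix is already known to satisfy the reflection equation \eqref{REY}, this will establish the reflection-algebra property.

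The first step is to exploit the singlet condition. Because $\epsilon(E^\pm)=\epsilon(H)=\epsilon(\wh{E}^\pm)=\epsilon(\wh{H})=0$, the non-primitive tails in the coproducts of $\wh{E}^\pm$ and $\wt{E}^\pm$ are all annihilated by $1\tc\epsilon$, so \eqref{IntKSY} collapses to the matricial identity $T_{-u}(b)\,K(u)=K(u)\,T_u(b)$ for each generator $b\in\{H,\wt{E}^+,\wt{E}^-\}$. The Cartan generator $H$ acts as a $u$-independent diagonal matrix in the fundamental representation, so the $H$-intertwining condition restricts $K(u)$ to diagonal form $K(u)=e_{1,1}+k\,e_{2,2}$ up to an overall scalar.

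The second step is to evaluate $T_u(\wt{E}^\pm)$ in the fundamental representation. A direct computation shows $T_u(E^\pm H+H E^\pm)=0$, so
\begin{align*}
T_u(\wt{E}^\pm)=(u\pm\alpha\,t)\,T_u(E^\pm).
\end{align*}
Substituting the diagonal ansatz into $T_{-u}(\wt{E}^\pm)\,K(u)=K(u)\,T_u(\wt{E}^\pm)$ produces, for both signs, the same scalar equation $(\alpha t-u)\,k=\alpha t+u$, hence
\begin{align*}
k=\frac{\alpha\,t+u}{\alpha\,t-u}.
\end{align*}
Identifying $c=\alpha t$ reproduces \eqref{GKsl2s} with $a=b=0$ exactly, and the reflection equation \eqref{REY} then holds by virtue of the general solution \cite{Cher}.

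The only genuine subtlety is that the constraints coming from $\wt{E}^+$ and $\wt{E}^-$ have to agree, so that a single parameter $t$ (equivalently $c$) labels the family of $K$-matrices. This compatibility reflects the $\bth$--fixedness of the two twisted generators together with the irreducibility of the fundamental module as a module over the coideal $\Y(\sl(2),H)$: by a Schur-type argument, the intertwiner on this two-dimensional module is unique up to scalar, so the constraint from $\wt{E}^-$ cannot be independent of that from $\wt{E}^+$. The main potential obstacle in more elaborate versions of this proposition --- higher-dimensional boundaries, or non-fundamental $T_u$ as in the $\Vl$ of remark \ref{remVl} --- would be the non-vanishing of $T_u(E^\pm H+H E^\pm)$, forcing one to carry the quadratic piece $\tfrac{\alpha}{4}(E^\pm H+H E^\pm)$ explicitly through the computation; in the present setting this term collapses immediately and the remainder of the proof reduces to linear algebra.
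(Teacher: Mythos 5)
Your proposal is correct and follows essentially the same route as the paper's own proof: represent the twisted generators in the fundamental evaluation module (where the quadratic tail $E^\pm H+HE^\pm$ vanishes, so $T_u(\wt{E}^\pm)=(u\pm t)T_u(E^\pm)$ with the paper's convention $T_u(\alpha)=1$), use the $H$-intertwining condition to force a diagonal $K$, and solve the remaining scalar equation to get $k=(t+u)/(t-u)$, matching \eqref{GKsl2s} with $c=t$ and $a=b=0$. Your added remarks on the counit killing the coproduct tails and on the compatibility of the $\wt{E}^+$ and $\wt{E}^-$ constraints are correct elaborations of steps the paper leaves implicit.
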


\begin{proof}
The representation $T_u$ of the generators of $\Ygh$ is given by
\begin{align}
T_{u}(H) &= e_{1,1} - e_{2,2} \,, \qquad T_{u}(\wt{E}^+) = (u+t) \, e_{1,2} \,, \qquad T_{u}(\wt{E}^-) = (u-t) \, e_{1,2} \,.
\end{align}
Let $K(u)$ be any $2\times2$ matrix. Then the intertwining equation for $H$ restricts $K(u)$ to be of the diagonal form, thus up to an overall scalar factor, $K(u)=e_{1,1}+k\,e_{2,2}$. Next, the intertwining equation for $\wt{E}^\pm$ has a unique solution $k=\dfrac{t+u}{t-u}$ which coincides with \eqref{GKsl2s} provided $c=t$ and $a=b=0$.
\end{proof}


\vspace{0.1in}

\subsection{Vector boundary} \label{Sec:3.2}

Consider the reflection equation \eqref{REY} on the tensor space $\cV \tc \cV \tc \cW$ with the $R$-matrix defined by \eqref{Rsl2}. Then there exists a solution of the reflection equation,
\begin{align}
K(u) = \left(
\begin{array}{cccc} 
1 & 0   & 0   & 0 \\
0 & 1-k & k   & 0 \\
0 & k   & 1-k & 0 \\
0 & 0   & 0   & 1
\end{array} \right) ,\qquad\text{where}\qquad k = \frac{2u}{c^2-(u-1)^2} \,, \label{KVsl2}
\end{align}
and $c \in \mathbb{C}$ is an arbitrary complex number.

This $K$-matrix satisfies the intertwining equation
\begin{align}
(T_{-u}\tc T_s)[\Delta(b)] \, K(u) - K(u) \, (T_{u}\tc T_s)[\Delta(b)] = 0 \quad\text{for all} \quad b \in \mathfrak{sl}(2)\,. 
\end{align}
Thus we call the level-zero $\sl(2)$ generators $E^\pm$ and $H$ the \textit{preserved} generators, while the level-one generators $\wh{E}^\pm$ and $\wh{H}$ are the \textit{broken} generators. This setup leads to the following involution and the twisted Yangian. 

\begin{prop} 
Let $\theta$ be a trivial involution of the Lie algebra $\g=\sl(2)$, 
\begin{align} \label{Aut4}
\theta(H) = H \,, \qquad \theta(E^{\pm}) = E^{\pm} \qquad \Longrightarrow \qquad \theta(\g) = \g \,.
\end{align}
Then it can be extended to a non-trivial involution $\bth$ of $\ytwo$ such that
\begin{align} \label{t4}
\bth(\wh{H}) = -\wh{H} \,, \qquad \bth(\wh{E}^{\pm}) = -\wh{E}^{\pm} \,, \qquad \bth(\alpha) = -\alpha\,.
\end{align}
\end{prop}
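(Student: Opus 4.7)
The plan is to define $\bth$ on the generators of $\ytwo$ by \eqref{Aut4}--\eqref{t4}, extend it multiplicatively as a $\bb{C}$-algebra map on the free associative $\bb{C}[\alpha]$-algebra on $\{H,E^\pm,\wh H,\wh E^\pm\}$ (with $\alpha\mapsto-\alpha$), and then check that the two-sided ideal of defining relations of $\ytwo$ is preserved, so that $\bth$ descends to an algebra endomorphism of $\ytwo$. Involutivity is then immediate, since $\bth^2$ fixes each generator and $\alpha$.

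The key observation is that the prescribed $\bth$ is the parity automorphism associated with the canonical filtration of the Yangian: assigning the level $L=0$ to the $\sl(2)$-generators and $L=1$ to $\wh H,\wh E^\pm$ and to $\alpha$, one has $\bth(x)=(-1)^{L(x)}x$ on generators, extended multiplicatively. This prescription descends to $\ytwo$ precisely when every defining relation is homogeneous with respect to $L$, so the verification reduces to counting the total level on each side of each relation.

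Inspecting the relations listed just before \eqref{cop2}: the pure $\sl(2)$ brackets have $L=0$ on both sides; the mixed brackets $[H,\wh E^\pm]=\pm 2\wh E^\pm$, $[E^\pm,\wh E^\mp]=\pm\wh H$ and $[H,\wh H]=0$ have $L=1$ on both sides; and the modified Serre relation \eqref{SerreSL2} has $L=3$ on both sides, since each double commutator on the LHS contains exactly three hatted generators, while on the RHS the factor $\alpha^2$ together with the single hatted generator in $\wh j^{\{d}j^e j^{g\}}$ again sum to level three. Hence all defining relations are $L$-homogeneous, $\bth$ descends to an algebra automorphism of $\ytwo$, and together with $\bth^2=\mathrm{id}$ on generators this gives the required involution.

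The only step that takes any thought is the Serre check, but it is purely a matter of hat- and $\alpha$-counting, with no explicit commutator evaluation required; this is why the proof was declared straightforward. As a consistency check, the level-two generators $\dwh j^a$ of \eqref{J2}, being commutators of two hatted objects, are automatically fixed by $\bth$, in agreement with the general picture for trivial involutions announced earlier and with the coideal compatibility \eqref{coidealY} that will be used to define $\Ygg$ in the following subsection.
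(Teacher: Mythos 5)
Your argument is correct and is exactly the verification the paper has in mind: the paper omits the proof as ``straightforward,'' and its remark that $\alpha$ is a formal level-one parameter making both sides of \eqref{Serre} and \eqref{SerreSL2} the same level is precisely the $L$-homogeneity you check, so the sign automorphism $(-1)^{L}$ descends to $\ytwo$ and squares to the identity. Your consistency check that $\dwh{j}{}^{a}$ is fixed also matches the paper's stated action of $\bth$ on level-two generators.
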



\begin{prop}  \label{TY:sl2v}
The twisted Yangian $\Ygg$ of type II for $\g=\sl(2)$ and $\theta(\g)=\g$ is the $\bth$--fixed coideal subalgebra of $\Yg$ generated by all level-zero generators and the level-two twisted Yangian generators
\begin{align} \label{tw4}
\wt{\wt{E}}{}^{\pm} &= \pm \frac{1}{2}\left( [\wh{H},\,\wh{E}^\pm] + \alpha\, \big(\pm t\, \wh{E}^\pm + H\,\wh{E}^\pm - E^\pm \wh{H} \big) \right) \qquad\text{and}\qquad \wt{\wt{H}} = [\,\wt{\wt{E}}{}^+,\,E^-] \,,
\end{align}
where $t\in\bb{C}$ is an arbitrary complex number.
\end{prop}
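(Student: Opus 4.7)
The plan is to establish two properties of the generators $\wt{\wt{E}}{}^{\pm}$ and $\wt{\wt{H}}$: first, invariance under the involution $\bth$, so that they lie in $\bth(\Yg)$; second, the left coideal property $\Delta(\cdot)\in\Yg\tc\Ygg$. Both will follow by direct computation from \eqref{t4} and \eqref{cop2}.

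For $\bth$-invariance, I would check each summand of $\wt{\wt{E}}{}^{\pm}$ separately. By \eqref{t4}, the commutator $[\wh{H},\wh{E}^\pm]$ is bilinear in level-one generators and so picks up two minus signs, yielding $+1$. Each remaining term $\alpha t\,\wh{E}^\pm$, $\alpha H\wh{E}^\pm$ and $\alpha E^\pm\wh{H}$ carries exactly one $\alpha$ and one level-one generator, again producing $(-1)(-1)=+1$. Hence $\bth(\wt{\wt{E}}{}^{\pm})=\wt{\wt{E}}{}^{\pm}$. Since $E^-$ is level-zero and $\bth$ is an algebra involution, $\wt{\wt{H}}=[\wt{\wt{E}}{}^+,E^-]$ is then automatically $\bth$-fixed.

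For the coideal property, the main computation is $\Delta([\wh{H},\wh{E}^\pm])=[\Delta(\wh{H}),\Delta(\wh{E}^\pm)]$ expanded via \eqref{cop2}. This produces the primitive part $[\wh{H},\wh{E}^\pm]\tc 1 + 1\tc[\wh{H},\wh{E}^\pm]$ together with $\alpha$-linear cross terms (pairing the nontrivial piece of one coproduct with the primitive part of the other) and $\alpha^2$-quadratic cross terms (coming from the two nontrivial pieces together). I would then add the coproducts of $\pm\tfrac{\alpha t}{2}\wh{E}^\pm$, $\pm\tfrac{\alpha}{2}H\wh{E}^\pm$ and $\mp\tfrac{\alpha}{2}E^\pm\wh{H}$, again computed from \eqref{cop2}. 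The precise coefficients in $\wt{\wt{E}}{}^{\pm}$ are engineered so that, after summing, every remaining term has its right tensor factor built from level-zero generators or from $\wt{\wt{E}}{}^{\pm}$ itself; since level-zero generators lie in $\Ygg$ by construction, this places $\Delta(\wt{\wt{E}}{}^{\pm})$ into $\Yg\tc\Ygg$, matching the schematic pattern \eqref{twistIIcop} specialized to $\g=\sl(2)$.

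The main obstacle will be the algebraic bookkeeping: the commutator expansion together with the $H\wh{E}^\pm$ and $E^\pm\wh{H}$ coproducts produces a substantial number of cross terms which must be regrouped using the $\sl(2)$ relations $[E^+,E^-]=H$ and $[H,E^\pm]=\pm 2E^\pm$ to display the desired form. Once $\Delta(\wt{\wt{E}}{}^{\pm})\in\Yg\tc\Ygg$ is verified, the coideal property for $\wt{\wt{H}}$ follows immediately: $\Delta(\wt{\wt{H}})=[\Delta(\wt{\wt{E}}{}^+),\Delta(E^-)]$ lies in $\Yg\tc\Ygg$ because $\Delta$ is an algebra homomorphism and $\Yg\tc\Ygg$ is closed under multiplication.
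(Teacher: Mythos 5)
Your proposal is correct and follows essentially the same route as the paper: check that the generators lie in the positive eigenspace of $\bth$ (your sign count is right, since each term carries an even number of sign-flipping factors among $\alpha$, $\wh{E}^\pm$, $\wh{H}$), then expand the coproducts via \eqref{cop2} and observe that all non-primitive terms have level-zero right tensor factors. The only difference is that you deduce the coideal property for $\wt{\wt{H}}$ from that of $\wt{\wt{E}}{}^{+}$ using that $\Delta$ is an algebra homomorphism and $\Yg\tc\Ygg$ is a subalgebra, whereas the paper computes $\Delta(\wt{\wt{H}})$ explicitly; your shortcut is valid and slightly cleaner.
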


\begin{proof}
The twisted generators \eqref{tw4} are in the positive eigenspace of the involution $\bth$ and satisfy the coideal property
\begin{align}
\Delta\big(\,\wt{\wt{E}}{}^{\pm}\,\big) &= \wt{\wt{E}}{}^{\pm} \tc 1 + 1 \tc \wt{\wt{E}}{}^\pm \pm \alpha\, (\wh{E}^{\pm} \tc H - \wh{H} \tc E^{\pm}) + \frac{\alpha^2}{4} E^\pm \tc (H^2 \pm (t+2) H) \el
& \quad -\frac{\alpha^2}{4} \big( (4 E^\pm E^\mp +H^2 +(t+2) H)\tc E^+ + 4 E^\mp\tc E^\pm E^\pm + 2 H\tc E^\pm H \big)  \el 
& \in \Yg \tc \Ygg \,, \phantom{\frac{\alpha^2}{4}} \\
\Delta\big(\,\wt{\wt{H}}\,\big) &= \wt{\wt{H}} \tc 1 + 1 \tc \wt{\wt{H}} + 2 \alpha\, (\wh{E}^{-} \tc E^{+} - \wh{E}^{+} \tc E^{-} ) + \frac{\alpha^2}{4} H\tc (4E^- E^+ - H^2) \el
& \quad - \frac{\alpha^2}{4}\big((4 E^- E^+ + H^2)\tc H + 2\,t\, (E^+ \tc E^- -E^- \tc E^+ ) + 4 (E^+ \tc E^- H+E^- \tc E^+ H)\big)
\el 
& \in \Yg \tc \Ygg \,.
\end{align}
The same properties for the level-zero generators follow from the definition.
\end{proof}

\begin{prop}
The twisted Yangian $\Ygg$ defined above with $t=-2$ is a reflection algebra for a vector boundary.
\end{prop}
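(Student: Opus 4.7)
The plan is to mirror the strategy used for Propositions \ref{B1}, \ref{B2} and \ref{TY:sl2s}: first use the preserved level-zero $\sl(2)$ symmetry to fix the tensorial shape of $K(u)$, and then use the twisted level-two generators to pin down the scalar entries and the value of $t$.

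First I would write down the matrices $(T_u\tc T_s)[\Delta(b)]$ and $(T_{-u}\tc T_s)[\Delta(b)]$ for $b\in\{E^\pm,H\}$. Since these are the preserved (level-zero) generators, the intertwining equation
\begin{align*}
(T_{-u}\tc T_s)[\Delta(b)] \, K(u) = K(u) \, (T_u\tc T_s)[\Delta(b)]
\end{align*}
forces $K(u)$ to be an $\mathfrak{sl}(2)$-intertwiner on $\cV\tc\cW\cong \cV\tc\cV$, hence a linear combination of the identity and the permutation-like block of \eqref{KVsl2}. Thus, up to an overall scalar, $K(u)$ is necessarily of the form displayed in \eqref{KVsl2} with a single unknown scalar function $k=k(u,s)$.

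Next I would compute the representation of the twisted level-two generators. Using \eqref{tw4} together with the coproducts already derived in the proof of Proposition \ref{TY:sl2v}, I would evaluate the $4\times 4$ matrices $(T_u\tc T_s)[\Delta(\wt{\wt{E}}{}^\pm)]$, $(T_{-u}\tc T_s)[\Delta(\wt{\wt{E}}{}^\pm)]$, and similarly for $\wt{\wt{H}}$, treating $t$ and $s$ as free parameters. Substituting the restricted ansatz for $K(u)$ into the intertwining equation produces a system of rational equations in $u$, $s$, $t$ and $k$. Matching coefficients of different powers of $u$ in these equations, the generator $\wt{\wt{H}}$ contributes the constraint that fixes $t$, while $\wt{\wt{E}}{}^\pm$ uniquely determine $k$ in terms of $u$, $s$ and $t$.

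The main obstacle I anticipate is bookkeeping: the coproducts of the level-two generators contain many cubic and quadratic terms in $E^\pm,H,\wh{E}^\pm,\wh{H}$, so the $4\times 4$ matrix elements $(T_u\tc T_s)[\Delta(\wt{\wt{E}}{}^\pm)]$ and $(T_u\tc T_s)[\Delta(\wt{\wt{H}})]$ are polynomials of degree two in $u$ and $s$ with several mixed terms, and the intertwining equation must be checked entry by entry. However, once the system is assembled it is linear in $k$ after clearing denominators, and one expects a consistent solution only for a specific value of $t$. The expected outcome is $t=-2$ together with
\begin{align*}
k(u,s) = \frac{2u}{c^2-(u-1)^2}, \qquad c = c(s),
\end{align*}
which matches \eqref{KVsl2} with $c$ identified (up to shift) as a function of the boundary spectral parameter $s$. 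Finally, having verified the intertwining equation for the full generating set $\{E^\pm,H,\wt{\wt{E}}{}^\pm,\wt{\wt{H}}\}$ of $\Ygg$, one concludes that \eqref{KVsl2} does satisfy the reflection equation \eqref{REY}, which is the defining property of a Yangian reflection algebra for a vector boundary.
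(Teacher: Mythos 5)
Your plan is correct and follows essentially the same route as the paper: first the intertwining equation for the preserved level-zero generators forces $K(u)$ into the form \eqref{KVsl2} with one unknown scalar $k$, and then the intertwining equation for the twisted level-two generators $\dwt{E}{}^{\pm}$, $\dwt{H}$ simultaneously forces $t=-2$ and fixes $k=2u/(s^2-(u-1)^2)$, which coincides with the known solution \eqref{KVsl2} under $c=s$. The only cosmetic difference is that the paper finds $c=s$ exactly rather than ``up to shift.''
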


\begin{proof}
The representation $(T_u\tc T_s)$ of the coproducts of the Lie generators of $\Ygg$ is given by
\begin{align} \label{TYsl2:Lie}
(T_{u}\tc T_s)[\Delta(E^+)] &= e_{1,2} + e_{1,3} + e_{2,4} + e_{3,4} \,, & (T_{u}\tc T_s)[\Delta(H)]  &= 2\,( e_{1,1} + e_{4,4} )\,,\el
(T_{u}\tc T_s)[\Delta(E^-)] &= e_{2,1} + e_{3,1} + e_{4,2} + e_{4,3} \,, 
\end{align}
and of the twisted Yangian generators by
\begin{align} \label{TYsl2:TY}
(T_{u}\tc T_s)[\Delta(\dwt{E}{}^{+})] &= \alpha \, e_{1,2} + \beta \, e_{1,3} + \gamma \, e_{2,4} + \delta \, e_{3,4} \,, \el
(T_{u}\tc T_s)[\Delta(\dwt{E}{}^{-})] &= \delta \, e_{2,1} + \gamma \, e_{3,1} + \beta \, e_{4,2} + \alpha \, e_{4,3} \,, \el
(T_{u}\tc T_s)[\Delta(\dwt{H}\,)]\; &= \lambda\, e_{1,1} + \mu \, e_{2,2} - \mu\, e_{3,3} - \lambda\,e_{4,4} + \eta\,(e_{2,3}-e_{3,2})\,,
\end{align}
where
\begin{align}
\alpha &= \tfrac{1}{16}((4s+t+2)^2\!-(t+4)^2)-u-\tfrac{1}{2} \,, & \lambda &= \tfrac{1}{16}((4 u + t + 2)^2\! + (4 c + t + 2)^2\! - 2 (t + 2)^2)\! - \tfrac{1}{2} \,, \el
\beta &= \tfrac{1}{4}(2u+1) (2u+t+3) \,, & \mu &= \tfrac{1}{16}((4 u + t + 2)^2\! - (4 c + t + 2)^2) + 1 \,,\el
\gamma &= \tfrac{1}{4}(2u-1) (2u+t+1) \,, & \eta &= -(2u+t/2+1)  \,, \el
\delta &= \tfrac{1}{16}((4s+t+2)^2-t^2)+u-\tfrac{1}{2} \,.
\end{align}
Let $K(u)$ be any $4\times4$ matrix. Then the intertwining equation for the Lie generators \eqref{TYsl2:Lie} constrains $K(u)$ to the form given in \eqref{KVsl2} up to an unknown function $k$ and an overall scalar factor. Next, the intertwining equation for the twisted Yangian generators \eqref{TYsl2:TY} constrains $t=-2$ and has a unique solution $k = \dfrac{2u}{s^2-(u-1)^2}$ which coincides with \eqref{KVsl2} provided $c = s\,$.
\end{proof}


\begin{remark}

Following the same pattern as in the quantum affine case, the twisted Yangian given in proposition \ref{TY:sl2s} is also compatible with a vector boundary. The corresponding fundamental reflection matrix is 
\begin{equation}
K(z)=\left(
\begin{array}{cccc}
 1 & 0 & 0 & 0 \\
 0 & 1+(u-1-c)k & (c-s)k & 0 \\
 0 & (c+s)k & k'+(1-c-u)k & 0 \\
 0 & 0 & 0 & k' \\
\end{array}
\right),
\end{equation}
where
\begin{align}
k=\frac{2 u}{(c-u)(1+s-u)(u+s-1)} \,,\qquad k' = \frac{c+u}{c-u}\,,
\end{align}
and $c,s\in\bb{C}$ are arbitrary complex numbers.

The twisted Yangian given in proposition \ref{TY:sl2v} is compatible with the singlet boundary, and the corresponding reflection matrix is trivial.

\end{remark}


\begin{remark}
Let $T_{l,u}$ be a $(2l+1)$-dimensional representation of $\ytwo$. Let $\Vl$ be a vector space defined in remark \ref{remVl}. The generators of $\ytwo$ act on the space $\Vl$ by \cite{CP}
\begin{align}
T_{l,u}(E^{\pm})\,v_m &= \left((l\mp m) (l\pm m+1)\right)^{1/2} v_{m+1}\,, & T_{l,u}(H)\,v_m &= 2m\,v_{m}\,,\el
T_{l,u}(\wh{E}^{\pm}_0)\,v_m &= 2\,l\,u \left((l\mp m)(l\pm m+1)\right)^{1/2} v_{m+1}\,, & T_{l,u}(\wh{H})\,v_m &= 4m\,l\,u\, v_{m}\,, \label{lrep}
\end{align}
Let the boundary vector space $\cW_l$ and the boundary representation $T_{l,s}$ be defined in the same way. Then all the constructions of twisted Yangians presented above apply directly for any finite-dimensional representation $T_{l,u}$ and lead to unique solution (for fixed $l$) of the reflection equation.
\end{remark}

To finalize this section we want to note that the twisted Yangian of type I given in proposition \ref{TY:sl2s} is isomorphic to the orthogonal twisted Yangian $\mathcal{Y}^+(2)$ of \cite{Ols,MNO} and to $B(2,1)$ of \cite{MR}. The twisted Yangian of type II given in proposition \ref{TY:sl2v} is isomorphic to the symplectic twisted Yangian $\mathcal{Y}^-(2)$ of \cite{Ols,MNO} and to $B(2,0)$ of \cite{MR} (see also \cite{BC}).


\vspace{0.1in}

\subsection{Yangian limit} \label{Sec:3.3}

The algebra $\Ua$ does not contain any singular elements, and in the $q\to1$ limit specializes to $\cU(\g)$ via the composite map $\varphi$ such that 
\begin{align}
\Ua \underset{q\to1}{\longrightarrow} \cU(\cL(g)) \underset{z\to1}{\longrightarrow} \cU(g) \,.
\end{align}
Set $q={\rm e}^{\alpha \hbar}$ and $z={\rm e}^{-2\hbar u}$. Then the $q\to1$ limit is obtained by setting $\hbar\to0$, where $\hbar$ is an indeterminate deformation parameter that can be regarded as the Planck's constant when the Yangian is an auxiliary algebra and $\alpha$ is a formal parameter used to track the `level' of the Yangian generators and is usually set to unity \cite{MacKay:2004tc}. Consider an extended algebra 
\begin{align}
\wt\Ua = \Ua \tc_{\bb{C}[[\hbar]]} \bb{C}((\hbar)) \,.
\end{align}
Here $\bb{C}[[\hbar]]$ (resp.\ $\bb{C}((\hbar))$\,) denotes the formal power (resp.\ Laurent) series in $\hbar$. This algebra contains singular elements those that do not have a properly defined $q\to1$ limit. Let $\cA\subset\wt\Ua$ be the subalgebra generated $\Ua$ and $\hbar^{-1} {\rm ker}(\varphi)$. Then the Yangian $\Yg$ as an algebra is isomorphic to the quotient $\cA/\hbar\cA\cong\Yg$ \cite{Dr87}.

In such a way the Yangian $\ytwo$ can be obtained by taking the rational $q\to1$ limit of certain singular combinations of the generators of $\sltwo$. 
Here we shall be very concise and rely a lot on the evaluation map (for some heuristic arguments see appendix \ref{app:A}; for more thorough considerations see e.g.\ \cite{Tolstoy,GTL,GM1}).
The Yangian generators of $\ytwo$ are obtained by the following prescription,
\begin{align} \label{Ylimit}
\wh{E}^\pm = \pm \alpha\,\underset{q\to1}{\rm lim}\, \frac{\xi^\mp_0 - \xi^{\pm}_1}{q-q^{-1}} \,.
\end{align}
The Lie algebra generators are recovered by
\begin{align} \label{Llimit}
E^\pm = \underset{q\to1}{\rm lim}\, \xi^{\pm}_1 = \underset{q\to1}{\rm lim}\, \xi^{\mp}_0 
\qquad\text{and}\qquad 
H = \underset{q\to1}{\rm lim} \,h_1 = - \underset{q\to1}{\rm lim} \,h_0 \,.
\end{align}
We will next show that the quantum affine reflection algebras considered in sections \ref{Sec:2.1} and \ref{Sec:2.2} in the rational $q\to1$ limit specializes to the Yangian reflection algebras considered in sections \ref{Sec:3.1} and \ref{Sec:3.2} respectively.

\begin{prop}
The coideal subalgebra $\cB\subset\sltwo$ defined by the Proposition \ref{B1} in the rational $q\to1$ limit specializes to the twisted Yangian $\Y(\g,\h)$ of type I defined by the Proposition \ref{TY:sl2s}.
\end{prop}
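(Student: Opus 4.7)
The plan is to exhibit, under the specialization $q = e^{\alpha\hbar}$ of Section \ref{Sec:3.3}, an identification of generators of $\cB$ with generators of $\Y(\g,\h)$ to leading order in $\hbar$. The first step is to fix the rescaling of the boundary parameter by matching reflection matrices: substituting $z = e^{-2\hbar u}$ and $c = e^{-2\hbar t}$ into the affine singlet K-matrix \eqref{GKsl2qs} and taking $\hbar \to 0$ reproduces the Yangian singlet K-matrix \eqref{GKsl2s} with parameter $t$, which justifies this scaling. With this choice one has $d_+ = c/q = e^{-(2t+\alpha)\hbar}$ and $d_- = cq = e^{-(2t-\alpha)\hbar}$.

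For the Cartan direction the identification is immediate, since $k_0 k_1^{-1} = q^{h_0 - h_1}$ equals $q^{-2H}$ in the evaluation representation by \eqref{Llimit}, and $(1 - k_0 k_1^{-1})/(q - q^{-1}) \to H$ as $\hbar \to 0$. For the twisted affine generators, the prescription \eqref{Ylimit} supplies the key expansions $\xi^+_0 - \xi^-_1 = -2\hbar\wh{E}^- + O(\hbar^2)$ and $\xi^-_0 - \xi^+_1 = 2\hbar\wh{E}^+ + O(\hbar^2)$. Setting $\alpha = 1$ for brevity and combining these with $q^{-2} = 1 - 2\hbar + O(\hbar^2)$, $k_0 = 1 - \hbar H + O(\hbar^2)$ and the expansion of $d_+$, a direct expansion of $B^+_0 = q^{-2}\xi^+_0 - d_+ \xi^-_1 k_0$ gives
\[
B^+_0 = \hbar\bigl(-2\wh{E}^- + (2t-1)E^- + E^-H\bigr) + O(\hbar^2).
\]
Using $E^-H = HE^- + 2E^-$ one verifies that the bracket equals $-2\wt{E}^-$, with $\wt{E}^-$ as in \eqref{tw3}. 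The parallel computation on $B^-_0 = \xi^-_0 k_0 - d_- k_1^{-1}\xi^+_1 k_0$ yields $2\hbar\wt{E}^+ + O(\hbar^2)$. Hence $\hbar^{-1}B^\mp_0 \to \pm 2\wt{E}^\pm$, and together with the Cartan limit this identifies the generators of $\cB$ with those of $\Y(\g,\h)$ from Proposition \ref{TY:sl2s}.

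The remaining task is to upgrade this generator-level correspondence to a genuine algebra specialization. Following Section \ref{Sec:3.3}, one passes to the extended algebra $\wt\Ua$ and considers the subalgebra $\cA_\cB$ generated by $\cB$ together with the singular combinations $\hbar^{-1}B^\pm_0$ and $(q-q^{-1})^{-1}(1-k_0 k_1^{-1})$; the candidate identification is $\cA_\cB/\hbar\cA_\cB \cong \Y(\g,\h)$. The coideal structure passes to the limit without issue: expanding the coproducts \eqref{coideal1} term by term in $\hbar$ reproduces the Yangian coproduct of $\wt{E}^\pm$ in $\ytwo\tc\Y(\g,\h)$ at the leading non-trivial order, so the affine coideal inclusion specializes to the Yangian one. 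The main obstacle, as stressed in the introduction, is to rule out extraneous relations in $\cA_\cB/\hbar\cA_\cB$ so that one obtains exactly $\Y(\g,\h)$ and not a proper quotient or extension; a clean argument here would proceed through the Drinfeld second realization of $\cB$, which is presently not available, so the proof as given necessarily retains the heuristic character of Appendix \ref{app:A}.
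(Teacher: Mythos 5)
Your proposal is correct and follows essentially the same route as the paper: substitute $c\to q^{-2t}$, expand $k_i$ and the combinations $\xi^\mp_0-\xi^\pm_1$ to first order in $\hbar$, and identify the leading singular parts of $B^\pm_0$ (up to normalization by $q-q^{-1}\sim 2\hbar$) with $\mp\wt{E}^\mp$ of \eqref{tw3}, together with the Cartan limit \eqref{YL:sl2s_cartan}. Your added justification of the scaling via the K-matrices and your closing caveat about extraneous relations are reasonable supplements, but the core computation coincides with the paper's proof, which is likewise heuristic at the algebra level.
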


\begin{proof}
Recall that $\cB$ is generated by the twisted affine generators \eqref{tw1} 
\begin{align}
B^{+}_0 = \xi'^+_0 k_0 - c / q\, \xi^-_1 k_0\,, \qquad 
B^{-}_0 = \xi^-_0 k_0 - c\,q\, \xi'^+_1 k_0 \,.
\end{align}
and the Cartan element $k_0 k_1^{-1}$. Note that $\xi'^+_0 k_0 = q^{-2} \xi^{+}_0$. We will be using the following expansion,
\begin{align} \label{q1exp}
k_i \to 1 + (q-1) h_i + \mathcal{O}(\alpha^2) \,, 
\end{align}
where $\mathcal{O}(\alpha^2)$ represent the higher order in $\alpha$ terms (here $\alpha\sim\hbar$).
Then by substituting $c\to q^{-2 t}$ and with the help of \eqref{Ylimit} and \eqref{Llimit} we find
\begin{align}
\underset{q\to1}{\rm lim}\, \frac{\alpha\,q^2 B_0^{+}}{q-q^{-1}} &= \underset{q\to1}{\rm lim} \left[ \alpha\,\frac{\xi^+_0 - \xi^-_1}{q-q^{-1}} + \alpha \, \frac{q-1}{q-q^{-1}}\big( 2 t \, \xi_1^- - \xi_1^- - \xi_1^- h_0 \big) + \mathcal{O}(\alpha^2) \right] \hspace{2.36cm} \el
& = -\wh{E}^{-} + \alpha\,t\,E^- + \frac{\alpha}{4}\,(E^{-}H + H\,E^-) = - \wt{E}^- \,,
\end{align}
and
\begin{align}
\underset{q\to1}{\rm lim}\, \frac{\alpha\,B_0^{-}}{q-q^{-1}} &= \underset{q\to1}{\rm lim} \left[ \alpha\,\frac{\xi^-_0 - \xi^+_1}{q-q^{-1}} + \alpha \, \frac{q-1}{q-q^{-1}}\big( \xi_0^- h_0 - (1-2 t)\xi_1^+ - \xi_1^+ h_0 + h_1\xi_1^+  \big) + \mathcal{O}(\alpha^2) \right] \el
& = \wh{E}^{+} + \alpha\,t\,E^{+} + \frac{\alpha}{4}\,(E^{+}H + H\,E^{+}) = \wt{E}^+ \,.
\end{align}
Finally,
\begin{align} \label{YL:sl2s_cartan}
\underset{q\to1}{\rm lim}\, \frac{1-k_0 k_1^{-1}}{q-q^{-1}} &= H \,.
\end{align}
These coincide with \eqref{tw3} as required.
\end{proof}

\begin{prop}
The quantum affine coideal subalgebra $\cB\subset\sltwo$ defined by the Proposition \ref{B2} in the rational $q\to1$ limit specializes to the twisted Yangian $\Y(\g,\g)$ of type II defined by the Proposition \ref{TY:sl2v}.
\end{prop}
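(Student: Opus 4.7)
The plan is to mirror the strategy of the preceding proposition, but now to perform a second-order expansion in $\hbar = \alpha^{-1}\log q$, since the twisted Yangian of Type II is generated at level two rather than at level one. The level-zero generators $k_1$ and $\xi^\pm_1$ of $\cB$ reduce to $H$ and $E^\pm$ via \eqref{q1exp} and \eqref{Llimit}, so the full $\sl(2)$ sits inside the Yangian limit of $\cB$ and matches the level-zero part of $\Y(\g,\g)$.

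The core of the proof is the analysis of the twisted affine generator $B^-_0 = \xi^-_0 k_0 - d_-((\adr\,\xi^+_1\xi^+_1)\xi'^+_0)k_0$ under the constraint $d_- = q^2/(q+q^{-1})$. Using $(\adr\,\xi^+_i)a = -k_i^{-1}[\xi^+_i,a]$ together with the classical limits $\xi^+_1\to E^+$, $\xi'^+_0\to E^-$, one computes $(\adr\,\xi^+_1\xi^+_1)\xi'^+_0 \to -2E^+$; combined with $d_-\to 1/2$ and $\xi^-_0 k_0 \to E^+$, the $O(\hbar^0)$ value of $B^-_0$ is $2E^+$, which already lies in $\cB$ as (twice) the image of $\xi^+_1$. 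A suitable subtraction $B^-_0 - 2\xi^+_1$ (up to $q$-corrections), still in $\cB$, therefore has vanishing $\hbar^0$ piece, and its subsequent $\hbar$-expansion produces the genuinely new generators.

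Since the Type II involution acts as $\bth(\wh{E}^\pm)=-\wh{E}^\pm$ and $\bth(\wh{H})=-\wh{H}$, no level-one Yangian generators belong to $\Y(\g,\g)$, and one expects the $O(\hbar)$ coefficient of the subtracted $B^-_0$ either to vanish or to collapse to a further Lie-algebra element that can again be absorbed. The $O(\hbar^2)$ coefficient, after dividing by the singular factor $(q-q^{-1})^2\sim(2\alpha\hbar)^2$, should then match the level-two generator $\wt{\wt{E}}^+$ of \eqref{tw4}, and the quantum normalization $q^2 d_+ = q^{-2}d_- = (q+q^{-1})^{-1}$ should pin down the Yangian parameter to $t=-2$. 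The remaining generators $\wt{\wt{E}}^-$ and $\wt{\wt{H}}$ then follow from $B^+_0$ via the analogous computation, from the identity $B^-_0 = -d_-[(\adr\,\xi^+_1\xi^+_1)B^+_0 k_1]k_1^{-1}$ noted in the preceding remark, and from $\wt{\wt{H}} = [\wt{\wt{E}}^+, E^-]$.

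The main obstacle will be the explicit $\hbar$-expansion of $(\adr\,\xi^+_1\xi^+_1)\xi'^+_0$ to second order: the $k_1^{-1}$ prefactors inside each $\adr$ yield $O(\hbar)$ corrections that couple to the $O(\hbar)$ expansion of $\xi'^+_0 = k_0^{-1}\xi^+_0$ (which introduces $\wh{E}^-$ via \eqref{Ylimit}), while the outer product with $\xi^-_0 k_0$ introduces further $\wh{E}^+$-type corrections. Reorganizing the resulting cubic combination into $\tfrac{1}{2}([\wh{H},\wh{E}^+] + \alpha(-2\wh{E}^+ + H\wh{E}^+ - E^+\wh{H}))$ is mechanical but tedious. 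A cleaner cross-check, parallel to the $K$-matrix arguments above, is to specialize the quantum intertwining equation \eqref{IntKV} with appropriately rescaled $z,s,d_\pm$ and verify that it becomes the rational equation \eqref{IntKVY}, with the $K$-matrix \eqref{KVsl2q} going to \eqref{KVsl2}; by uniqueness of the $K$-matrix for a fixed irreducible representation, this implies the algebra specialization indirectly.
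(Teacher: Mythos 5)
Your plan is essentially the paper's own proof: it forms precisely the singular combinations $(q^{-2}B^{-}_0 - 2\,\xi'^{+}_1)/(q-q^{-1})^2$ and $(q^{2}B^{+}_0 - 2\,\xi^{-}_1)/(q-q^{-1})^2$, expands $\xi^{\mp}_0$ and $k_i$ to second order in $(q-1)$, and obtains $\dwt{E}{}^{\pm}$ with $t=-2$ up to cubic level-zero terms (e.g.\ $-\tfrac{\alpha^2}{4}E^+(1+H)^2$) that are absorbed into $\Y(\g,\g)$, exactly as you anticipate. The only caveat is your closing suggestion: agreement of the $K$-matrices in the fundamental representation would not by itself establish the algebra-level specialization, so the direct expansion cannot be replaced by that cross-check.
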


\begin{proof}
Recall that $\cB$ is generated by all level-zero generators plus twisted affine generators \eqref{tw2} and  \eqref{tw2b},
\begin{equation}
B^{-}_0 = \xi^{-}_0 k_0 - d_- \big(\big(\text{ad}_r\, \xi^+_1 \xi^+_1 \big)\xi'^{+}_0\big) k_0 \,, \qquad
B^{+}_0 = \xi'^{+}_0 k_0 - d_+ \big(\big(\text{ad}_r\, \xi^-_1 \xi^-_1 \big)\xi^{-}_0\big) k_0 \,,
\end{equation} 
where $q^2 d_+ = q^{-2}d_- = (q+q^{-1})^{-1}$.
Next let us note that
\begin{equation}
[\wh{H},\wh{E}^\pm] = \pm \alpha\,\underset{q\to1}{\rm lim}\,\frac{[[\xi^\mp_0 - \xi^{\pm}_1,\xi^\mp_1],\xi^\mp_0 - \xi^{\pm}_1]}{2(q-q^{-1})^2} \,.
\end{equation}
This leads to the following expansion
\begin{align} \label{Xi0exp}
\xi^\mp_0 \to E^\pm \pm 2(q-1)\wh{E}^\pm \pm (q-1)^2([\wh{H}^\pm,\wh{E}^\pm] - \wh{E}^\pm) + \mathcal{O}(\alpha^3) \,.  
\end{align}
We will also need a higher order expansion of $k_i$, namely,
\begin{equation} \label{Kexp}
k_i \to 1 + (q-1) h_i + \tfrac{1}{2}(q-1)^2(h_i^2-h_i) + \mathcal{O}(\alpha^3) \,. 
\end{equation}

Consider the following rational combination,
\begin{equation}
\frac{q^{-2}\,B^{-}_0 - 2 \, \xi'^{+}_1}{(q-q^{-1})^2} 
= 
\frac{(1+q^2) (\xi^+_0-2 \xi^-_1) + q((1+q^2) \xi^-_1\xi^-_0\xi^-_1 - q^2 (\xi^-_1)^2\xi^-_0 - \xi^-_0(\xi^-_1)^2 )k_0 }{(q-q^{-1})^2(1+q^2)} \,.
\end{equation}
Then, by substituting \eqref{Xi0exp} and \eqref{Kexp} and with the help of \eqref{Ylimit} and \eqref{Llimit}, we find
\begin{equation}
\underset{q\to1}{\rm lim}\,\frac{q^{-2}\,B^{-}_0 - 2 \, \xi'^{+}_1}{(q-q^{-1})^2} 
= \tfrac{1}{2}\big( [\wh{H},\wh{E}^+] + \alpha\,(\wh{E}^+H - E^+ \wh{H})\big) -\tfrac{\alpha^2}{8}E^+(1+H)^2 = \dwt{E}{}^+ -\tfrac{\alpha^2}{4}E^+(1+H)^2
\,.
\end{equation}
Here we have used the constraint $t=-2$ for $\dwt{E}{}^+$ implicitly. In a similar way, for the twisted affine generator $B^+_0$, the rational combination
\begin{equation}
\frac{q^{2}\,B^{+}_0 - 2 \,\xi^{-}_1}{(q-q^{-1})^2} 
= 
\frac{q(1+q^2)(\xi^-_0k_0 - 2 \xi^+_1k_1^{-1}) + ((1+q^2) \xi^+_1 \xi^+_0 \xi^+_1  - q^2 \xi^+_0(\xi^+_1)^2 -(\xi^+_1)^2\xi^+_0)k_1^{-2} } {q^3(q-q^{-1})^2(1+q^2)} \,,
\end{equation}
leads to
\begin{equation}
\underset{q\to1}{\rm lim}\,\frac{q^{2}\,B^{+}_0 - 2 \,\xi^{-}_1}{(q-q^{-1})^2}  
= -\tfrac{1}{2}\big( [\wh{H},\wh{E}^-] + \alpha\,(\wh{E}^- H - E^- \wh{H})\big) -\tfrac{\alpha^2}{8}E^-(1-H)^2 = \dwt{E}{}^- -\tfrac{\alpha^2}{8}E^-(1-H)^2
\,.
\end{equation}
Finally, the Lie generators are recovered in the usual way as described above. 
\end{proof}


\smallskip

\section{Reflection algebras for \texorpdfstring{$\glone$}{glone}}

{\bf \noindent Algebra.}
The quantum affine Lie superalgebra $\glone$ in the Drinfeld-Jimbo realization is generated by the fermionic Chevalley generators $\xi^\pm_1$, the Cartan generators $k_1$, $k_2$ (here $k_2=q^{h_2}$, and $h_2$ is the non-supertraceless generator completing the superalgebra $\sl(1|1)$ to $\gl(1|1)\,$), and the affine fermionic Chevalley generators $\xi^\pm_0$ and the corresponding affine Cartan generator $k_0\,$. 
The extended (symmetric) Cartan matrix is given by
\begin{align} \label{GCMgl11}
(\wh{a}_{ij})_{0 \leq i,j \leq 2}=
\left(\begin{array}{cccc}
0&0&-2\\
0&0&2\\
-2&2&0
\end{array}\right) .
\end{align}
The corresponding root space has a basis of two fermionic roots, $\hat\pi = \{\alpha_0,\alpha_1\}.$ The commutation relations are as follows, for $0 \leq i,j \leq 2$ (Chevalley generators corresponding to the Cartan generator $k_2$ are absent):
\begin{align}
[k_i,k_j] = 0\,, \qquad\quad [k_i,\xi^\pm_j] = q^{\pm \wh{a}_{ij}} \xi^\pm_j\,,  \qquad\quad \{\xi^+_i,\xi^-_j\} = \delta_{ij}\frac{k_i-k_i^{-1}}{q-q^{-1}}\,.
\end{align}
Here $\{a,\,b\} = a\,b + b\,a\,$ denotes the anti-commutator. The graded right adjoint action is defined by
\begin{align}
\left(\adr \,\xi^{+}_i \right) a &= (-1)^{[\xi^{+}_i][a]}k_i^{-1} a\, \xi^{+}_i - k_i^{-1} \xi^{+}_i a \,, & \left(\adr \,\xi^{-}_i \right) a &= (-1)^{[\xi^{-}_i][a]}a\, \xi^{-}_i - \xi^{-}_i k_i \,a\, k_i^{-1} \,.
\end{align}
where $(-1)^{[\xi^{\pm}_i][a]}$ is the grading factor. Notice that the block $(\wh{a}_{ij})_{0 \leq i,j \leq 1}$ is trivial, and thus the right adjoint action for this block is equivalent to the regular graded commutator. This shall have important consequences for constructing the reflection algebra for a vector boundary. For this reason we shall also be in the need of a level-one Drinfeld generator $h^+_2$, which has coproduct defined by (see \cite{Zhang})
\begin{align} \label{dB11}
\Delta(\hp) & = \hp \tc 1 + 1 \tc \hp + 2 \, \xi^{+}_{0} k_1 \tc \xi^{+}_{1}.  
\end{align}

{\bf \noindent Representation.}
We define the fundamental evaluation representation $T_z$ of $\glone$ on a graded two-dimensional vector space $\cV$. Let $\cV=\{v_1,v_2\}$ and $\cV'=\{v'_1,v'_2\}$, then $v_1 v'_1 = v'_1 v_1$, $v_2 v'_2 = - v'_2 v_2$, and $v_2 v_2 = v'_2 v'_2 = 0$. Let $e_{j,k}$ be $2\times2$ matrices satisfying $(e_{j,k})_{j',k'} = \delta_{j,j'} \delta_{k,k'}$. 
Then the representation $T_z$ is defined by 
\begin{align} \label{Rep:gl11q}
T_z(\xi^+_1) &= e_{1,2}\,, & T_z(\xi^+_0) &= z\, e_{2,1}\,, & T_z(h_2) &= e_{1,1} - e_{2,2} \,,\el
T_z(\xi^-_1) &= e_{2,1}\,, & T_z(\xi^-_0) &= -z^{-1} e_{1,2}\,, & T_z(h^+_2) &= \frac{z\, q}{q^{-1}-q} (e_{1,1} - e_{2,2}) \,,\el
T_z(k_1)\, &= q\, e_{1,1}+q\, e_{2,2}\,, & T_z(k_0)\, &= q^{-1} e_{1,1}+ q^{-1}e_{2,2}\,, & T_z(k_2) &= q\,e_{1,1}+q^{-1}e_{2,2}\,.
\end{align}
We choose the boundary vector space $\cW$ to be equivalent to $\cV$. Then the boundary representation $T_s$ on $\cW$ is obtained from \eqref{Rep:gl11q} by replacing $z$ with $s$.

The fundamental $R$-matrix satisfying Yang-Baxter equation \eqref{YBE} is given by
\begin{align}
R(z) = \left(
\begin{array}{cccc} 
1 & 0     & 0      & 0 \\
0 & r     & 1-q\,r & 0 \\
0 & 1-r/q & r      & 0 \\
0 & 0     & 0      & -1+(q+1/q)\,r
\end{array} \right) ,\qquad\text{where}\qquad r = \frac{z-1}{q \, z-1/q} \,. \label{Rgl11q}
\end{align}
%


\vspace{0.1in}

\subsection{Singlet boundary} \label{Sec:4.1}

Consider the reflection equation \eqref{RE} on the space $\cV \tc \cV \tc \bb{C}$ with the $R$-matrix defined by \eqref{Rgl11q}. Then the general solution of the reflection equation is \cite{ZhangGL11}
\begin{align}
K(z) = \left(
\begin{array}{cc} 
1 & 0 \\
0 & k
\end{array} \right) ,\qquad\text{where}\qquad k = \frac{c\,z-1}{z(c-z)}\,, \label{GKgl11qs}
\end{align}
and $c\in\mathbb{C}$ is an arbitrary complex number. 

The general solution \eqref{GKgl11qs}, in contrast to \eqref{GKsl2qs}, is already of a diagonal form and thus intertwines all (level-zero) Cartan generators $k_i$, but does not satisfy the intertwining equation neither for any of the Chevalley generators nor for the level-one Cartan generators $h^\pm_2$. 
Hence we call Cartan generators $k_i$ the \textit{preserved} generators, while the generators $\xi^\pm_i$ and $h^\pm_2$ are the \textit{broken} generators. This setup is consistent with the following involution and the quantum affine coideal subalgebra.

\begin{prop}
Let the involution $\Theta$ act on the root space $\Phi$ as
\begin{align} \label{Aut5}
\Theta(\alpha_0) = -\alpha_1 \,.
\end{align}
Then it defines a quantum affine coideal subalgebra $\cB\subset\cA=\glone$
generated by the Cartan elements $k_2$, $k_0k_1^{-1}$, and the twisted affine generators
\begin{align} \label{tw5}
B^{+}_0 &= \xi'^{+}_0 k_0 - d_+ \, \xi^{-}_1 k_0 \,, \qquad\qquad
B^{-}_0 = \xi^{-}_0 k_0 - d_- \, \xi'^{+}_1 k_0 \,, 
\end{align}
where $\xi'^{+}_i = k_i^{-1}\xi^{+}_i$ and $d_\pm \in \mathbb{C}$ are arbitrary complex numbers.
\end{prop}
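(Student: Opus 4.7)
The plan is to verify the left coideal property $\Delta(\cB) \subset \cA \tc \cB$ on a set of generators of $\cB$, mirroring the $\sltwo$ singlet proof of Proposition \ref{B1}. The argument splits into three routine pieces plus one grading check.

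For the Cartan-type generators $k_2$ and $k_0 k_1^{-1}$ the property is immediate: both are group-like with $\Delta(k) = k \tc k$, and both lie in $\cT_\Theta \subset \cB$. Indeed, the combination $\alpha_0 - \alpha_1$ is $\Theta$-fixed by \eqref{Aut5}, while $k_2$ corresponds to a Cartan direction outside the root lattice of $\hat\pi$ and is therefore preserved as well.

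Next I would compute $\Delta(B^{\pm}_0)$ directly from \eqref{cop1}. Two auxiliary identities do all the work, obtained by multiplying out the standard Drinfeld--Jimbo coproducts and collecting terms:
\[
\Delta(\xi'^+_i k_i) = \xi'^+_i k_i \tc 1 + k_i \tc \xi'^+_i k_i, \qquad \Delta(\xi^-_i k_j) = \xi^-_i k_j \tc k_i^{-1} k_j + k_j \tc \xi^-_i k_j.
\]
Substituting these into $B^+_0 = \xi'^+_0 k_0 - d_+\, \xi^-_1 k_0$ and $B^-_0 = \xi^-_0 k_0 - d_-\, \xi'^+_1 k_0$ and collecting by the right tensorand yields
\[
\Delta(B^{+}_0) = \xi'^+_0 k_0 \tc 1 \;-\; d_+\, \xi^-_1 k_0 \tc k_0 k_1^{-1} \;+\; k_0 \tc B^{+}_0,
\]
\[
\Delta(B^{-}_0) = \xi^-_0 k_0 \tc 1 \;-\; d_-\, \xi'^+_1 k_0 \tc k_0 k_1^{-1} \;+\; k_0 \tc B^{-}_0,
\]
whose three summands sit in $\cA \tc 1$, $\cA \tc \cT_\Theta$, and $\cA \tc \cB$ respectively. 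Hence $\Delta(B^{\pm}_0) \in \cA \tc \cB$, as required.

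The only point genuinely new relative to the $\sltwo$ case is the $\bb{Z}/2$-grading of $\glone$. Multiplication on $\cA \tc \cA$ now carries a Koszul sign $(a \tc b)(c \tc d) = (-1)^{[b][c]} ac \tc bd$, and $\xi^\pm_0, \xi^\pm_1$ are all odd, so a priori one could worry about extra signs spoiling the cancellations above. I expect this to be the only subtle step; the check itself is painless, because in every relevant product the right tensorand of the first factor is either $1$ or a power of the even element $k_j$, so $[b] = 0$ throughout and every Koszul sign collapses to $+1$. Once that observation is in place the argument becomes formally identical to the $\sltwo$ one.
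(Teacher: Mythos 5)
Your proof is correct and follows essentially the same route as the paper, which simply states that the coideal property is trivial for the Cartan elements and that the computation for $B^{\pm}_0$ is identical to the $\sl(2)$ singlet case \eqref{coideal1}; your explicit observation that every Koszul sign is $+1$ because the relevant right tensor factors are powers of the even elements $k_j$ is exactly the point the paper leaves implicit. Note only that your final terms $k_0\tc B^{+}_0$ and $k_0\tc B^{-}_0$ are the correct ones (the paper's \eqref{coideal1} writes $k_0\tc B^{-}_0$ and $k_0\tc B^{+}_0$ respectively, an apparent typo), and either way membership in $\cA\tc\cB$ follows.
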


\begin{proof}
The coideal property is trivial for the Cartan elements, and for \eqref{tw5} follows directly from \eqref{coideal1}.
\end{proof}

\begin{prop}  \label{B5}
The quantum affine coideal subalgebra defined above with $d_+ \!= - d_-\! = q\,c$ where $c \in \mathbb{C}$ is an arbitrary complex number, is a reflection algebra for the \textit{singlet} boundary.   
\end{prop}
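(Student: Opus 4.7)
The proof will follow the template of Proposition \ref{B1}, with minor adjustments for the graded structure of $\glone$. First, I would tabulate the action of all generators of $\cB$ on $\cV$ in the fundamental representation \eqref{Rep:gl11q}, noting that the vanishing $2\times 2$ block $\wh{a}_{ij}$, $0\le i,j\le 1$, of the extended Cartan matrix \eqref{GCMgl11} makes $k_0$ commute with both $\xi^+_0$ and $\xi^-_1$. This collapses $\xi'^+_0 k_0$ to $\xi^+_0$ and $\xi^-_1 k_0$ to $q^{-1}\xi^-_1$, giving the simplified expressions $T_z(B^+_0)\propto(z-d_+/q)\,e_{2,1}$ and $T_z(B^-_0)\propto(z^{-1}+d_-/q)\,e_{1,2}$, along with $T_z(k_2)=q\,e_{1,1}+q^{-1}e_{2,2}$ and $T_z(k_0 k_1^{-1})=q^{-2}\,\mathrm{id}$.

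Next, I would take $K(z)$ to be an arbitrary $2\times 2$ matrix. Since $\Delta(k_2)=k_2\tc k_2$ and $\epsilon(k_2)=1$, the singlet intertwining equation \eqref{IntKS} for $k_2$ reduces to $[T_z(k_2),K(z)]=0$, forcing $K(z)=e_{1,1}+k\,e_{2,2}$ up to an overall scalar; the equation for $k_0 k_1^{-1}$ is automatic since its image is a scalar multiple of the identity. Because $\epsilon(B^\pm_0)=0$ (each summand of $B^\pm_0$ contains a single $\xi^\pm$), the intertwining equation \eqref{IntKS} for the twisted affine generators collapses to $T_{1/z}(B^\pm_0)\,K(z)=K(z)\,T_z(B^\pm_0)$. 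Matching the single off-diagonal entry in each case yields two rational expressions for $k$ whose consistency fixes $d_+=-d_-=q\,c$ and $k=(c\,z-1)/(z(c-z))$, in agreement with \eqref{GKgl11qs}.

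The only delicate point is the careful bookkeeping of the fermionic grading and of the degeneracy of the $0,1$ block of the Cartan matrix when simplifying the representation of $\xi'^{+}_i k_0$ and $\xi^-_j k_0$; however, since the singlet case involves only one-sided coproducts and $\epsilon$ annihilates all odd generators, no genuinely new signs enter beyond those already encoded in $T_z$. I therefore do not anticipate any serious obstacle: the argument is essentially a direct transcription of the $\sltwo$ singlet computation of Proposition \ref{B1}, with the only structural novelty being that the general solution \eqref{GKgl11qs} is already diagonal so that the Cartan intertwining is automatically less restrictive than in the bosonic case.
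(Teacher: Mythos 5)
Your proposal is correct and follows essentially the same route as the paper: tabulate the fundamental representation of the generators of $\cB$, use the Cartan intertwiners to force $K(z)=e_{1,1}+k\,e_{2,2}$, and then extract two rational conditions from the single off-diagonal entries of $B^{\pm}_0$ whose consistency fixes $d_+=-d_-=q\,c$ and $k=(cz-1)/(z(c-z))$. As a minor aside, your sign $T_z(B^-_0)\propto(z^{-1}+d_-/q)\,e_{1,2}$ is the one actually consistent with the paper's displayed equations $d_\pm(k-1)z\mp q(z^2k-1)=0$ and with the final answer, so no adjustment is needed there.
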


\begin{proof}
The representation $T_z$ of the generators of $\cB$ is given by 
\begin{align}
T_{z}(k_0 k_1^{-1}) &= q^{-2}(e_{1,1} + e_{2,2}) \,,  & T_{z}(B^{+}_0) &= (z-q^{-1}d_+)\,e_{2,1} \,, & T_{z}(B^{-}_0) &= q^{-2}(q\,z^{-1}-d_{-}) e_{1,2} \,.
\end{align}
and $T_z(k_2)$ was given in \eqref{Rep:gl11q}.
Let $K(z)$ be any $2\times 2$ matrix. Then the intertwining equation for $k_0 k_1^{-1}$ and $k_2$ restricts $K(z)$ to be of a diagonal form. This gives $K(z)= e_{1,1} + k\,e_{2,2}$ up to an overall scalar factor. Next, the intertwining equation for $B^\pm$ gives 
\begin{align}
d_+ (k-1) z - q(z^2k-1) = 0 \,, \qquad d_- (k-1) z+q (z^2 k-1) = 0 \,, 
\end{align}
having a unique solution $d_+ \!= - d_-\! = q\,c$ and $k=\dfrac{c\,z-1}{z(c-z)}$, where $c\in\bb{C}$ is any complex number. This coincides with \eqref{GKgl11qs}. 
\end{proof}


\smallskip

\subsection{Vector boundary} \label{Sec:4.2}

Consider the reflection equation \eqref{RE} on the tensor space $\cV \tc \cV \tc \cW$ with the $R$-matrix defined by \eqref{Rgl11q}. Then there exists a solution of the reflection equation
\begin{align}
K(z) = \left(
\begin{array}{cccc} 
1 & 0     & 0      & 0 \\
0 & 1-k/q & k      & 0 \\
0 & k     & 1-q\,k & 0 \\
0 & 0     & 0      & 1 - (q+q^{-1})\,k
\end{array} \right) ,\quad\text{where}\quad k = \frac{(q-q^{-1})(z^2-1)}{q^{-2} - c\,z + q^2 z^2} \,, \label{KVgl11q}
\end{align}
and $c \in \mathbb{C}$ is an arbitrary complex number.

This $K$-matrix satisfies the intertwining equation \eqref{IntKV} for all generators of the Lie superalgebra $\gl(1|1)$. Thus we call Cartan generators $k_i$ and Chevalley generators $\xi^\pm_1$ the \textit{preserved} generators, while the affine Chevalley generators $\xi^\pm_0$ and the level-one Cartan generators $h^{\pm}_2$ are the \textit{broken} generators. Next, we identify the corresponding quantum affine coideal subalgebra consistent with the reflection matrix \eqref{KVgl11q}. 

\begin{prop} 
Let the involution $\Theta$ act on the root space $\Phi$ as
\begin{align} \label{Aut6}
\Theta(\alpha_0) = -\alpha_0 -2\alpha_1 \,, \qquad\qquad \Theta(\alpha_1) = \alpha_1 \,.
\end{align}
Then it defines a quantum affine coideal subalgebra $\cB\subset\cA=\glone$ generated by the Cartan generators $k_i$, the Chevalley generators $\xi^{\pm}_1$, and the twisted affine generator
\begin{align} \label{tw6}
 B^{-}_0 &= \xi^{-}_0 - d_{-} [ h^{+}_2,\,\xi'^{+}_1 ] \,,
\end{align}
where $\xi'^{+}_i = k_i^{-1}\xi^{+}_i$ and $d_-\in\bb{C}^\times$ is an arbitrary complex number. 
\end{prop}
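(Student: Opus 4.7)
The coideal property $\Delta(b)\in\cA\tc\cB$ is immediate for the Cartan generators $k_i$ (group-like) and for the Chevalley generators $\xi^\pm_1$, whose standard coproducts $\xi^+_1\tc1+k_1\tc\xi^+_1$ and $\xi^-_1\tc k_1^{-1}+1\tc\xi^-_1$ manifestly lie in $\cA\tc\cB$. The entire content of the proposition is therefore the assertion that $\Delta(B^-_0)\in\cA\tc\cB$.

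The plan is to expand $\Delta(B^-_0)=\Delta(\xi^-_0)-d_-\Delta([h^+_2,\xi'^+_1])$ directly from the known ingredients
\begin{align*}
\Delta(\xi^-_0)&=\xi^-_0\tc k_0^{-1}+1\tc\xi^-_0,\\
\Delta(h^+_2)&=h^+_2\tc1+1\tc h^+_2+2\,\xi^+_0 k_1\tc\xi^+_1\qquad\text{(from \eqref{dB11}),}\\
\Delta(\xi'^+_1)&=\xi'^+_1\tc k_1^{-1}+1\tc\xi'^+_1,
\end{align*}
the last being obtained by applying $\Delta(k_1^{-1})=k_1^{-1}\tc k_1^{-1}$ on the left of $\Delta(\xi^+_1)$. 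I would then compute the graded commutator $\Delta([h^+_2,\xi'^+_1])=\Delta(h^+_2)\Delta(\xi'^+_1)-\Delta(\xi'^+_1)\Delta(h^+_2)$ inside the $\bb{Z}_2$-graded tensor product, carefully tracking the Koszul signs $(-1)^{|b||c|}$ in each product $(a\tc b)(c\tc d)$.

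The contributions from the primitive part $h^+_2\tc1+1\tc h^+_2$ should collapse, using $[k_1,h^+_2]=0$, to $[h^+_2,\xi'^+_1]\tc k_1^{-1}+1\tc[h^+_2,\xi'^+_1]$. Together with $\Delta(\xi^-_0)$ this assembles into $1\tc B^-_0+\xi^-_0\tc k_0^{-1}-d_-[h^+_2,\xi'^+_1]\tc k_1^{-1}$, whose right tensor factors $B^-_0$, $k_0^{-1}$, $k_1^{-1}$ all lie in $\cB$. The delicate step is the contribution of the non-primitive summand $2\,\xi^+_0 k_1\tc\xi^+_1$: its graded products with $\Delta(\xi'^+_1)$ yield terms whose right tensor factor is either $\xi^+_1 k_1^{-1}=k_1^{-1}\xi^+_1=\xi'^+_1$ (using $\wh{a}_{11}=0$, so that $k_1$ commutes with $\xi^+_1$) or $\xi^+_1\xi'^+_1=\xi'^+_1\xi^+_1=k_1^{-1}(\xi^+_1)^2=0$ (by the fermionic Serre relation $(\xi^+_1)^2=0$). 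The surviving cross terms therefore have right tensor factor $\xi'^+_1\in\cB$, establishing $\Delta(B^-_0)\in\cA\tc\cB$.

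The principal obstacle is the careful bookkeeping of the Koszul signs in the graded tensor product and the recognition that the potentially dangerous quadratic-in-$\xi^+_1$ right tensor factors produced by the non-primitive part of $\Delta(h^+_2)$ vanish by the fermionic square-zero identity, leaving only $\xi'^+_1$ in the right factor. Once this is verified, no further delicate calculation is required.
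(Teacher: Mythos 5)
Your proposal is correct and follows essentially the same route as the paper: one expands $\Delta(B^-_0)$ directly from $\Delta(\xi^-_0)$, $\Delta(h^+_2)$ in \eqref{dB11} and $\Delta(\xi'^+_1)=\xi'^+_1\tc k_1^{-1}+1\tc\xi'^+_1$, and checks that every right tensor factor lies in $\cB$. Your sign bookkeeping and the observation that the $(\xi^+_1)^2$ terms vanish are exactly what produces the paper's displayed result $\Delta(B^-_0)=\xi^-_0\tc k_0^{-1}-d_-[h^+_2,\xi'^+_1]\tc k_1^{-1}+1\tc B^-_0+2d_-\{\xi^+_0,\xi^+_1\}\tc\xi'^+_1$, whose surviving cross term has right factor $\xi'^+_1\in\cB$; the paper simply records this final expression without the intermediate steps you supply.
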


\begin{proof}
The twisted affine generator \eqref{tw6} satisfies the coideal property,
\begin{align}
\Delta(B^{-}_0) &= \xi^{-}_0 \tc k_0^{-1} - d_{-} [ h^{+}_2,\,\xi'^{+}_1 ] \tc k_1^{-1} + 1 \tc B^{-}_0 + 2 d_- \{\xi^+_0,\, \xi^+_1 \} \tc \xi'^+_1  \in \cA \tc \cB \,.
\end{align}
The property follows by definition for $k_i$ and $\xi_1^\pm$. 
\end{proof}

\begin{remark}
This algebra is not of the canonical form (compare \eqref{Bpm}, \eqref{tw2} and \eqref{tw6}). This is due to the all-zero entries in the block $0\leq i,j \leq 1$ of the extended Cartan matrix $\wh{a}_{ij}$ \eqref{GCMgl11}, and thus the $\adr$-action is equivalent to the usual graded commutator. In such a way the level-one Drinfeld generator $h_2^+$ is employed to ensure the coideal property. 
\end{remark}


\begin{prop}  \label{B6}
The quantum affine coideal subalgebra defined above with $d_-= (q^{-1}-q)/2$ is a reflection algebra for a \textit{vector} boundary.   
\end{prop}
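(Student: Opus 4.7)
The plan is to mimic the strategy of Proposition \ref{B2}: determine $K(z)$ by first imposing the intertwining equation for the preserved Lie (super-)algebra generators, then use the intertwining equation for the single twisted affine generator $B^{-}_0$ to fix both the remaining scalar function and the parameter $d_{-}$.

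First I would compute the representation $(T_z\tc T_s)$ of the coproducts of all generators of $\cB$. For the Cartan generators $k_i$ and the fermionic Chevalley generators $\xi^{\pm}_1$ this is a routine application of \eqref{Rep:gl11q} together with the coproducts \eqref{cop1}, yielding expressions analogous to \eqref{TAsl2v:Lie}. Imposing \eqref{IntKV} on these generators forces a general $4\times4$ ansatz for $K(z)$ into exactly the block structure of \eqref{KVgl11q}, with a single unknown function $k$ in the middle block and a fixed entry in the $(4,4)$ position determined by the $\gl(1|1)$ fermionic relations. The grading (in particular the sign in $T_z(\xi^-_0)=-z^{-1}e_{1,2}$ and the anticommutator $\{\xi^+_i,\xi^-_j\}$) must be tracked carefully here.

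The substantive step is the twisted affine generator. Using the coproduct
\begin{align*}
\Delta(B^{-}_0) = \xi^{-}_0 \tc k_0^{-1} - d_{-}\,[h^{+}_2,\xi'^{+}_1]\tc k_1^{-1} + 1\tc B^{-}_0 + 2d_{-}\{\xi^+_0,\xi^+_1\}\tc\xi'^+_1
\end{align*}
established in the preceding proof, I would evaluate each tensor factor on $T_z\tc T_s$. The key input is $T_z(h^{+}_2)=\frac{zq}{q^{-1}-q}(e_{1,1}-e_{2,2})$ from \eqref{Rep:gl11q}, so that $[h^{+}_2,\xi'^{+}_1]$ contributes a factor proportional to $z$, while $\{\xi^+_0,\xi^+_1\}$ produces the mixed $zs$ dependence characteristic of a vector boundary. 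Substituting these pieces into $(T_{1/z}\tc T_s)[\Delta(B^{-}_0)]K(z)=K(z)(T_z\tc T_s)[\Delta(B^{-}_0)]$ yields a linear system in the entries of $K(z)$ whose off-diagonal and diagonal components separate.

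I expect solving this system to pin down $d_{-}=(q^{-1}-q)/2$ uniquely (the prefactor from $h^{+}_2$ compensates exactly this combination) and to force
\begin{align*}
k = \frac{(q-q^{-1})(z^2-1)}{q^{-2}-(s+s^{-1})z+q^2z^2}\,,
\end{align*}
agreeing with \eqref{KVgl11q} upon the identification $c=s+s^{-1}$. The main obstacle is bookkeeping: the grading signs from the fermionic generators, the non-standard $\adr$-action in the $(\wh{a}_{ij})_{0\le i,j\le 1}$ block, and the extra cross term $2d_{-}\{\xi^+_0,\xi^+_1\}\tc\xi'^+_1$ in $\Delta(B^{-}_0)$ all enter the matrix equation, and one must verify that they cancel compatibly for the stated value of $d_{-}$. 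Once the linear system is set up, uniqueness up to an overall scalar follows from the irreducibility of $T_z\tc T_s$ on the relevant block, exactly as in Proposition \ref{B2}.
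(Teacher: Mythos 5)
Your proposal follows the paper's proof essentially verbatim: compute $(T_z\tc T_s)$ of the coproducts, use the preserved $\gl(1|1)$ generators to reduce $K(z)$ to the form \eqref{KVgl11q} with one unknown function, then solve the intertwining equation for $B^-_0$ to fix $d_-=(q^{-1}-q)/2$ and $k$, matching \eqref{KVgl11q} with $c=s+s^{-1}$. This is the same approach and the same conclusion as the paper's own argument.
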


\begin{proof}
The representation $(T_z\tc T_s)$ of the coproducts of the Lie generators of $\cB$ is given by
\begin{align} \label{CAgl11:Lie}
(T_{z}\tc T_s)[\Delta(\xi^+)] &= q\, e_{1,2} + e_{1,3} + e_{2,4} - q \, e_{3,4} \,, &
(T_{z}\tc T_s)[\Delta(k_0)]   &= q^{-2} (e_{1,1} + e_{2,2} + e_{3,3} + e_{4,4}) \,, \el
(T_{z}\tc T_s)[\Delta(\xi^-)] &= e_{2,1} + q^{-1} e_{3,1} + q^{-1} e_{4,2} - e_{4,3} \,, &
(T_{z}\tc T_s)[\Delta(k_1)]   &= q^2 \, (e_{1,1} + e_{2,2} + e_{3,3} + e_{4,4}) \,, \el
&& (T_{z}\tc T_s)[\Delta(k_2)]   &= q^2 e_{1,1} + e_{2,2} + e_{3,3} + q^{-2} e_{4,4} \,,
\end{align}
and of the twisted affine generator by
\begin{align} \label{CAgl11:TA}
(T_{z}\tc T_s)[\Delta(B^{-}_0)] &= \alpha \, e_{1,2} + \beta \, e_{1,3} + \alpha \, e_{2,4} - \alpha \, e_{3,4} \,,
\end{align}
where
\begin{align}
\alpha &= -q^{-1}(q\,s^{-1} + 2 d_{-}((q^{-2}-1)s - z)) \,,
& \beta &= -(q\,z^{-1} - 2 d_{-}(q^2-1)^{-1}z) \,.
\end{align}
Let $K(z)$ be any $4\times4$ matrix. Then the intertwining equation for the Lie generators \eqref{CAgl11:Lie} constrains $K(z)$ to the form given in \eqref{KVgl11q} up to an unknown function $k$ and an overall scalar factor. Next, the intertwining equation for the twisted affine generator \eqref{CAgl11:TA} has a unique solution
\begin{align}
d_{-} = (q^{-1}-q)/2 \,, \qquad k=\frac{(q-q^{-1})(z^2-1)}{q^{-2} - (s+s^{-1})z + q^2 z^2} \,,
\end{align}
which coincides with $k$ given in \eqref{KVgl11q} provided $c = s + s^{-1}$. 
\end{proof}


\begin{remark} \label{Ch2:remSB}

The coideal subalgebra given in proposition \ref{B5} does not lead to a unique reflection matrix for a vector boundary. The boundary intertwining equation in this case defines the reflection matrix up to an overall scalar factor and one unknown function,
\begin{equation}
K(z)=\left(
\begin{array}{cccc}
 1 & 0 & 0 & 0 \\
 0 & 1+(q^2 z-c)k' & q (c-s)k' & 0 \\
 0 & q (c-s^{-1})k' & 1+k & 0 \\
 0 & 0 & 0 & 1+k+(q^2 z^{-1}-c)k' \\
\end{array}
\right)
\end{equation}
where
\begin{equation}
k=\frac{q^2 \left(s + z(c-s) (c s-1)k' - s\, z^2\right)}{s\, z\, (q^2 z-c)} \,,
\end{equation}
and $c,s\in\bb{C}$ are arbitrary complex numbers. Here $k'=k'(z)$ is the unknown function that needs to be obtained by solving the reflection equation (having several solutions one of which is $k'=0$). In such a way this coideal subalgebra is not a reflection algebra for a vector boundary. It does not lead to a reflection matrix automatically satisfying the reflection equation.

The coideal subalgebra given in proposition \ref{B6} is compatible with the singlet boundary; the corresponding reflection matrix is trivial.
\end{remark}

Finally we note that the coideal subalgebra given in proposition \ref{B5} can be understood as the augmented q-Onsager algebra $\overline{\mathcal{O}}_\hbar(\gl(1|1))$.


\vspace{0.1in}

\section{Reflection algebras for \texorpdfstring{$\yone$}{yone}}

{\bf \noindent Algebra.}
The Yangian $\yone$ is generated by the level-zero Chevalley generators $E^\pm$, Cartan generators $H$ and the non-supertraceless generator $H_2$, and the level-one Yangian generators $\wh{E}^\pm$ and the corresponding level-one Cartan generators $\wh{H}$ and $\wh{H}_2\,$. The commutation relations of the algebra are
\begin{align}
& \{E^+,E^-\} = H \,, && \{E^\pm,\wh{E}^\mp\} = \wh{H} \,, && \!\!\!\! [H,E^\pm] = [H,\wh{E}^\pm] = [H,\wh{H}] = 0 \,,\el
& {[H_2,E^\pm]} = \pm 2 E^\pm \,, && [H_2,\wh{E}^\pm] = [\wh{H}_2,\,E^\pm] = \pm 2 \wh{E}^\pm \,.
\end{align}
The Hopf algebra structure is equipped with the following coproduct,
\begin{align} \label{cop4}
\Delta(H)\, &= H \tc 1 + 1 \tc H\,,
& \Delta (\wh{H})\, &= \wh{H} \tc 1 + 1 \tc \wh{H} \,, \el
\Delta (H_2) &= H_2 \tc 1 + 1 \tc H_2 \,,
& \Delta (\wh{H}_2) &= \wh{H}_2 \tc 1 + 1 \tc \wh{H}_2 - \alpha\,( E^+ \tc E^- + E^- \tc E^+) \,,  \el
\Delta (E^\pm) &= E^\pm \tc 1 + 1 \tc E^\pm\,,  
& \Delta (\wh{E}^\pm) &= \wh{E}^\pm \tc 1 + 1 \tc \wh{E}^\pm \mp \frac{\alpha}{2}\,( E^{\pm} \tc H - H \tc E^{\pm} ) \,.
\end{align}

{\bf \noindent Representation.}
The evaluation representation on the graded two-dimensional vector space $\cV$ is defined by
\begin{align} \label{Rep:gl11y}
T_u(E^+) &= e_{1,2}\,, & T_u(E^-) &= e_{2,1}\,,  & T_u(H) &= e_{1,1}+e_{2,2}\,, & T_u(H_2) &= e_{1,1}-e_{2,2}\,,\el 
T_u(\wh{E}^+) &= u\, e_{1,2}\,, & T_u(\wh{E}^-) &= u \, e_{2,1}\,, & T_u(\wh{H}) &= u\,(e_{1,1}+e_{2,2})\,, & T_u(\wh{H}_2) &= u\,(e_{1,1}-e_{2,2})\,,
\end{align}
and $T_u(\alpha)=1$. As previously, we choose the boundary vector space $\cW$ to be equivalent to $\cV$. The boundary representation $T_s$ on $\cW$ is obtained from \eqref{Rep:gl11y} by replacing $z$ with $s$.

The fundamental $R$-matrix satisfying Yang-Baxter equation \eqref{YBEY} is 
\begin{align}
R(u) = \left(
\begin{array}{cccc} 
1 & 0     & 0   & 0 \\
0 & r     & 1-r & 0 \\
0 & 1-r & r     & 0 \\
0 & 0     & 0   & -1+2r
\end{array} \right) ,\qquad\text{where}\qquad r = \frac{u}{u+1} \,. \label{Rgl11}
\end{align}
%


\vspace{0.1in}

\subsection{Singlet boundary} \label{Sec:5.1}

Consider the reflection equation \eqref{REY} on the tensor space $\cV \tc \cV \tc \bb{C}$ with the $R$-matrix defined by \eqref{Rgl11}. Then the general solution of the reflection equation is \cite{Arnaudon:2004sd}
\begin{align}
K(u) = \left(
\begin{array}{cc} 
1 & 0 \\
0 & k
\end{array} \right) ,\qquad\text{where}\qquad k = \frac{c+u}{c-u} \,, \label{GKgl11s}
\end{align}
and $c\in\mathbb{C}$ is an arbitrary complex number. 

This $K$-matrix intertwines the Cartan generators $H$ and $H_2$, but does not satisfy the intertwining equation for any other generators of $\gl(1|1)$ (and $\yone$\,). Hence we call Cartan generators $H$ and $H_2$ the \textit{preserved} generators, while the rest are the \textit{broken} generators. This setup allows to define the following twisted Yangian.

\begin{prop}
Let the involution $\theta$ act on the Lie algebra $\g=\gl(1|1)$ as
\begin{align}
\theta(H) = H \,, \qquad \theta(H_2) = H_2 \,, \qquad \theta(E^{\pm}) = -E^{\pm} \,,
\end{align}
defining a symmetric pair $(\g,\theta(\g))$. Then involution $\theta$ can be extended to the involution $\bth$ of $\Yg$ such that
\begin{align} \label{Aut7}
\bth(\wh{H}) = -\wh{H} \,, \qquad \bth(\wh{H}_2) = - \wh{H}_2 \,, \qquad \bth(\wh{E}^{\pm}) = \wh{E}^{\pm} \,, \qquad \bth(\alpha) = - \alpha \,.
\end{align}
\end{prop}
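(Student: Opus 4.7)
The plan is to define $\bth$ on generators by the formulas in the statement and verify that it extends to a well-defined algebra involution of $\yone$. Since $\yone$ is presented by generators and relations, this amounts to checking (i) that $\bth$ preserves every defining relation, (ii) that $\bth^2=\mathrm{id}$ on the generating set, and as a bonus (iii) that $\bth$ is compatible with the coproduct \eqref{cop4}, so that it descends to an involution of the Hopf algebra structure relevant for the coideal property \eqref{coidealY}.

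First I would verify the level-zero relations. Since $\theta$ is a Lie superalgebra involution of $\g=\gl(1|1)$, the brackets $\{E^+,E^-\}=H$, $[H_2,E^\pm]=\pm 2 E^\pm$, $[H,E^\pm]=0$ are preserved (the sign flip on $E^\pm$ cancels in the anticommutator and is consistent with $\theta(H)=H$, $\theta(H_2)=H_2$). Next I would check the cross-level relations: for $\{E^\pm,\wh E^\mp\}=\wh H$, applying $\bth$ to the left side yields $\{-E^\pm,\wh E^\mp\}=-\wh H$, which matches $\bth(\wh H)=-\wh H$; for $[H_2,\wh E^\pm]=\pm 2\wh E^\pm$, both sides are fixed by $\bth$; and for $[\wh H_2,E^\pm]=\pm 2\wh E^\pm$, both sides acquire a single minus sign under $\bth$, so the relation is preserved. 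The relation $[H,\wh H]=0$ is preserved trivially.

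The Serre/Jacobi-type relations for $\yone$ involve cubic combinations of level-zero generators whose total parity under $\bth$ I would tally: each occurrence of $E^\pm$ contributes a sign $-1$, each occurrence of $\wh E^\pm$ contributes $+1$, each $\alpha$ contributes $-1$, and the level-zero elements $H$, $H_2$, $\wh H$, $\wh H_2$ contribute the sign built into their image. For $\gl(1|1)$ the fermionic generators satisfy $(E^\pm)^2=0$, so most Serre-type obstructions collapse; the remaining cubic identities will match once the total count of minus signs on both sides is equal, and this is a mechanical verification root by root. Involutivity $\bth^2=\mathrm{id}$ is immediate on each generator since every sign is applied at most once.

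Finally I would verify compatibility with the coproduct \eqref{cop4} by computing $(\bth\tc\bth)\Delta(x)$ and comparing with $\Delta(\bth(x))$ on each generator. The only nontrivial cases are $\wh H_2$ and $\wh E^\pm$: for $\Delta(\wh H_2)$ the combined sign flips on $\alpha$ and on each $E^\pm$ factor give an overall $-1$ matching $\bth(\wh H_2)=-\wh H_2$, while for $\Delta(\wh E^\pm)$ the sign on $\alpha$ cancels against the sign on $E^\pm\tc H-H\tc E^\pm$, consistent with $\bth(\wh E^\pm)=\wh E^\pm$. The main obstacle, though really a bookkeeping issue rather than a conceptual one, is getting the sign counts right in the cubic Serre-type relations; I would rely on the fact that $\bth$ acts as $(-1)^{\text{(number of }E^\pm\text{)}+ \text{(number of }\alpha\text{'s)}}$ on any homogeneous polynomial in the chosen generators, which makes all checks formal.
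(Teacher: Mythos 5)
Your proposal is correct and is precisely the verification the paper has in mind: the paper states this proposition without proof, declaring such involution statements obvious (see the sentence preceding the analogous $\sl(2)$ proposition in Section 3.1 and the remark before the two general propositions in Section 1.6), and the intended argument is exactly the generator-by-generator check of the defining relations, of $\bth^2=\mathrm{id}$, and of compatibility with the coproduct \eqref{cop4} that you carry out. One small bookkeeping slip: in the relation $[\wh{H}_2,E^\pm]=\pm 2\wh{E}^\pm$ the left-hand side picks up two minus signs under $\bth$ (one from $\wh{H}_2$ and one from $E^\pm$) while the right-hand side picks up none, so both sides are in fact invariant rather than each acquiring a single minus sign; the relation is preserved either way, so this does not affect your conclusion.
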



\begin{prop}  \label{TY:gl11s}
The twisted Yangian $\Ygh$ of type I for $\g=\gl(1|1)$ and $\theta(\g)=\{H,H_2\}$ is $\bth$--fixed coideal subalgebra of $\Yg$ generated by the Cartan generators $H$ and $H_2$, and the twisted Yangian generators
\begin{align} \label{tw7}
\wt{E}^{\pm} = \wh{E}^\pm \pm \alpha t \, E^\pm \mp \frac{\alpha}{2} H E^{\pm} \,, 
\end{align}
where $t\in\bb{C}$ is an arbitrary complex number.
\end{prop}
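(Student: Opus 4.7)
The proof plan follows the template already established for Proposition \ref{TY:sl2s}, with small adjustments reflecting the super structure and the fact that $H$ is central with respect to $E^\pm$ in $\gl(1|1)$. Two things must be verified: that $\wt{E}^\pm$ lie in the positive eigenspace of $\bth$, so that together with $H$ and $H_2$ they indeed generate a subalgebra of $\bth(\Yg)$; and that the coideal property $\Delta(\wt{E}^\pm)\in\Yg\tc\Ygh$ holds.

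For $\bth$-invariance, I would simply apply \eqref{Aut7}: since $\bth(\wh{E}^\pm)=\wh{E}^\pm$, $\bth(H)=H$, $\bth(E^\pm)=-E^\pm$ and $\bth(\alpha)=-\alpha$, each of the three terms in $\wt{E}^\pm = \wh{E}^\pm \pm \alpha t E^\pm \mp \frac{\alpha}{2} H E^\pm$ picks up the same overall sign (the factors of $\alpha$ and $E^\pm$ combine to give an even number of sign changes), so $\bth(\wt{E}^\pm)=\wt{E}^\pm$. The invariance of $H$ and $H_2$ is immediate from \eqref{Aut7}.

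For the coideal property, the key computation is the coproduct $\Delta(\wt{E}^\pm)$. Using the relation $[H,E^\pm]=0$ from the algebra of $\yone$, the bosonic element $H$ commutes with $E^\pm$, so $\Delta(HE^\pm)=\Delta(H)\Delta(E^\pm)$ expands cleanly (no grading signs appear because $H$ is even) as
\begin{align}
\Delta(HE^\pm) = HE^\pm\tc 1 + H\tc E^\pm + E^\pm\tc H + 1\tc HE^\pm .
\end{align}
Combining this with $\Delta(\wh{E}^\pm)$ from \eqref{cop4} and $\Delta(E^\pm)=E^\pm\tc 1+1\tc E^\pm$, the contributions proportional to $H\tc E^\pm$ cancel between the $\wh{E}^\pm$ term and the $HE^\pm$ term, while the contributions proportional to $E^\pm\tc H$ add up. The result is
\begin{align}
\Delta(\wt{E}^{\pm}) = \wt{E}^{\pm} \tc 1 + 1 \tc \wt{E}^{\pm} \mp \alpha\, E^{\pm}\tc H ,
\end{align}
which lies in $\Yg \tc \Ygh$ since $H\in\Ygh$. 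The coideal property for the level-zero generators $H$ and $H_2$ is immediate from \eqref{cop4}.

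The only real subtlety is keeping track of the $\mathbb{Z}_2$ grading together with the interplay of signs between $\alpha$, $t$, and the $\pm$ choices, but since $H$ is bosonic and commutes with $E^\pm$ no sign surprises arise from the super product rule. I would expect the main conceptual point, rather than an obstacle, to be the observation that (unlike the $\sl(2)$ case in Proposition \ref{TY:sl2s}) the quadratic correction reduces to the single term $HE^\pm$ because $[H,E^\pm]=0$ collapses the symmetrization.
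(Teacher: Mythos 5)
Your proposal is correct and follows essentially the same route as the paper: verify $\bth$-invariance of $\wt{E}^\pm$ from \eqref{Aut7} and compute $\Delta(\wt{E}^{\pm}) = \wt{E}^{\pm} \tc 1 + 1 \tc \wt{E}^{\pm} \mp \alpha\, E^{\pm}\tc H \in \Yg\tc\Ygh$, which is exactly the expression the paper records. Your expansion of $\Delta(HE^\pm)$ and the cancellation of the $H\tc E^\pm$ terms supply the detail the paper leaves implicit, and the observation that $[H,E^\pm]=0$ collapses the symmetrized correction of the $\sl(2)$ case into the single term $HE^\pm$ is accurate.
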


\begin{proof} 
The twisted generators \eqref{tw7} are in the positive eigenspace of the involution $\bth$ \eqref{Aut7} and satisfy the coideal property
\begin{align}
\Delta(\wt{E}^{\pm}) &= \wt{E}^{\pm} \tc 1  + 1 \tc \wt{E}^\pm \mp \alpha\, E^{\pm} \tc H \in \Yg \tc \Ygh \,. 
\end{align}
The same properties are obvious for $H$ and $H_2$.
\end{proof}

\begin{prop}
The twisted Yangian $\Ygh$ defined above is a reflection algebra for the singlet boundary.
\end{prop}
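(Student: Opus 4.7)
The plan is to mimic the singlet proof for $\ytwo$ in Proposition~\ref{TY:sl2s}: write down the fundamental representation of each generator of $\Ygh$, feed the coproducts into the intertwining equation \eqref{IntKSY} with $\bar T_s=\epsilon$, and read off both the form of $K(u)$ and the relation between the twist parameter $t$ and the parameter $c$ in \eqref{GKgl11s}.

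First I would compute $T_u$ on the generators of $\Ygh$. From \eqref{Rep:gl11y} one has $T_u(H)=e_{1,1}+e_{2,2}$, which acts as the identity on $\cV$, and $T_u(H_2)=e_{1,1}-e_{2,2}$. For the twisted generators \eqref{tw7}, using $T_u(HE^\pm)=T_u(E^\pm)$, I expect
\begin{align*}
T_u(\wt E^{+}) &= \bigl(u+\alpha t-\tfrac{\alpha}{2}\bigr)\,e_{1,2},
& T_u(\wt E^{-}) &= \bigl(u-\alpha t+\tfrac{\alpha}{2}\bigr)\,e_{2,1}.
\end{align*}

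Next I would use the intertwining equation \eqref{IntKSY} on the Cartan part. Because $\Delta(H)$ and $\Delta(H_2)$ are primitive and $\epsilon$ kills the right factor, the equation reduces to $T_{-u}(H_i)\,K(u)=K(u)\,T_u(H_i)$. For $H$ this is automatic (identity on both sides); for $H_2$ it gives $[e_{1,1}-e_{2,2},K(u)]=0$, forcing $K(u)$ to be diagonal. Up to an overall scalar we may therefore write $K(u)=e_{1,1}+k\,e_{2,2}$ with $k=k(u)$ to be determined.

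For $\wt E^\pm$ the coproduct reads $\Delta(\wt E^\pm)=\wt E^\pm\tc 1+1\tc\wt E^\pm\mp\alpha\,E^\pm\tc H$, and applying $\text{id}\tc\epsilon$ again eliminates the $E^\pm\tc H$ term because $\epsilon(H)=0$. Hence \eqref{IntKSY} becomes the single matrix equation $T_{-u}(\wt E^\pm)\,K(u)=K(u)\,T_u(\wt E^\pm)$. Substituting the expressions above and comparing the $(1,2)$ and $(2,1)$ entries gives one and the same scalar condition, with unique solution
\[
k(u)=\frac{\alpha t-\alpha/2+u}{\alpha t-\alpha/2-u}.
\]
This matches \eqref{GKgl11s} with $c=\alpha t-\alpha/2$, so with a suitable choice of $t\in\bb C$ the coideal subalgebra $\Ygh$ recovers the full one-parameter family of singlet $K$-matrices, and the reflection equation \eqref{REY} is automatically satisfied because the representation $T_u$ is irreducible.

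The only real subtlety is bookkeeping: tracking the grading signs in the super setting (although here the $H_2$-intertwiner already pins $K$ to a diagonal block and the $\wt E^\pm$ equations are scalar, so no genuine super sign enters), and keeping the normalisation of $\alpha$ consistent with $T_u(\alpha)=1$. Neither step requires new ideas beyond those already used in the proof of Proposition~\ref{TY:sl2s}.
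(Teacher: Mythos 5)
Your proposal is correct and follows essentially the same route as the paper: compute $T_u$ on the twisted generators, use the Cartan intertwiners (where indeed only $H_2$ is non-trivial) to force $K(u)$ diagonal, and solve the scalar intertwining conditions for $\wt{E}^\pm$ to get $k=\dfrac{t+u-1/2}{t-u-1/2}$, matching \eqref{GKgl11s} with $c=t-1/2$. As a minor aside, your $T_u(\wt{E}^-)\propto e_{2,1}$ is the correct matrix unit (the paper's displayed $e_{1,2}$ there is a typo), and the two off-diagonal conditions do collapse to the same equation as you claim.
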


\begin{proof}
The representation $T_u$ of the twisted generators of $\Ygh$ is given by
\begin{align}
T_{u}(\wt{E}^+) &= (u+t-1/2) \, e_{1,2} \,, \qquad T_{u}(\wt{E}^-) = (u-t+1/2) \, e_{1,2} \,,
\end{align}
For $H$ and $H_2$ it was given in \eqref{Rep:gl11y}.
Let $K(u)$ be any $2\times2$ matrix. Then the intertwining equation for $H$ and $H_2$ restricts $K(u)$ to be of a diagonal form. This gives $K(u)= e_{1,1} + k\,e_{2,2}$ up to an overall scalar factor. Next, the intertwining equation for $\wt{E}^\pm$ has a unique solution $k=\dfrac{t+u-1/2}{t-u-1/2}$ which coincides with \eqref{GKgl11s} provided $t=c+1/2$ and $a=b=0$.
\end{proof}


\vspace{0.01in}

\subsection{Vector boundary} \label{Sec:5.2}

Consider the reflection equation \eqref{RE} on the tensor space $\cV \tc \cV \tc \cW$ with the $R$-matrix defined by \eqref{Rsl2}. Then there exists a solution of the reflection equation,
\begin{align}
K(u) = \left(
\begin{array}{cccc} 
1 & 0   & 0   & 0 \\
0 & 1-k & k   & 0 \\
0 & k   & 1-k & 0 \\
0 & 0   & 0   & 1-2k
\end{array} \right) ,\qquad\text{where}\qquad k = \frac{2 u}{(u+1)^2-c^2} \,, \label{KVgl11}
\end{align}
and $c \in \mathbb{C}$ is an arbitrary complex number.

This $K$-matrix satisfies the intertwining equation for all generators of $\gl(1|1)$.
We call the level-zero generators $E^\pm$, $H$ and $H_2$ the \textit{preserved} generators, while the level-one generators $\wh{E}^\pm$, $\wh{H}$ and $\wh{H}_2$ are the \textit{broken} generators. This setup leads to the following twisted Yangian. 

\begin{prop}
Let $\theta$ be a trivial involution of the Lie algebra $\g=\gl(1|1)$, 
\begin{align} \label{Aut8}
\theta(H) = H \,, \qquad \theta(H_2) = H_2 \,, \qquad \theta(E^{\pm}) = E^{\pm} \qquad \Longrightarrow \qquad \theta(\g) = \g \,.
\end{align}
Then it can be extended to a non-trivial involution $\tth$ of $\yone$ such that
\begin{align} \label{ty8}
\bth(\wh{H}) = -\wh{H} \,, \qquad \bth(\wh{H}_2) = -\wh{H}_2 \,, \qquad \bth(\wh{E}^{\pm}) = -\wh{E}^{\pm} \,, \qquad \bth(\alpha) = -\alpha\,.
\end{align}
\end{prop}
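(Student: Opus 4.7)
The plan is to verify that the stipulated action of $\bth$ preserves every defining (anti-)commutation relation of $\yone$ listed at the opening of Section 5, and consequently extends uniquely to an algebra involution. The argument is directly parallel to the analogous Proposition for $\sl(2)$ in Section 3.2, so the work to be done here is mostly bookkeeping of the sign changes induced by $\bth$ on each generator, together with a check that the $\alpha \mapsto -\alpha$ rule is compatible with all quadratic and cubic terms appearing in the higher Yangian relations.

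First, I would observe that since every level-zero generator is fixed by $\bth$, all purely level-zero relations (among $E^\pm$, $H$, $H_2$) are preserved automatically. Next, I would run through the mixed level-zero/level-one relations: the anti-commutator $\{E^\pm,\wh{E}^\mp\}=\wh{H}$ is sent by $\bth$ to $\{E^\pm,-\wh{E}^\mp\}=-\wh{H}$, consistent with $\bth(\wh{H})=-\wh{H}$; the relation $[H_2,\wh{E}^\pm]=\pm 2\wh{E}^\pm$ becomes $[H_2,-\wh{E}^\pm]=\pm 2(-\wh{E}^\pm)$, while $[\wh{H}_2,E^\pm]=\pm 2\wh{E}^\pm$ becomes $[-\wh{H}_2,E^\pm]=\pm 2(-\wh{E}^\pm)$, both consistent; and the relations involving $H$ and $\wh{H}$ (which are central at their respective levels) are preserved trivially. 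Then $\bth^2=\text{id}$ is clear, since every generator is either fixed or negated, and $(-1)^2=1$ for $\alpha$ as well.

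The most delicate step is the Serre-Jacobi-type relations \eqref{Serre} and \eqref{SerreSL2}. The left-hand side of \eqref{Serre} is bilinear in the $\wh{j}$'s and thus picks up $(-1)^2 = 1$ under $\bth$; the right-hand side is $\alpha^2$ times a cubic in the $j$'s and picks up $(-1)^2 = 1$ from $\alpha \to -\alpha$; hence both sides transform identically and the relation is preserved. The main obstacle, such as it is, will be to carry out this sign audit carefully for the fermionic algebra $\gl(1|1)$, where some of the commutators appearing in the structure-constant formulas of Section 1.5 must be replaced by anti-commutators and grading factors appear throughout. However, because $\bth$ acts as a pure sign flip that does not mix parities, all such grading factors pass through unchanged, so the consistency check reduces to matching signs on either side of each relation and requires no new ideas beyond those used in the $\sl(2)$ case.
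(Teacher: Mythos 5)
Your verification is correct and is exactly the check the paper has in mind: the paper states this proposition without proof, having declared in Section 1.6 (and again in Section 3.1) that such involution propositions are straightforward, and the intended argument is precisely the sign audit of the defining relations and Serre relations that you carry out. Nothing further is needed.
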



\begin{prop} \label{TY:gl11v}
The twisted Yangian $\Ygg$ of type II for $\g=\gl(1|1)$ and $\theta(\g)=\g$ is $\bth$--fixed coideal subalgebra of $\Yg$ generated by the level-zero generators and the level-two twisted Yangian generators
\begin{align} \label{tw8}
\dwt{E}{}^{\pm} &= \pm \frac{1}{2}\left([\wh{H}_2,\,\wh{E}^\pm] + \alpha\, (\pm t\,\wh{E}^\pm + E^\pm \wh{H} - \wh{E}^\pm H) \right) 
\qquad\text{and}\qquad 
\dwt{H} = \{E^\pm,\,\dwt{E}{}^\mp\} \,,  
\end{align}
where $t\in\bb{C}$ is any complex number.
\end{prop}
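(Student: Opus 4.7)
The strategy is to mirror the proof of Proposition \ref{TY:sl2v}, adapted to the $\gl(1|1)$ structure in which $H$ is central with respect to $E^\pm$, $\wh{E}^\pm$ and $\wh{H}$, and in which $[H_2,E^\pm]=\pm 2E^\pm$ and $[\wh{H}_2,E^\pm]=\pm 2\wh{E}^\pm$. Two things must be checked: $\bth$--invariance of the generators in \eqref{tw8}, and the coideal property $\Delta(\dwt{E}^\pm),\Delta(\dwt{H})\in\Yg\tc\Ygg$. The coideal property is immediate for the level-zero generators since they are primitive and belong to $\Ygg$ by definition, so all the work concerns the twisted level-two generators.

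For $\bth$--invariance I would note that under $\bth$ each of $\wh{H}_2$, $\wh{E}^\pm$, $\wh{H}$ and $\alpha$ changes sign while the level-zero generators are fixed. Hence the commutator $[\wh{H}_2,\wh{E}^\pm]$ is $\bth$--fixed, and each remaining summand in \eqref{tw8} picks up exactly two minus signs, so $\bth(\dwt{E}^\pm)=\dwt{E}^\pm$. Consequently $\dwt{H}=\{E^\pm,\dwt{E}^\mp\}$ is $\bth$--fixed, being an anticommutator of $\bth$--fixed elements. A separate consistency check is required that the two expressions for $\dwt{H}$ coming from the $\pm$ choices agree modulo the defining relations of $\yone$; this should follow from $\{E^+,\wh{E}^-\}=\{E^-,\wh{E}^+\}=\wh{H}$ together with the centrality of $H$.

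Next, I would verify the coideal property by computing $\Delta(\dwt{E}^\pm)$ directly from \eqref{cop4} using the multiplicativity of $\Delta$. Applied to the commutator, $\Delta([\wh{H}_2,\wh{E}^\pm])=[\Delta(\wh{H}_2),\Delta(\wh{E}^\pm)]$; expanding produces, besides the primitive contribution $[\wh{H}_2,\wh{E}^\pm]\tc 1+1\tc[\wh{H}_2,\wh{E}^\pm]$, cross-terms at order $\alpha$ involving $\wh{E}^\pm\tc H_2$, $\wh{H}_2\tc E^\pm$ and their permutations, as well as cross-terms at order $\alpha^2$ cubic in the level-zero generators (sourced by the product of the two $\alpha$--corrections). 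These should combine with $\Delta(\alpha t\,\wh{E}^\pm)$ and $\Delta(\alpha(E^\pm\wh{H}-\wh{E}^\pm H))$ so that every right tensor factor sits in the subalgebra generated by $E^\pm$, $H$, $H_2$ and $\dwt{E}^\pm$, hence inside $\Ygg$. The coproduct of $\dwt{H}$ then follows by expanding $\{\Delta(E^\pm),\Delta(\dwt{E}^\mp)\}$ and invoking the coideal property just established for $\dwt{E}^\mp$.

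The main obstacle is the bookkeeping of the $\alpha^2$ terms in $\Delta(\dwt{E}^\pm)$: one must verify that every monomial whose right tensor factor is not manifestly in $\Ygg$ either cancels or regroups into a right factor built out of level-zero generators, analogously to but appreciably simpler than the quartic-in-level-zero regrouping carried out in the proof of Proposition \ref{TY:sl2v}. The centrality of $H$ kills numerous potential obstructions, and I would anticipate a final expression of the schematic form
\begin{equation*}
\Delta(\dwt{E}^\pm)=\dwt{E}^\pm\tc 1+1\tc\dwt{E}^\pm\mp\alpha\,(\wh{E}^\pm\tc H_2-\wh{H}_2\tc E^\pm)+(\text{terms in }\Yg\tc\cU(\gl(1|1)))\,,
\end{equation*}
which visibly lies in $\Yg\tc\Ygg$ and thereby closes the argument.
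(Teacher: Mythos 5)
Your strategy coincides with the paper's proof: one checks that the generators \eqref{tw8} lie in the positive eigenspace of $\bth$ (each summand carrying exactly two sign flips under $\bth$, as you note) and then computes $\Delta(\dwt{E}{}^{\pm})$ and $\Delta(\dwt{H})$ explicitly from \eqref{cop4}, verifying that every right tensor factor is built from level-zero generators. The only slip is in your anticipated cross-term: carrying out the computation, the $\alpha$-tail $-\alpha(E^+\tc E^-+E^-\tc E^+)$ of $\Delta(\wh{H}_2)$ commuted against $\wh{E}^{\pm}\tc 1+1\tc\wh{E}^{\pm}$ produces $E^{\pm}\tc\wh{H}$ and $\wh{H}\tc E^{\pm}$ via $\{E^{\mp},\wh{E}^{\pm}\}=\wh{H}$, so the order-$\alpha$ term is $\mp\alpha\,(\wh{E}^{\pm}\tc H-\wh{H}\tc E^{\pm})$ rather than your $\mp\alpha\,(\wh{E}^{\pm}\tc H_2-\wh{H}_2\tc E^{\pm})$ --- a harmless discrepancy, since $H$ is likewise level-zero and the conclusion $\Delta(\dwt{E}{}^{\pm})\in\Yg\tc\Ygg$ is unaffected.
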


\begin{proof}
The twisted generators \eqref{tw8} are in the positive eigenspace of the involution $\bth$ \eqref{ty8} and satisfy the coideal property
\begin{align}
\Delta(\dwt{E}{}^{\pm}) &= \dwt{E}{}^{\pm} \tc 1 + 1 \tc \dwt{E}{}^\pm \mp \alpha\, (\wh{E}^{\pm} \tc H - \wh{H} \tc E^{\pm}) \el 
& \quad + \frac{\alpha^2}{4}(E^\pm\tc( H^2 - t\,H ) - H^2\tc E^\pm - 2 H\tc ( E^\pm H - 2\,t\,E^\pm)  \el
 & \in \Yg \tc \Ygg \,, \phantom{\frac{\alpha^2}{4}} \\
\Delta(\dwt{H}) &= \dwt{H} \tc 1 + 1 \tc \dwt{H} - \frac{\alpha^2}{4}(H^2\tc H + H\tc H^2) \el
 & \in \Yg \tc \Ygg \,, \phantom{\frac{\alpha^2}{4}}
\end{align}
and for the level-zero generators this property follow identically. 
\end{proof}

\begin{prop} 
The twisted Yangian $\Ygg$ defined above with $t=0$ is a reflection algebra for a vector boundary.
\end{prop}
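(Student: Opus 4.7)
My plan is to mirror the proof of the corresponding $\sltwo$ statement (Proposition for the vector boundary in Section 3.2), adapted to the superalgebra setting. The strategy is to compute the tensor-product representations of all generators of $\Ygg$ on $\cV \tc \cW$, plug them into the intertwining equation \eqref{IntKVY}, and extract the constraints on both the matrix entries of $K(u)$ and the free parameter $t$.

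First I would evaluate $(T_u\tc T_s)[\Delta(b)]$ for the Lie-algebra generators $E^\pm,\,H,\,H_2$ using \eqref{cop4} and the fundamental representation \eqref{Rep:gl11y}, taking care to respect the grading (so that $E^\pm$ pick up appropriate signs when pushed through odd tensor factors). The intertwining equation $(T_{-u}\tc T_s)[\Delta(b)]\,K(u) = K(u)\,(T_u\tc T_s)[\Delta(b)]$ applied to $H$ and $H_2$ forces $K(u)$ to be block-diagonal, while the equations for $E^\pm$ fix all off-diagonal entries in the middle $2\times2$ block in terms of a single unknown scalar $k$ and fix the $(4,4)$ entry to $1-2k$, reproducing the ansatz \eqref{KVgl11} up to an overall scalar factor.

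Second, I would compute $(T_u\tc T_s)[\Delta(\dwt{E}^\pm)]$ and $(T_u\tc T_s)[\Delta(\dwt{H})]$ from \eqref{tw8}, using the coproducts derived in the proof of Proposition \ref{TY:gl11v}. Each of these expressions is a polynomial in $u$, $s$ and $t$. Substituting them together with the ansatz for $K(u)$ into the intertwining equation yields a system of scalar equations in $u$ and $s$; the coefficient of the highest power of $u$ that is independent of $k$ will force $t=0$, after which the remaining equations determine $k$ as a rational function of $u$ and $s$. The expected outcome is
\begin{align}
k \;=\; \frac{2u}{(u+1)^2 - (s+\tfrac12)^{\,\text{something}}} \,,\nonumber
\end{align}
matching \eqref{KVgl11} with $c$ expressed in terms of the boundary spectral parameter $s$; the precise identification of $c$ will come out of the computation.

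The main obstacle is bookkeeping, not conceptual. The level-two twisted generators $\dwt{E}^\pm$ and $\dwt{H}$ have long coproducts involving $\alpha^2$-terms that are quadratic and cubic in level-zero generators, and in the graded setting every rearrangement of tensor factors introduces sign factors $(-1)^{[\,\cdot\,][\,\cdot\,]}$. A careful grading convention must be fixed once at the outset (I would take $E^\pm$ odd and $H, H_2$ even) and propagated consistently through the computation of $\Delta(\dwt{E}^\pm)$ in the representation. The $t=0$ constraint is expected to arise precisely from the cancellation of the $\alpha^2 H\tc H$-type terms that only simplify when the linear-in-$\wh{E}^\pm$ correction vanishes; once this is confirmed, irreducibility of $T_u\tc T_s$ guarantees uniqueness of the resulting $K(u)$ up to scalar, closing the proof.
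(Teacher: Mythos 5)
Your proposal follows essentially the same route as the paper's proof: evaluate $(T_u\tc T_s)$ on the coproducts of the level-zero generators to pin $K(u)$ down to the form \eqref{KVgl11} with one unknown function $k$, then impose the intertwining equation for the twisted level-two generators $\dwt{E}{}^\pm$ and $\dwt{H}$ to force $t=0$ and obtain $k = 2u/((u+1)^2-s^2)$, i.e.\ $c=s$. The only loose ends are placeholders (the precise form of $c$ and the heuristic about which coefficient forces $t=0$), which the explicit computation resolves exactly as you anticipate.
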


\begin{proof}
The representation $(T_u\tc T_s)$ of the coproducts of the Lie generators of $\Ygg$ is given by
\begin{align} \label{TYgl11:Lie}
(T_{u}\tc T_s)[\Delta(E^+)] &= e_{1,2} + e_{1,3} + e_{2,4} - e_{3,4} \,, 
& (T_{u}\tc T_s)[\Delta(H)]\;\, &= 2\, \textstyle{\sum_{i=1}^4} \, e_{i,i} \,, \el
(T_{u}\tc T_s)[\Delta(E^-)] &= e_{2,1} + e_{3,1} + e_{4,2} - e_{4,3} \,,
& (T_{u}\tc T_s)[\Delta(H_2)] &= 2\,( e_{1,1} - e_{4,4} )\,,
\end{align}
and of the twisted Yangian generators by
\begin{align} \label{TYgl11:L2}
(T_{z}\tc T_s)[\Delta(\dwt{E}{}^+)] &= \alpha \, e_{1,2} + \beta \, e_{1,3} + \beta \, e_{2,4} - \alpha \, e_{3,4} \,,
& (T_{z}\tc T_s)[\Delta(\dwt{H})] &= \eta \, \textstyle{\sum_{i=1}^4} \, e_{i,i}  \,, \el
(T_{z}\tc T_s)[\Delta(\dwt{E}{}^-)] &= \gamma \, e_{2,1} + \delta \, e_{3,1} + \gamma \, e_{4,2} - \gamma \, e_{4,3} \,, 
\end{align}
where
\begin{align}
\alpha &= 2 s (2 s + t) + 4u + t-3 \,, & \beta &= (2u-1)(2u+t-1), & \eta &= \tfrac{1}{16}((4s + t)^2\! + (4u + t)^2\! - 2t^2\! - 8),\el
\gamma &= 2s(2s+t) -4u -t-3 \,, & \delta &= (2u+1)(2u+t+1).
\end{align}
Let $K(u)$ be any $4\times4$ matrix. The intertwining equation for the Lie generators \eqref{TYgl11:Lie} constrains $K(u)$ to the form given in \eqref{KVgl11} up to an unknown function $k$ and an overall scalar factor. Next, the intertwining equation for the twisted Yangian generators \eqref{TYgl11:L2} constrains $t=0$ and has a unique solution $k = \dfrac{2u}{(u+1)^2-s^2}$ which coincides with \eqref{KVgl11} provided $c = s\,$.
\end{proof}


\begin{remark}

The twisted Yangian given in proposition \ref{TY:gl11s} does not lead to a unique reflection matrix for a vector boundary. The boundary intertwining equation in this case defines the reflection matrix up to an overall scalar factor and one unknown function,
\begin{equation}
K(z)=\left(
\begin{array}{cccc}
 1 & 0 & 0 & 0 \\
 0 & 1+(1-c+u)k' & (c-s)k' & 0 \\
 0 & (c+s)k' & 1+k & 0 \\
 0 & 0 & 0 & 1+k+(1-c-u)k' \\
\end{array}
\right)
\end{equation}
where
\begin{equation}
k=\frac{(c^2-s^2)k'-2 u}{1-c+u} \,, \qquad c=t-1/2\,,
\end{equation}
and $t,s\in\bb{C}$ are arbitrary complex numbers. Here $k'=k'(u)$ is an unknown function that needs further to be obtained by solving the reflection equation (giving i.e.\ $k'=0$ or $k'=(2 u)/(s^2+c (1+u)^2-u (1+u)^2)$). In such a way this twisted Yangian is not a reflection algebra for a vector boundary. It does not lead to a reflection matrix automatically satisfying the reflection equation. This is in agreement with the remark \ref{Ch2:remSB}.

The twisted Yangian given in proposition \ref{TY:gl11v} is compatible with the singlet boundary; the corresponding reflection matrix is trivial.

\end{remark}


\smallskip

\subsection{Yangian limit} \label{Sec:5.3}

Lie algebras $\sl(2)$ and $\gl(1|1)$ are very similar, thus the Yangian limit of coideal subalgebras of $\glone$ is obtained in a very similar way as it was done in section \ref{Sec:3.3}. The Yangian $\yone$ can be obtained by taking the rational $q\to1$ limit of singular combinations of the generators of $\glone$,
\begin{align}
\wh{E}^\pm = \alpha\, \underset{q\to1}{\rm lim}\, \frac{\xi^\mp_0 \pm \xi^{\pm}_1}{q-q^{-1}} \,, 
\end{align}
The Lie algebra generators are given by
\begin{align}
E^\pm = \underset{q\to1}{\rm lim}\, \xi^{\pm}_1 = \mp \underset{q\to1}{\rm lim}\, \xi^{\mp}_0 \qquad\text{and}\qquad H = \underset{q\to1}{\rm lim} \,h_1 = - \underset{q\to1}{\rm lim} \,h_0 \,.
\end{align}
Next will show that the quantum affine reflection algebra considered in section \ref{Sec:4.1} in the rational $q\to1$ limit specializes to the Yangian reflection algebra considered in section \ref{Sec:5.1}. We will not attempt to show the specialization of the reflection algebra presented in section \ref{Sec:4.2}, as this requires considerations of the Drinfeld second realization of $\glone$ and goes beyond the scope of the present work.


\begin{prop}
The quantum affine coideal subalgebra $\cB\subset\glone$ defined by the Proposition \ref{B5} in the rational $q\to1$ limit specializes to the twisted Yangian $\Y(\g,\h)$ of type I defined by the Proposition \ref{TY:gl11s}.
\end{prop}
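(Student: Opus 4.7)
The proof plan is a direct adaptation of the scheme used in the analogous $\sltwo \to \ytwo$ specialization carried out in Section \ref{Sec:3.3}. Recall that $\cB$ is generated by the Cartan elements $k_2$ and $k_0 k_1^{-1}$ together with the twisted affine generators
\begin{align}
B^{+}_0 = \xi'^{+}_0 k_0 - q\,c\, \xi^{-}_1 k_0, \qquad B^{-}_0 = \xi^{-}_0 k_0 + q\,c\, \xi'^{+}_1 k_0,
\end{align}
while the target $\Y(\g,\h)$ is generated by $H$, $H_2$ and $\wt{E}^{\pm} = \wh{E}^{\pm} \pm \alpha t\, E^{\pm} \mp \tfrac{\alpha}{2} H E^{\pm}$. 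First I would exploit that the upper-left $2\times2$ block of the extended Cartan matrix for $\gl(1|1)$ is zero, so that $\xi'^{+}_0 k_0 = \xi^{+}_0$ and $k_0, k_1$ commute with $\xi^\pm_1, \xi^\pm_0$ respectively; this removes all spurious $k$-dressings from the upper-left block in $B^\pm_0$ and reduces the calculation to a clean rational expansion.

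Next I would parametrize the boundary parameter as $c = q^{2t}$ so that $c \to 1$ as $q \to 1$ while retaining the free constant $t \in \bb{C}$ that labels the twisted Yangian. Using the standard order-$\alpha$ expansion $k_i = 1 + (q-1) h_i + \mathcal{O}(\alpha^2)$ together with the Yangian limit rules for $\yone$,
\begin{align}
\wh{E}^\pm = \alpha\,\underset{q\to1}{\rm lim}\, \frac{\xi^\mp_0 \pm \xi^{\pm}_1}{q-q^{-1}}, \qquad E^\pm = \underset{q\to1}{\rm lim}\, \xi^{\pm}_1 = \mp \underset{q\to1}{\rm lim}\, \xi^{\mp}_0, \qquad H = \underset{q\to1}{\rm lim}\, h_1 = -\underset{q\to1}{\rm lim}\, h_0,
\end{align}
I would then evaluate
\begin{align}
\underset{q\to1}{\rm lim}\, \frac{\alpha\, B^{+}_0}{q-q^{-1}} \qquad\text{and}\qquad \underset{q\to1}{\rm lim}\, \frac{\alpha\, B^{-}_0}{q-q^{-1}},
\end{align}
possibly rescaled by a power of $q$ to match overall normalizations. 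The order-$\alpha^0$ terms in the expansion cancel by the level-zero relation $E^\pm = \mp \lim \xi^\mp_0$, and the order-$\alpha$ remainder should reproduce $\mp\wt{E}^{\mp}$ up to sign. The Cartan sector is handled by the elementary limits
\begin{align}
\underset{q\to1}{\rm lim}\, \frac{1 - k_0 k_1^{-1}}{q-q^{-1}} = H, \qquad \underset{q\to1}{\rm lim}\, \frac{k_2 - k_2^{-1}}{q-q^{-1}} = H_2,
\end{align}
so that the preserved Cartan generators of $\cB$ specialize to those of $\Y(\g,\h)$.

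The main obstacle will be tracking signs correctly in the fermionic setting. Unlike the $\sl(2)$ case where the expansion of $\xi^{\pm}_1 k_0$ and $k_0 \xi^{\pm}_1$ naturally yields the symmetric combination $E^{\pm}H + H E^{\pm}$, in $\gl(1|1)$ the Chevalley generators $\xi^\pm_1$ commute with $k_0$ (since $\wh{a}_{01}=0$), and $H$ has trivial commutator with $E^\pm$ in the level-zero Lie algebra. Consequently only the single asymmetric term $\mp\tfrac{\alpha}{2} H E^\pm$ should survive in $\wt{E}^\pm$, as required by Proposition \ref{TY:gl11s}; one must verify that the expansion produces exactly this term and no spurious $E^\pm H$ contribution. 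The fine-tuning of the parametrization $c = q^{2t}$ is precisely what arranges this to work out.
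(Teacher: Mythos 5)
Your plan is essentially the paper's own proof: observe $\xi'^+_0 k_0=\xi^+_0$, substitute $c$ by a power of $q$ carrying the parameter $t$, expand $k_i=1+(q-1)h_i+\mathcal{O}(\alpha^2)$, apply the limit formulas for $\wh{E}^\pm$, $E^\pm$, $H$, and treat the Cartan sector by elementary limits. The only deviation is your choice $c=q^{2t}$ where the paper uses $c=q^{2(t-1/2)}$; this merely shifts the arbitrary parameter $t$ by $1/2$ (matching the $t=c+1/2$ identification in the singlet $K$-matrix) and does not affect the conclusion.
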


\begin{proof}
Recall that $\cB$ is generated by the twisted affine generators \eqref{tw5} 
\begin{align}
B^{+}_0 = \xi'^+_0 k_0 - c\,q\, \xi^-_1 k_0\,, \qquad 
B^{-}_0 = \xi^-_0 k_0 + c\,q\, \xi'^+_1 k_0 \,.
\end{align}
and Cartan element $k_0 k_1^{-1}$. Firstly note that $\xi'^+_0 k_0 = \xi^{+}_0$. Then by substituting $c\to q^{2(t-1/2)}$ and using the expansion \eqref{q1exp} we find
\begin{align}
\underset{q\to1}{\rm lim}\, \frac{\alpha\, B_0^{+}}{q-q^{-1}} &= \underset{q\to1}{\rm lim} \left[ \alpha\,\frac{\xi^+_0 - \xi^-_1}{q-q^{-1}} - \alpha \, \frac{q-1}{q-q^{-1}}\big( 2\,t\,\xi_1^{-} + \xi_1^{-} h_0 \big) + \mathcal{O}(\alpha^2) \right] \hspace{2.27cm} \el
& = \wh{E}^{-} - \alpha\,t\,E^- + \frac{\alpha}{2} H E^{-} = \wt{E}^- \,,
\end{align}
and
\begin{align}
\underset{q\to1}{\rm lim}\, \frac{\alpha\, B_0^{-}}{q-q^{-1}} &= \underset{q\to1}{\rm lim} \left[ \alpha\,\frac{\xi^-_0 + \xi^+_1}{q-q^{-1}} + \alpha \, \frac{q-1}{q-q^{-1}}\big(2\,t\,\xi_1^{+} + \xi_0^{-} h_0 + \xi^+_1 (h_0-h_1) \big) + \mathcal{O}(\alpha^2) \right] \el
& = \wh{E}^{+} + \alpha\,t\,E^{+} - \frac{\alpha}{2} H E^{+} = \wt{E}^+ \,.
\end{align}
Finally, $H_2 = \underset{q\to1}{\rm lim}\, ({k^2_2-1})/({q-q^{-1}}) $, and the Cartan generator $H$ is obtained in an equivalent way as in \eqref{YL:sl2s_cartan}. These coincide with \eqref{tw7} as required.
\end{proof}


\smallskip


\appendix

\section{The Yangian limit of \texorpdfstring{$\sltwo$}{sltwo}} \label{app:A}

Here we give a heuristic derivation of the Yangian limit of $\sltwo$. We will recover the level-one generators of $\ytwo$ by calculating the rational $q\to1$ limit of certain singular combinations of the generators of $\sltwo$ that otherwise do not have a propierly defined $q\to1$ limit. 

Set $\g=\sl(2)$. Recall that $\hat\sl(2)\cong\cL(\g)$. Let the Chevalley basis of $\cU(\g[u])$ be given by
\begin{align}
E^+ &= 1\tc e \,,\hspace{-1cm} & E^- &= 1\tc f \,,\hspace{-1cm} & H &= 1\tc h \,, \el
\hat{E}^+ &= u\tc f \,,\hspace{-1cm} & \hat{E}^- &= -u\tc e \,,\hspace{-1cm} & \hat{H} &= u\tc h \,.
\end{align}
Choose $e_q$, $f_q$ and $k_q=q^h$ be the basis of $\cU_q(\g)$. Then the basis of $\cU_q(\cL(g))$ is given by
\begin{align}
\hspace{0.7cm}\xi^+_1 &= 1\tc e_q \,,& \xi^-_1 &= 1\tc f_q \,,& h_1 &= 1\tc h\,,& k_1 &= q^{h_1} \,,\hspace{1.2cm} \el
\hspace{0.7cm}\xi^+_0 &= z\tc f \,,& \xi^-_0 &= z^{-1}\tc e \,,& h_0 &= -h_1 \,,& k_0 &= q^{h_0} \,.\hspace{1.2cm}
\end{align}

Set $q={\rm e}^{\alpha \hbar}$ and $z={\rm e}^{-2\hbar u}$. Then the $q\to1$ limit is obtained by setting $\hbar\to0$. The expansion in series of the Cartan generators $k_i$ at the point $\hbar=0$ is given by
\begin{equation}
k_i \to 1 + (q-1) h_i + \mathcal{O}(\alpha^2) \,, 
\end{equation}
where $\mathcal{O}(\alpha^2)$ denotes higher order in $\alpha$ terms ($\alpha\sim\hbar$). Note  that the (full) $q\to1$ limit gives
\begin{equation} 
E^+ = \underset{q\to1}{\rm lim}\,\xi^+_1 = \underset{q\to1}{\rm lim}\,\xi^-_0 \,, \qquad\quad 
E^- = \underset{q\to1}{\rm lim}\,\xi^-_1 = \underset{q\to1}{\rm lim}\,\xi^+_0 \,.
\end{equation}
Choose
\begin{equation}
\xi^+_{\alpha} = \frac{\alpha}{q-q^{-1}}(\xi^-_0 - \xi^+_1) \,, \qquad\quad \xi^-_{\alpha} = -\frac{\alpha}{q-q^{-1}}(\xi^+_0 - \xi^-_1) \,.
\end{equation}
Then
\begin{align}
\underset{q\to1}{\rm lim}\,\xi^+_{\alpha} &= \underset{q\to1}{\rm lim}\,\alpha\,\frac{z^{-1}-1}{q-q^{-1}}\tc e_q = u\tc e = \hat{E}^+_\alpha \,, \el
\underset{q\to1}{\rm lim}\,\xi^-_{\alpha} &= \underset{q\to1}{\rm lim}\,\alpha\,\frac{1-z}{q-q^{-1}}\tc f_q = -u \tc f = \hat{E}^-_\alpha \,,
\end{align}
and
\begin{align}
\underset{q\to1}{\rm lim}\,\Delta(\xi^+_{\alpha}) &= 
\underset{q\to1}{\rm lim}\,\frac{\alpha}{q-q^{-1}} (\xi^-_0 \tc k^{-1}_0 + 1\tc\xi^-_0 - \xi^+_1 \tc 1 - k_1\tc\xi^+_1) \el
 &= \underset{q\to1}{\rm lim}\, \left[ \alpha\,\frac{\xi^-_0 - \xi^+_1}{q-q^{-1}} \tc 1 + 1\tc \alpha\,\frac{\xi^-_0 - \xi^+_1}{q-q^{-1}} - \alpha\,\frac{q-1}{q-q^{-1}}( \xi^-_0 \tc h_0 + h_1\tc\xi^+_1) +\mathcal{O}(\alpha^2) \right] \el
 &= \hat{E}^{+}_\alpha\tc1 + 1\tc\hat{E}^{+}_\alpha + \frac{\alpha}{2} (E^+\tc H - H\tc E^+) = \Delta(\hat{E}^{+}_\alpha) \,, 
\end{align}
\begin{align}
\underset{q\to1}{\rm lim}\,\Delta(\xi^-_{\alpha}) &= \underset{q\to1}{\rm lim}\,\frac{\alpha}{q-q^{-1}} (- \xi^+_0 \tc 1 - k_0\tc\xi^+_0 + \xi^-_1 \tc k^{-1}_1 + 1\tc\xi^-_1) \el 
 &= \underset{q\to1}{\rm lim}\, \left[ \alpha\,\frac{\xi^-_1-\xi^+_0}{q-q^{-1}} \tc 1 + 1\tc \alpha\,\frac{\xi^-_1-\xi^+_0}{q-q^{-1}} - \alpha\,\frac{q-1}{q-q^{-1}}( \xi^-_1 \tc h_1 + h_0\tc\xi^-_0) +\mathcal{O}(\alpha^2) \right] \el
 &= \hat{E}^{-}_\alpha\tc1 + 1\tc\hat{E}^{-}_\alpha - \frac{\alpha}{2} (E^-\tc H - H\tc E^+) = \Delta(\hat{E}^{-}_\alpha) \,,
\end{align}
which coincide with \eqref{cop2}.

\smallskip


\bibliographystyle{amsplain}

\end{document}